\renewcommand{\title}[1]{

\begin{center} \Large \bf #1 \end{center}
}
\renewcommand{\author}[2]{
 \begin{center} #1  \vspace{3mm} \\
  #2 \\
 \end{center}
\addvspace{\baselineskip}
}
\newtheorem{theorem}{Theorem}[section]
\newtheorem{proposition}[theorem]{Proposition}
\newtheorem{lemma}[theorem]{Lemma}
\newtheorem{example}[theorem]{Example}
\theoremstyle{definition}
\newtheorem{definition}[theorem]{Definition}
\theoremstyle{remark}
\newtheorem*{rem}{Remark}
\begin{document}

\baselineskip 5mm
\title{Quantization of Lie-Poisson algebra and
Lie algebra solutions of mass-deformed type IIB matrix model}
\author{${}^1$ Jumpei Gohara and~ ${}^1$ Akifumi Sako}{
${}^1$  Tokyo University of Science,\\ 1-3 Kagurazaka, Shinjuku-ku, Tokyo, 162-8601, Japan
}
\noindent
\vspace{1cm}

\abstract{
A quantization of Lie-Poisson algebras is studied.
Classical solutions of the mass-deformed Ishibashi-Kawai-Kitazawa-Tsuchiya (IKKT) matrix model can be constructed from semisimple Lie algebras whose dimension matches the number of matrices in the model.
We consider the geometry described by the classical solutions of the Lie algebras in the limit where the mass vanishes and the matrix size tends to infinity.
Lie-Poisson varieties are regarded as such geometric objects.
We provide a quantization called ``weak matrix regularization''
of Lie-Poisson algebras (linear Poisson algebras) on the algebraic varieties 
defined by their Casimir polynomials. 
This quantization is a generalization of matrix regularization, and neither faithfulness of the map nor the correspondence between integration and trace in the commutative limit is required.
Casimir polynomials correspond with Casimir operators of the Lie algebra by the quantization.
This quantization is a generalization of the method for constructing the fuzzy sphere.
In order to define the weak matrix regularization of the quotient space by the ideal generated by 
the Casimir polynomials, we take a fixed 
reduced Gr\"obner basis of the ideal. 
The Gr\"obner basis determines remainders of polynomials.
The operation of replacing these remainders with representation matrices of a Lie algebra roughly corresponds to a weak matrix regularization.
As concrete examples, we construct weak matrix regularization for 
$\mathfrak{su}(2)$ and $\mathfrak{su}(3)$. 
In the case of $\mathfrak{su}(3)$, we not only construct 
weak matrix regularization for the
quadratic Casimir polynomial, but also
construct weak matrix regularization for the cubic Casimir polynomial.
}

\section{Introduction}\label{sect1}

In M-theory and string theory, 
it has been proposed for a long time
that matrix models give their constructive formulation \cite{BFSS,IKKT3}.
Naturally, their physical quantities of classical solutions, etc. are given as matrices.
 (See for example \cite{IKKT2,fuzzya} and references therein.)
If we assume that the universe we live in can be obtained as a set of matrices, 
then at least the universe we observe can be approximated by a smooth manifold, 
so we need a correspondence between
 matrices and smooth spacetime.
In this paper, we will discuss quantization as a method 
for obtaining such a correspondence, 
and we will give a quantization of 
Poisson varieties with Lie-Poisson structures.\\

The matrix model given as a constructive formulation of type IIB superstring theory 
is called the type IIB matrix model or the IKKT (Ishibashi, Kawai, Kitazawa, and Tsuchiya) matrix model \cite{IKKT3}.
In this paper, we will refer to this model as the IKKT matrix model.
There are many studies that consider matrix solutions and eigenvalue distributions 
in the IKKT matrix model as spacetime.
One representative approach is the use of numerical methods using computers to simulate eigenvalue distributions, 
with many studies discussing, for example, the creation of a $3+1$-dimensional spacetime \cite{Kim:2011cr}.
Many results of numerical approaches have been summarized in \cite{Anagnostopoulos:2022dak}, 
and more can be found by consulting the references therein. 
There are also many studies that view the effective theory of the IKKT matrix model as a field theory on a noncommutative space.
The idea has its roots in earlier work, dating back to \cite{Gonzalez-Arroyo:1982hyq}.
The observation of the connection between IKKT and noncommutative geometry was made in \cite{IKKT2,Aoki:1999vr}.
The correspondence between finite-dimensional IKKT matrix models and gauge theories on noncommutative spaces was later revealed \cite{Ambjorn:1999ts,Ambjorn:2000nb,Ambjorn:2000cs}.
There are also several survey papers on this topic \cite{fuzzya,Ydri:2017ncg}.
As evidenced by the fact that the IKKT matrix model can be formulated 
as a zero-dimensional reduction of gauge theory, the model does not include a mass term, naively.
However, it is known that the only classical solution for the finite-dimensional IKKT matrix model 
without a mass term is given as a set of simultaneously diagonalizable matrices.
 (See the appendix of \cite{Steinacker:2017vqw}.)
The mass term in the IKKT matrix model was discussed 
in the new regularization of the model respecting the Lorentz symmetry in \cite{Asano:2024def}.
There is also research that suggests that the mass term itself exists effectively \cite{Laliberte:2024iof}.
There have also been reports of attempts to derive gravity using the IKKT matrix model, 
which is a model that undergoes mass deformations that preserve supersymmetry \cite{Komatsu:2024bop,Komatsu:2024ydh}. 
These deformations, however, are different from those that add a simple mass term.
This paper discusses classical solutions of the IKKT matrix model with a mass term 
and the space they are expected to represent.
\\

In those matrix models, a fuzzy space is an important concept that maps classical geometry to noncommutative matrix algebra. 
The fuzzy sphere is a typical example \cite{fuzzy1}. 
It provides a mapping from a set of polynomials on $S^2$ 
to the space of endomorphisms on an $N$-dimensional vector space,
which is a matrix algebra generated by an irreducible 
representation of $\mathfrak{su}(2)$. 
This mapping provides an approximate correspondence between the
Poisson brackets and commutators. 
Other fuzzy spaces as known fuzzy Riemann surfaces are constructed, due to the motivation of the membrane theory
in \cite{hoppe1,hoppe2,fuzzyb,arnlind2,arnlind3,Schneiderbauer}. 
In a similar study with Toeplitz operators, \cite{klimek} proved  that a Poisson algebra on any Riemann surface is produced in the limit $N\to \infty$, 
and \cite{berezin1} extended it to compact K\"{a}hler manifolds. 
These methods of mapping classical geometry to matrix algebra are called matrix regularization. 
Recently, fuzzy spaces have been studied using quasi-coherent states, 
and it has been shown that it is possible to extract the classical structure without the $N\to \infty$ limit \cite{Schneiderbauer,steinacker2}. 
This direction of research is developing in areas that are not limited to spaces with positive-valued measures.
Chany, Lu and Stern extended the fuzzy sphere studied in Euclidean to Minkowski and showed that it is a solution to the Lorentzian IKKT matrix model \cite{Stern1}.
Ho and Li showed that fuzzy $S^2$ and $S^4$ are candidates for the quantum geometry on the corresponding spheres in AdS/CFT correspondence \cite{ho}. 
\cite{Buric:2017yes} and subsequent papers by the same group report on fuzzy spaces in de Sitter and anti-de Sitter spaces.
\\

Fuzzy spaces have not only been studied in the context of the IKKT matrix model or the BFSS matrix model, but also as a direction in noncommutative geometry.
As already mentioned above, the history of fuzzy space begins with the proposal of the fuzzy sphere {\rm \cite{matrix1,fuzzy1}}. 
We can find more details in the {\rm \cite{matrix1,fuzzy1,fuzzyb,fuzzyc}}, as well as in the references within them.
Fuzzy torus is also constructed in a similar way.
(See for example \cite{Bal:2004ai}.) 
It is not possible to mention all examples, 
but it is possible to find summarized descriptions in, for example, Steinacker's textbook \cite{Steinacker_text}.
In this paper we construct a method to obtain 
fuzzy spaces of Lie-Poisson algebras.
As concrete examples, we also construct examples for 
$\mathfrak{su}(2)$ and $\mathfrak{su}(3)$. 
Closely related to the examples are, for example,
fuzzy $CP^2$  \cite{Nair:1998bp,Grosse:1999ci,ABIY,Balachandran:2001dd,Azuma_Bal_Nagao,Grosse:2004wm}.
The Lie-Poisson algebras treated in this paper are not limited to those corresponding to compact groups, but it is also possible to consider Lie algebras for noncompact groups.
The fuzzy spaces corresponding to 
noncompact groups are known, 
for example \cite{Hasebe:2012mz,Jurman:2013ota,Jurman:2017kkp,Sperling:2018xrm,Sperling:2019xar}.
These various studies constructed fuzzy spaces somewhat ad hoc for individual manifolds.
In contrast, the study by \cite{Rieffel:2021ykh} constructs general 
framework and mathematically precise statements by using 
$C^*$-metric, Gromov-Hausdorff-type distance.
The purpose of constructing a  fuzzy space of a coadjoint orbit 
of \cite{Rieffel:2021ykh} 
is a slightly similar to the purpose of this paper, which is to quantize
the space containing coadjoint orbit as a subspace.
The method used in this paper is simpler and easier to compute.
In this paper, we examine in detail in Section \ref{rev_sect6} the relationship between the quantization of coadjoint orbits 
and the quantization developed in this paper.
Various approaches to the correspondence between manifolds or algebraic varieties 
and the elements of matrix algebras or its subalgebras 
are expected to be explored in the future, and this paper contributes to that effort.
\\

Taking the classical (commutative) limit of a noncommutative manifold 
is often fraught with difficulty.
For example, as a well-known phenomenon, 
matrix regularization generates the same matrix algebra 
whether a two-dimensional torus is transformed into a fuzzy torus 
or a two-dimensional sphere into a fuzzy sphere.
(See \cite{deWitHoppeNicolai,arnlind,matrix1,fuzzy1, MadoreText}.)
Put another way, when reading a classical geometry (Poisson algebra)
from a matrix algebra, 
the Poisson algebra obtained depends on what classical (commutative) limit is taken
\cite{berezin1,bordemannA,Chu:2001xi}.
Sometimes, such a problem of determining the classical (commutative) limit 
is called an inverse problem.
Various approaches have been taken to the inverse problem 
of how to extract geometric properties from matrix algebras
\cite{shimada,berenstein,Schneiderbauer,ishiki1,ishiki2,asakawa}.
Therefore, a framework including any Poisson algebra and 
its quantizations is important in considering such issues to be investigated in a unified manner.
In \cite{Sako:2022pid},
a category of Poisson algebras and their quantized spaces is proposed, 
and a category-theoretic formulation of their classical limit is given.
As an example, a framework for obtaining a Poisson algebra as the classical limit of any semisimple Lie algebra was discussed in it.
\\

One purpose of this paper is to examine in detail the contents of Section 6 of that paper \cite{Sako:2022pid}.
There, the quantization of Lie-Poisson algebras and their inverse problems, as well as derivations based on the principle of least action, were discussed.
The definition of Lie-Poisson algebras in this paper is given as follows.
\begin{definition}
Let $x=(x_1, x_2, \cdots , x_d)$ be commutative variables.
Suppose that $(\mathbb{C}[x]  , \cdot , \{ ~ , ~ \})$ is a Poisson algebra. 
If the Poisson bracket acts as linearly, i.e., there exist structure constants $f_{ij}^k$ 
such that $\{ x_i  , ~ x_j \} = f_{ij}^k x_k$,
we call $(\mathbb{C}[x]  , \cdot , \{ ~ , ~ \})$ a Lie-Poisson algebra.
\end{definition}
The Lie-Poisson structure is introduced in \cite{S_Lie,Weinstein1983}.
The study of the quantization of Lie-Poisson algebras is given by Rieffel as a deformation quantization
\cite{rieffel_90}.
As a related study,
deformation quantization of polynomial Poisson algebras 
via universal enveloping algebra (generalizing that of Lie-Poisson structures) is discussed in
\cite{Penkava_Vanhaecke}.
In \cite{Sako:2022pid}, one of the authors discussed matrix regularization of 
Lie-Poisson algebras as an example of how the inverse problem of quantization 
can be treated in the category theoretical limit.
However, detailed discussions were omitted, 
and the case of taking a quotient by a nontrivial ideal was not sufficiently addressed.
In this paper, the details are given explicitly 
for a class of Lie algebras including all semisimple Lie algebras, 
and new concrete examples are constructed.
\\
\bigskip

Here, a synopsis of this paper is presented.\\

At first, the equivalence of the mass-deformed IKKT matrix model 
and the matrix models whose classical solutions are given as semisimple Lie algebras 
is shown in Section \ref{sect2}. (However, only the Bosonic part is mentioned.)
From the matrix model, it is shown that
the mass-deformed IKKT matrix model 
has a classical solution constructed from a
representation of any semisimple Lie algebra 
whose dimension matches the number of matrices in the matrix model.
This result was already derived by Arnlind and Hoppe in \cite{DiscreteMiniSurface}. 
These facts make it an important topic to investigate the spaces 
that emerge as commutative limits of the solutions. 
When taking the commutative limit, we simultaneously take the limit 
in which the dimension of the representation of the Lie algebra tends to infinity.
In this paper, we consider Lie-Poisson algebras as such spaces.
The algebra generated by a representation of a Lie algebra 
is constructed by quantization 
of a Lie-Poisson algebra. 
The space in which the Lie-Poisson algebra is defined is 
the variety described by $k$th-degree Casimir polynomials.
The space includes a coadjoint orbit of the Lie algebra.
The matrix regularization used in this paper is slightly different from the matrix regularization 
that is commonly used, and the conditions are weakened.
Therefore, to distinguish between the two, the term 
``weak matrix regularization'' is also used.
The difference is that weak matrix regularization is defined 
algebraically without using an operator norm, 
and only the condition corresponding to quantization is retained.
We do not require it to be approximately an algebra homomorphism, but in fact this condition is satisfied.
Details of the difference between matrix regularization and weak matrix regularization are discussed in Section \ref{3_2}.
An equivalence class of a polynomial defined on the variety corresponds to a matrix
by the weak matrix regularization.
\\

Here, we also mention an overview of how to construct the actual 
weak matrix regularization for 
a Lie-Poisson algebra ${\mathbb C}[x]/ I$ given in this paper, 
where $I$ is an ideal generated by Casimir polynomials.
Consider $[f(x)] \in {\mathbb C}[x]/ I$. 
A reduced Gr\"obner basis $G$ of $I$ is introduced. Then
for any $f(x) \in  {\mathbb C}[x]$ $f(x)= r (x) + h(x)$ is uniquely determined, 
where $h(x) \in I$ and $r(x) \notin I$. 
Roughly speaking, the weak matrix regularization is achieved 
by replacing all the variables $x_1, \cdots , x_d$ in this polynomial $r(x)$ 
with the corresponding elements of a matrix representation of the Lie algebra.
This process is a generalization of the method of constructing the fuzzy sphere.
In fact, as an example, we construct a fuzzy sphere by matrix regularization of $\mathfrak{su}(2)$
following the method proposed in this paper.
Furthermore, we consider an example of $\mathfrak{su}(3)$ 
and provide a matrix regularization of varieties 
determined not only by quadratic but also by cubic Casimir polynomials.
In weak matrix regularization, the existence of a nontrivial kernel of the map is generally allowed; that is, faithfulness of the map is not required. 
We describe this point in a slightly more concrete manner below.
Even in ordinary matrix regularization, it is common to have a nontrivial kernel. For example, higher-degree terms in polynomials are often contained in the kernel.
The matrix regularization considered in this paper is obtained as a sequence determined by the choice of representations of a Lie algebra; however, when a sequence of irreducible representations is chosen, an additional kernel depending on the representation is introduced, resulting in a larger kernel.
This implies that the classical space is not adequately reflected.
However, in Section \ref{rev_sect6} we discuss a method for constructing a more faithful quantization of the classical space 
by formulating the weak matrix regularization as a sequence of maps to reducible representations.
\\
\bigskip

The organization of this paper is as follows.
In Section \ref{sect2}, it is shown that 
any semisimple Lie algebra is a solution of the mass-deformed IKKT matrix model.
In Section \ref{sect3}, we summarize part of \cite{Sako:2022pid} 
and some mathematical facts, the minimum necessary to be used 
in constructing the weak matrix regularization of Lie-Poisson algebras in this paper.
Furthermore, the matrix regularization of varieties
corresponding to the case where the ideal is trivial, i.e., $\{ 0 \}$, is discussed.
In Section \ref{sect4}, the weak matrix regularization of varieties
that are determined by $k$th-degree Casimir polynomials is constructed.
In Section \ref{sect5}, as examples, we construct weak matrix regularizations 
for the cases where the Lie algebra is $\mathfrak{su}(2)$ and $\mathfrak{su}(3)$.
In Section {\ref{rev_sect6}, taking into account the well-known relationship between coadjoint orbits and representations, 
we extend the quantization obtained in Section \ref{sect4} with irreducible representations 
to the case of reducible representations.
Section \ref{sect6} provides a summary of this paper.
\\

This paper basically uses Einstein summation convention, 
but in cases where it is difficult to understand, 
the summation symbol $\sum$ will be used as appropriate.


\section{Lie algebras as solutions of IKKT matrix model}\label{sect2}
In this section, we discuss the relationship between mass-deformed IKKT matrix models 
and semisimple Lie algebras in order to clarify one of the motivations for this paper.
Let us consider the Bosonic part of the IKKT matrix model with a mass regularization term.
Using $N\times N$ Hermitian matrices $X^N_{\mu} ~(\mu = 1, \cdots , d, ~ N \in {\mathbb N} )$
and a mass $\hbar(N)$, we consider the action
\begin{align}
S_{IKKT}(\hbar^2(N)) [X]&= \mbox{\rm{tr}} \left(
-\frac{1}{4} [ X^N_{\mu} , X^N_{\nu}]^2 + \hbar^2(N) \frac{1}{2} X^{N \mu}X^N_{\mu}
\right) . \label{action_ikkt}
\end{align}
Here we take contraction 
as
$$[ X^N_{\mu} , X^N_{\nu}]^2 = \eta^{\mu \rho} \eta^{\nu \tau}
[ X^N_{\mu} , X^N_{\nu}][ X^N_{\rho} , X^N_{\tau}]
=\sum_{\mu,\nu,\rho,\tau}\eta^{\mu \rho} \eta^{\nu \tau}
[ X^N_{\mu} , X^N_{\nu}][ X^N_{\rho} , X^N_{\tau}]
 .$$
$\eta$ is a diagonal matrix and usually a Euclidean or a Minkowski metric.
In this paper, however, $\eta$ is a more general case 
that also includes $(n, d-n)$-type Minkowski metric;
$\eta= {\rm diag} (1,\cdots , 1 , -1, \cdots , -1)$.
The second term in (\ref{action_ikkt}) represents 
the deformation due to the mass term.
It should really be called the ``mass-deformed IKKT matrix model'', 
but in this paper we sometimes simply call it the IKKT matrix model.
(As already mentioned in Section \ref{sect1}, it is known that there are various classical solutions when the model is deformed with a mass term.
See for example \cite{Kim:2011ts,Kim:2012mw,Sperling:2019xar}.)
\\
\bigskip

\begin{rem}
A technical note should be made here.
Although we stated that $X^N_{\mu}$ is a Hermitian matrix
care must be taken when the matrix size $N$ is finite.
Because of the problem of the lack of finite-dimensional Hermitian representations 
for noncompact groups, in the following, 
$X_\mu$ corresponding to the negative sign of the metric is 
represented by an anti-Hermitian matrix $i X_\mu$ as in the Wick rotation.
In other words, since we consider a number of anti-Hermitian matrices corresponding 
to the negative eigenvalues, the model essentially corresponds to the Euclidean
IKKT matrix model in finite dimensions. 
To make this explicit, we refer to it as the ``Euclidean IKKT matrix model''.
By doing Wick rotation, the following argument can be applied to any Killing metric  
with negative eigenvalues. 
Therefore, for simplicity, we will describe the case of 
positive-valued metrics without the imaginary unit.\\
(However, this problem is not simple. There have been recent discussions on introducing gauge fixing without performing a Wick rotation, and there exist papers showing that the results differ between the Lorentzian and Euclidean cases \cite{Chou:2025moy}. Since delving into such delicate issues is not the purpose of this paper, we shall not discuss them further.)
\end{rem}
\bigskip

In \cite{Sako:2022pid}, classical limits of quantizations 
were discussed in the category 
``Quantum world'' which contains the whole of classical spaces (Poisson algebras) and their quantized spaces. 
In it, a matrix model which treats the Lie algebras 
as quantum spaces was introduced.
The matrix model was given as
\begin{align}
S_N(\hbar^2(N)) [X] &= \mbox{\rm{tr}} \left(
-\frac{1}{4} g^{\mu \rho} g^{\nu \tau} [ X^N_{\mu} , X^N_{\nu}][ X^N_{\rho} , X^N_{\tau}]
 + \hbar^2(N) \frac{1}{2} g^{\mu \nu} X^N_{\mu}X^N_{\nu}
\right) . \label{LieIKKT}
\end{align}
Here $g^{\mu \nu} \in {\mathbb R}$ represents the component of a real
symmetric nondegenerate matrix. In the following, $g$ is referred to as a metric for simplicity.
(There have been similar studies in the past that have the same equation of motion \cite{Ishii:2008tm,Kim:2003rza,Yang:2009pm}, but 
the actions themselves are different.) 
It is possible to show that the model (\ref{LieIKKT}) is actually equivalent to the IKKT matrix model (\ref{action_ikkt}).
We shall first look at the derivation of this fact.\\

When considering ordinary Riemannian  geometry, there exists a local coordinate transformation that can transform any nondegenerate symmetric positive-definite matrix 
into the Euclidean metric as
$g_{\mu \nu} dx^\mu dx^\nu =  \eta_{\mu \nu} dx^{\prime \mu} dx^{\prime \nu}$.
In the same way now, we diagonalize the metric $g$ and perform the variable transformation as follows.
\begin{align}
\sum_{\mu, \nu} {O^{-1}}_{\tau \mu} g^{\mu \nu} {O_{\nu \rho}} 
=  \lambda_\tau \eta^{\tau \rho} , \qquad  
\sum_{\nu} \sqrt{\lambda_{\tau}} {O^{-1 }}_{\tau \nu} X^N_\nu=
\sum_{\nu} \sqrt{\lambda_{\tau}} {O}_{\nu \tau} X^N_\nu
=: X^{N \prime}_{\tau} .
\label{var_changing}
\end{align}
Einstein summation convention is not used here.
In other words, no contraction is taken with respect to the subscript $\tau$.
Here $O$ is an orthogonal matrix to diagonalize $(g^{\mu \nu} )$, 
and $\lambda_\tau \eta^{\tau \tau}~(\tau = 1, \cdots , d)$ are eigenvalues of the matrix $(g^{\mu \nu} )$.
Note that $X^{N \prime}_{\tau}$ remain a Hermitian matrix.
Using this change of variables,
\begin{align}
g^{\mu \nu} X^N_{\mu}X^N_{\nu} =
\eta^{\mu \nu} X^{N \prime }_{\mu}X^{N \prime}_{\nu}
\label{mass_equiv}
\end{align}
and 
\begin{align}
g^{\mu \rho} g^{\nu \tau} [ X^N_{\mu} , X^N_{\nu}][ X^N_{\rho} , X^N_{\tau}]
= \eta^{\mu \rho} \eta^{\nu \tau}
[ X^{N \prime}_{\mu} , X^{N\prime}_{\nu}][ X^{N\prime}_{\rho} , X^{N\prime}_{\tau}] 
\end{align}
are easily obtained. Then we get 
\begin{align}
S_{IKKT}(\hbar^2(N)) [X^{\prime}] = S_N(\hbar^2(N)) [X] .
\end{align}
\begin{rem}
The path integral measure of the Bosonic part of the IKKT matrix model is 
given by
\begin{align*}
{\mathcal D} X :=
\prod_{\mu} \left(\prod_{i=1}^N d (X^N_{\mu})_{ii} \right)
\left( \prod_{i>j} d (X^{N Re}_\mu )_{ij} d (X^{N Im}_\mu )_{ij} \right).
\end{align*}
Here we calculate the Jacobian that appears in the above variable changing.
From (\ref{var_changing}),
\begin{align*}
d (X^N_{\mu})_{ij} = \sum_{\tau} \frac{1}{\sqrt{\lambda_{\tau}}} 
{O_{\mu \tau}} d (X^{N \prime}_{\tau})_{ij}.
\end{align*}
Note that
$$
\det \left( \frac{1}{\sqrt{\lambda_\nu}} {O_{\mu \nu}} \right)
=\prod_{\tau =1}^d \frac{1}{\sqrt{\lambda_\tau}}
\det \left( {O_{\mu \nu}} \right) = \sqrt{ \det (g_{\mu \nu} )},
$$
where $(g_{\mu \nu} )$ is the inverse matrix of $(g^{\mu \nu} )$.
Then we obtain 
\begin{align}
{\mathcal D} X = ( \det (g_{\mu \nu} ))^{\frac{N^2}{2}} {\mathcal D} X' .
\end{align}
So, if we define the Bosonic part partition functions 
$$Z_{IKKT} := \int {\mathcal D} X e^{-S_{IKKT}}
, \qquad 
Z_{Lie}:= \int {\mathcal D} X e^{-S_{N}},
$$
the relation between them is the following;
\begin{proposition}
The relation between the two Bosonic part partition functions given 
by the actions (\ref{action_ikkt}) and (\ref{LieIKKT}) is as follows.
\begin{align}
Z_{IKKT} = ( \det (g_{\mu \nu} ))^{\frac{N^2}{2}} Z_{Lie} .
\end{align}
\end{proposition}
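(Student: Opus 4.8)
The plan is to read the identity off directly from the two facts already in hand: the pointwise equality of the two actions under the linear change of variables $(\ref{var_changing})$, namely $S_{IKKT}(\hbar^2(N))[X'] = S_N(\hbar^2(N))[X]$, and the Jacobian computed in the Remark just above, ${\mathcal D}X = (\det(g_{\mu\nu}))^{N^2/2}{\mathcal D}X'$. So the proof is essentially a one-line change of variables in the defining path integral, and all the substance is in those two preliminary computations.

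First I would record that $(\ref{var_changing})$ defines an invertible real-linear transformation on the space of $d$-tuples of $N\times N$ Hermitian matrices: it acts only on the flavour index $\mu$ through the matrix with entries $\sqrt{\lambda_\tau}\,O_{\nu\tau}$, which is invertible since $O$ is orthogonal and every $\lambda_\tau\neq 0$, and it leaves the matrix indices untouched, so each $X'_\tau$ is again Hermitian, as already noted in the text. In particular the map is a bijection of the integration domain onto itself, so it is legitimate to use it as a substitution in the (formal) matrix integral; the familiar convergence subtleties of the Bosonic IKKT integral play no role here, since the two integrands are related by this single linear reparametrisation and hence have the same convergence behaviour.

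Then I would substitute $(\ref{var_changing})$ into $Z_{IKKT} = \int {\mathcal D}X\, e^{-S_{IKKT}}$. By the action identity the exponent becomes $-S_N(\hbar^2(N))$ evaluated on the transformed variables, and by the Remark the measure picks up the constant factor $(\det(g_{\mu\nu}))^{N^2/2}$, coming from $\det\!\big(\tfrac{1}{\sqrt{\lambda_\nu}}O_{\mu\nu}\big) = \sqrt{\det(g_{\mu\nu})}$ applied once for each of the $N^2$ independent real entries of an $N\times N$ Hermitian matrix (the $N$ diagonal entries together with the $N(N-1)$ real and imaginary parts of the off-diagonal entries). Pulling the constant out of the integral and recognising what is left as $Z_{Lie}$ yields the claimed relation.

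There is no genuine obstacle here: everything of content has already been done in establishing the action identity and the Jacobian formula. The only point that needs a moment's care is the bookkeeping of which $d$-tuple is taken as the integration variable, and hence whether the Jacobian enters as $(\det(g_{\mu\nu}))^{N^2/2}$ or as its reciprocal; this is settled by the explicit determinant $\det\!\big(\tfrac{1}{\sqrt{\lambda_\nu}}O_{\mu\nu}\big)=\sqrt{\det(g_{\mu\nu})}$ in the Remark together with the count of the $N^2$ real components per matrix.
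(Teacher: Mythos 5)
Your proposal is correct and follows exactly the paper's route: the identity $S_{IKKT}[X']=S_N[X]$ together with the Jacobian ${\mathcal D}X=(\det(g_{\mu\nu}))^{N^2/2}{\mathcal D}X'$ computed from $\det\bigl(\tfrac{1}{\sqrt{\lambda_\nu}}O_{\mu\nu}\bigr)=\sqrt{\det(g_{\mu\nu})}$ applied to each of the $N^2$ real components per Hermitian matrix, followed by the change of variables in the partition function. This is precisely how the paper obtains the proposition, so there is nothing to add.
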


\end{rem}
\bigskip

Next, we consider the case where the metric $g$ is given by the Killing metric of a Lie algebra.
The Killing metric is  expressed as
$g_{\mu \nu}=- f_{\mu \rho}^\tau f_{\nu \tau}^\rho$,
where $f_{\mu \rho}^\tau$ are structure constants 
 of some Lie algebra $\mathfrak{g}$.
Since the existence of this non-degenerate metric is equivalent to the semisimplicity of the Lie algebra, 
we require that the Lie algebra $\mathfrak{g}$ is semisimple in the following discussion of the Killing form as a metric.

As already shown in \cite{Sako:2022pid},
the equation of motion of (\ref{LieIKKT}) ,
\begin{align}
[X^{N \mu} ,  [ X^N_{\mu} , X^N_{\nu}]]= - \hbar^2(N) X^N_{\nu} , \label{EOM1}
\end{align}
has the solution such that
\begin{align}
 [ X^N_{\mu} , X^N_{\nu}] =  \hbar(N) f_{\mu \nu}^{\rho} X^N_\rho , \label{comm_rel}
\end{align}
with an orthogonal $\mathfrak{g}$ basis $\{ X^{N}_\mu \}$ satisfying $ {\rm tr} X^{N \mu} X_\nu^N = g^{\mu \tau}  {\rm tr} X^{N}_\tau X_\nu^N= c {\delta^\mu}_\nu $,
where $c$ is a constant.
In fact, we substitute the commutation relation (\ref{comm_rel}) for the left side of (\ref{EOM1}).
\begin{align*}
[X^{N \mu} ,  [ X^N_{\mu} , X^N_{\nu}]] &=
\hbar(N)f_{\mu \nu}^\rho g^{\mu \tau}[X^N_\tau , X^N_\rho ]
=\hbar^2(N) f_{\mu \nu}^\rho g^{\mu \tau} f_{\tau \rho}^\sigma X^N_\sigma
= \hbar^2(N) 
f_{\mu \nu}^\rho g^{\mu \tau} g^{\eta \sigma} f_{\tau \rho \eta} X^N_\sigma \\
&= \hbar^2(N) 
f_{\nu \mu }^\rho g^{\mu \tau} g^{\eta \sigma} f_{\eta \rho \tau } X^N_\sigma
= - \hbar^2(N){\delta_{\nu}}^\sigma X^N_\sigma = - \hbar^2(N)X^N_\nu .
\end{align*}
Here we used the property that $f_{\mu \nu \rho}:= f_{\mu \nu}^\tau g_{\tau \rho}$ is totally anti-symmetric
in the three indices when we chose the basis $ {\rm tr} X^{N \mu} X_\nu^N = c {\delta^\mu}_\nu $.
Therefore, we found that
such a set of generators $\{ X^N_{\mu} \}$ which form a representation of $\mathfrak{g}$ is a 
solution of the equation of motion of the action $S_N$.
\\

Summarizing the above considerations, the following theorem is obtained.
\begin{theorem}\label{thm_2_2}
Let $\mathfrak{g}$ be a $d$-dimensional semisimple Lie algebra.
The mass-deformed Euclidean IKKT matrix model 
has a classical solution constructed from a
representation of any $\mathfrak{g}$.
The solution $X^\prime$ is constructed by using a $\mathfrak{g}$ basis $X$;
\begin{align}
X^{N \prime}_{\tau}  = \sum_{\nu} \sqrt{\lambda_{\tau}} {O^{-1}}_{\tau \nu} X^N_\nu ,
\end{align}
where $O$ is an orthogonal matrix such that $\sum_{\mu, \nu} {O^{-1}}_{\tau \mu} g^{\mu \nu} {O_{\nu \rho}} 
=  \lambda_\tau \eta^{\tau \rho}$, and the  basis $X$ satisfies
$$ [ X^N_{\mu} , X^N_{\nu}] =  \hbar(N) f_{\mu \nu}^{\rho} X^N_\rho, 
\quad {\rm tr} X^{N \mu} X_\nu^N = c {\delta^\mu}_\nu ,
\quad
 g_{\mu \nu}=- f_{\mu \rho}^\tau f_{\nu \tau}^\rho .$$
\end{theorem}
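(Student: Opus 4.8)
The plan is essentially to assemble the pieces that the excerpt has already laid out and check that they fit together. The statement of Theorem~\ref{thm_2_2} is a packaging of three facts: (i) the model $S_N$ of \eqref{LieIKKT} is equivalent, under the variable change \eqref{var_changing}, to the mass-deformed Euclidean IKKT model $S_{IKKT}$ of \eqref{action_ikkt}; (ii) when the metric $g$ is taken to be the Killing form of a semisimple $\mathfrak{g}$, the commutation relation \eqref{comm_rel} solves the equation of motion \eqref{EOM1} of $S_N$; and (iii) consequently the transformed matrices $X^{N\prime}_\tau=\sum_\nu\sqrt{\lambda_\tau}{O^{-1}}_{\tau\nu}X^N_\nu$ solve $S_{IKKT}$. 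So the proof is a short synthesis rather than anything new.

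First I would recall the variable change \eqref{var_changing}: diagonalize the real symmetric nondegenerate matrix $(g^{\mu\nu})$ by an orthogonal $O$, write the eigenvalues as $\lambda_\tau\eta^{\tau\tau}$, and set $X^{N\prime}_\tau=\sum_\nu\sqrt{\lambda_\tau}{O^{-1}}_{\tau\nu}X^N_\nu$. The computations \eqref{mass_equiv} and the following display show $g^{\mu\nu}X^N_\mu X^N_\nu=\eta^{\mu\nu}X^{N\prime}_\mu X^{N\prime}_\nu$ and the analogous identity for the quartic term, hence $S_{IKKT}(\hbar^2(N))[X^\prime]=S_N(\hbar^2(N))[X]$. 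Since this is a bijective change of variables on the space of $d$-tuples of Hermitian matrices, $X$ is a critical point of $S_N$ if and only if $X^\prime$ is a critical point of $S_{IKKT}$. Next I would invoke the verification already carried out in the excerpt: substituting \eqref{comm_rel} into the left-hand side of \eqref{EOM1} and using total antisymmetry of $f_{\mu\nu\rho}=f_{\mu\nu}^\tau g_{\tau\rho}$ (which holds for the normalization ${\rm tr}\,X^{N\mu}X^N_\nu=c\,{\delta^\mu}_\nu$) gives $[X^{N\mu},[X^N_\mu,X^N_\nu]]=-\hbar^2(N)X^N_\nu$, so any representation basis of $\mathfrak{g}$ satisfying the stated normalization solves \eqref{EOM1}. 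Combining these two steps yields that $X^\prime$ as displayed in the theorem solves the equation of motion of $S_{IKKT}$, which is the claim.

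The only points that need a little care — and which I would treat as the ``main obstacle,'' though it is mild — are the bookkeeping around indices and reality. One must check that $X^{N\prime}_\tau$ is again Hermitian (immediate, since $O$ and $\sqrt{\lambda_\tau}$ are real when the eigenvalues are positive; for negative eigenvalues the Remark's Wick-rotation convention replaces $X_\mu$ by $iX_\mu$, which is exactly why the theorem is stated for the \emph{Euclidean} model), and that the contractions with $\eta$ versus $g$ are consistently raised and lowered so that the equivalence of the two actions is literal and not merely formal. I would also note explicitly the hypothesis that a nondegenerate Killing form exists, i.e. semisimplicity of $\mathfrak{g}$ (Cartan's criterion), since this is what makes $g_{\mu\nu}=-f_{\mu\rho}^\tau f_{\nu\tau}^\rho$ an admissible metric and what guarantees the orthogonal basis with ${\rm tr}\,X^{N\mu}X^N_\nu=c\,{\delta^\mu}_\nu$ can be chosen. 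With those remarks in place the theorem follows by concatenating the equivalence of actions with the equation-of-motion check, and there is no hard analytic content.
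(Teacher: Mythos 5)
Your proposal is correct and follows essentially the same route as the paper: the change of variables \eqref{var_changing} giving $S_{IKKT}[X']=S_N[X]$, combined with the explicit check that the commutation relation \eqref{comm_rel} (using total antisymmetry of $f_{\mu\nu\rho}$ under the normalization ${\rm tr}\,X^{N\mu}X^N_\nu=c\,{\delta^\mu}_\nu$) solves the equation of motion \eqref{EOM1}. Your added remark that the linear, invertible variable change makes critical points correspond is the step the paper leaves implicit, and your comments on Hermiticity, the Wick rotation for negative eigenvalues, and nondegeneracy of the Killing form match the paper's surrounding remarks.
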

The statement of this theorem is not new. 
Although it is phrased differently, essentially the same result was 
derived by Arnlind and Hoppe in \cite{DiscreteMiniSurface}.\\

As an aside, it should also be noted that structure constants 
are also changed as a result of change of variables.
To be more specific, if we define the structure constants $f_{\mu \nu}^{\prime \rho}$ as 
$ [ X^{N \prime}_{\mu} , X^{N \prime}_{\nu}] =  \hbar(N) f_{\mu \nu}^{\prime \rho} X^{N \prime}_\rho$, 
then this structure constants are given by
\begin{align}
f_{\mu \nu}^{\prime \rho} =\sqrt{\frac{\lambda_\mu \lambda_\nu}{\lambda_\rho}}
\sum_{\alpha, \beta, \sigma}O_{\alpha \mu} O_{\beta \nu}{O^{-1}}_{\rho \sigma} f^{\sigma}_{\alpha \beta} .
\end{align}
\bigskip

\begin{rem}
As noted above, we perform a Wick rotation and work in the Euclidean signature. 
Therefore, when constructing solutions of the mass-deformed IKKT matrix model with a metric
$\displaystyle \eta = {\rm diag} (\underbrace{1,\cdots , 1 }_{n}, \underbrace{-1, \cdots , -1}_{d-n}) $, 
the choice of Lie algebra must be based not only on having dimension 
$d$, but also on requiring that the signature of the eigenvalues 
of the Killing form matches that of $\eta$.
The Killing form has $n$ positive eigenvalues and 
$d-n$ negative eigenvalue  (i.e., the index of inertia is $(0,n, d-n)$).
\end{rem}
\bigskip

As described in this section, we have seen that the Lie algebra gives the solution 
of the IKKT matrix model with mass deformation.
Therefore, it is important to investigate classical objects 
for which quantization yields a Lie algebra or an algebra generated 
by a representation of that Lie algebra.
In the following, we consider this issue.


\section{ Quantization and preparations }\label{sect3}
In the previous section, we saw that any basis of an arbitrary semisimple Lie algebra 
corresponds to a classical solution of the mass-deformed IKKT matrix model.
The algebra generated by the representation of the Lie algebra corresponds to a noncommutative spacetime.
It is expected that Lie-Poisson algebras appear as the spaces 
 in commutative limits.
In Sections \ref{sect3} and beyond, we consider the mechanism
 that the noncommutative spacetime is obtained by quantization 
 of a Lie-Poisson algebra.
When we focus only on the classical solutions of the mass-deformed IKKT matrix model, the corresponding Lie algebra must be semisimple. 
However, the following discussion also applies to a broader class of 
Lie algebras. 
Even if the Lie algebra is not semisimple, the argument remains valid 
as long as the Casimir operator is proportional to the identity operator 
and its eigenvalues diverge in the limit 
where the dimension of the representation becomes infinite.
\\

In order to construct the quantization of Lie-Poisson algebras, 
we review some of the contents of \cite{Sako:2022pid,sako2024} below, 
and prepare for the discussion in Sections \ref{sect4} and beyond.

\subsection{Quantization}
The following definition of quantization is used in this paper.
\begin{definition}[Quantization map $Q$ \cite{Sako:2022pid} ]\label{Q}
Let ${A}$ be a Poisson algebra over a commutative ring $R$ over $ {\mathbb C}$, and
let $T_i$ be an $R$-module that is given by a subset of some 
$R$-algebra $(M , *_M)$. 
If an $R$-module homomorphism (linear map)  $t_{Ai} : {A} \to T_i  $ 
is equipped with a constant $\hbar(t_{Ai}) \in {\mathbb C}-\{0\}$ and 
satisfies 
\begin{align}\label{lie}
[t_{Ai}(f),t_{Ai}(g)]_M =\sqrt{-1}\hbar (t_{Ai})~ t_{Ai}(\{f,g\})
+\tilde{O}(\hbar^{1+\epsilon} (t_{Ai})) \quad 
(\epsilon >0 )
\end{align}
for arbitrary $f, g \in {A}$, where $[a ,b]_M := a *_M b - b *_M a $,
we call $t_{Ai}$ a quantization map or simply a quantization.
$\tilde{O}$ is defined in the Appendix \ref{ap1}.
We denote the set of all quantization maps by $Q$.
\end{definition}
In this paper, we use ${\mathbb C}$ or ${\mathbb C}[\hbar]$ as $R$ below.
One may wonder why the symbol $\tilde{O}$, which differs from 
the usual Landau symbol, is introduced in this definition of quantization. 
The reason why $\tilde{O}$ is used is that a norm is not introduced
in $T_i$, so the norm of the scalar ${\mathbb C}$ 
is used to determine the limit.
See also Appendix \ref{ap1} and Appendix \ref{AppenB} for a detailed explanation. 
\\

In \cite{Sako:2022pid}, the author defined the quantization as a part of the category of 
Poisson algebras and their quantizations, but in this paper, the discussion using categories 
is not necessary, so that part has been omitted.
\\

The above definition of quantization includes many kinds of quantizations.
For example, matrix regularization \cite{matrix1,arnlind},
fuzzy spaces \cite{fuzzy1}, and
Berezin-Toeplitz quantization 
\cite{berezin1,bordemannA,berezin2}
which have original ideas of the matrix regularization,
satisfy it.
In addition, the strict deformation quantization introduced by Rieffel 
\cite{rieffel1,rieffel2,strict2},
the prequantization \cite{prequantization0,prequantization1,prequantization2},
and Poisson enveloping algebras \cite{oh,oh2,um,can}
are also in $Q$.
(In \cite{Gohara:2019kkd,Jumpei:2020ngc},
we can see organized discussions about these quantization maps. 
The conditions for $Q$ are a part of the definition of 
pre-$\mathscr{Q}$ in \cite{Gohara:2019kkd,Jumpei:2020ngc}.)

\subsection{Matrix regularization and fuzzy spaces for Lie-Poisson algebras}\label{3_2}
The matrix regularization for Lie-Poisson algebras is introduced in \cite{Sako:2022pid}.
We review and refine it as ``weak matrix regularization''.
Furthermore, we discuss approximate algebra homomorphism 
between Lie-Poisson algebras and algebras generated by Lie algebras in this subsection.\\

Let $\mathfrak{g}$ be a finite dimensional Lie algebra.
Let $e=\{  e_1 , e_2 , \cdots ,e_d \}$ be a fixed basis of $\mathfrak{g}$
satisfying commutation relations $[ e_i , e_j ]= f_{ij}^k e_k $,
where $f_{ij}^k$ are structure constants of $\mathfrak{g}$. 
For this Lie algebra $\mathfrak{g}$ we introduce a sequence of irreducible representations 
$\rho^{\mu} : \mathfrak{g} \rightarrow gl(V^\mu ) ( \mu =1,2, \cdots )$ and
a sequence $\hbar(\mu) ( \mu =1,2, \cdots )$ with $\hbar({\mu} ) \neq 0$.
The case of reducible representations will be considered in Section \ref{rev_sect6}.
Here each $V^\mu$ is a finite dimensional vector space chosen as appropriate, and we put a condition
$\displaystyle \lim_{\mu \rightarrow \infty} \dim V^\mu = \infty$.
We denote the corresponding basis of $e$ by
\begin{align}\label{basis_matrix}
e^{(\mu)} =\{ 
\hbar(\mu) \rho^{\mu} (e_1 ), \hbar(\mu) \rho^{\mu} (e_2 ), 
\cdots , \hbar(\mu) \rho^{\mu} (e_d ) \}
= \{ 
e_1^{(\mu)} , e_2^{(\mu)} , \cdots  ,e_d^{(\mu)} \}.
\end{align}
Then they satisfy
\begin{align}
[ e_i^{(\mu)} , e_j^{(\mu)} ]= \hbar( \mu )f_{ij}^k e_k^{(\mu)} . \label{hbarCommRel}
\end{align}
The Lie algebra $\rho^{\mu}(\mathfrak{g})$ 
is constructed by this basis.\\

Next, we introduce a Poisson algebra corresponding to this Lie algebra.
There is a well-known way known as Kirillov-Kostant Poisson bracket,
that is the way constructing a Poisson algebra.
(See for example \cite{matrix1,Kostant,Weinstein_Lu,SemenovTianShansky,Alekseev:1993qs}.)
Let $x=(x_1, x_2, \cdots , x_d)$ be commutative variables.
We consider a Lie-Poisson algebra $(\mathbb{C}[x]  , \cdot , \{ ~ , ~ \})$ by
\begin{align}
\{ x_i , ~ x_j\}:= f_{ij}^k x_k , \label{x_brackets}
\end{align}
where $\displaystyle  f_{ij}^k \in {\mathbb C}$ are structure constants.
Concretely, this Poisson bracket is realized by
\begin{align}
\{ f , ~ g\}:= f \omega g := f \overleftarrow{\partial}_i \omega_{ij} 
\overrightarrow{\partial}_j g := ({\partial}_i f) \omega_{ij} 
({\partial}_j g) , \label{poisson_brackets}
\end{align}
where $\displaystyle \partial_i = \frac{\partial}{\partial x_i}$ and
$\omega_{ij} = f_{ij}^k x_k $.
It is easy to verify that (\ref{poisson_brackets}) satisfies the
Leibniz's rule and the Jacobi identity.
We denote the Poisson algebra $(\mathbb{C}[x]  , \cdot , \{ ~ , ~ \})$ by 
 $A_\mathfrak{g}$.
 We define degree of a monomial 
 $x^\alpha = (x_{1})^{\alpha_1} (x_{2})^{\alpha_2} \cdots  (x_{d})^{\alpha_d} $ by
 $\deg  x^\alpha :=|\alpha|:= \sum_{i=1}^d \alpha_i$. For the polynomial $f(x) = \sum_\alpha a_\alpha x^\alpha $, where
 $a_\alpha \in \mathbb{C}$, $\deg f(x) $ is defined by $\displaystyle \max_{a_\alpha \neq 0} \{  \deg x^\alpha \} $.
For multi-index, the notation $x^I =x_{i_1} x_{i_2} \cdots x_{i_m}$ 
is also often used below, where $\deg x^{I} = m=: |I|$.
 \\
 \bigskip

Next, let us construct quantization maps 
from the Lie-Poisson algebra $A_\mathfrak{g}$ to 
$T_{\mu}$ that is $\langle e^{(\mu)} \rangle$.
We denote the $R$-algebra generated by $ e^{(\mu)} $ by $\langle e^{(\mu)} \rangle$, here.
(If $\mathfrak{g}$ is a semisimple Lie algebra, then $T_\mu$ is given by $End (V^\mu )= gl (V^\mu )$.) 
In \cite{Sako:2022pid}, $T_{\mu}$ is regarded as a vector space that is $\langle e^{(\mu)} \rangle$
forgetting multiplication structure. 
However, we do not discuss the category $QW$ in this paper, 
so there is no problem treating $T_{\mu}$ as an algebra.
We choose a basis of $T_{\mu}:=\langle e^{(\mu)} \rangle$,
$E_1, E_2, \dots, E_D$, as polynomials of $e^{(\mu)}$.
Any polynomial of $e^{(\mu)}$ can be rewritten as $\hbar$ polynomial in $\langle \rho^\mu (e ) \rangle [\hbar ]$. 
So, a degree ${\rm deg}$ of any polynomial of $e^{(\mu)}$ can be defined by $\hbar$'s degree. 
Using $ E^i_{j_1 , \cdots , j_k}  \in {\mathbb C}$, $E_i~ (i=1, \cdots , D)$ are expressed as
$$ E_i  = \sum_k E^i_{j_1 , \cdots , j_k} e^{(\mu)}_{j_1} \cdots  e^{(\mu)}_{j_k}
= \sum_k \hbar^k(\mu)  E^i_{j_1 , \cdots , j_k} \rho^\mu (e_{j_1}) \cdots  \rho^\mu (e_{j_k})
, $$
where $E^i_{j_1 , \cdots , j_k}$ is independent of $\hbar$.
Note that Einstein summation convention is used for each $j_l$.
For each $E_i$, such expression given by $e^{(\mu)}$ is not unique in general.
We chose an expression that minimizes $\displaystyle \max_{E^i_{j_1 , \cdots , j_k}\!\!\!\!\!\!\!\!\!\!  \neq 0 } \{k \}$.
Then there exists the degree of $\hbar$ of $E_i$ i.e., 
$\displaystyle \deg E_i := \max_{E^i_{j_1 , \cdots , j_k}\!\!\!\!\!\!\!\!\!\!  \neq 0 } \{k \}$.
The highest degree of $\{E_1, E_2, \dots, E_D \}$ is denoted by $n_\mu$, i.e.,
$n_\mu  = \max \{ {\deg}E_1, \cdots , {\rm deg}E_D \}$.
$n_\mu$ does not depend on the choice of $\{E_1, E_2, \dots, E_D \}$,
because another $E'_1, E'_2, \dots, E'_D $ can also be described as linear combinations of the original $E_1, E_2, \dots, E_D $.\\

In considering matrix regularization, the quantization of the target space by a finite matrix algebra (and its subalgebras), the following properties are characteristic.
\begin{lemma}\label{lemma3_2}
Let the vector $E_{i_1} E_{i_2} \cdots E_{i_k} $ be represented by a linear 
combination of $\{E_1, E_2, \dots, E_D \}$ with each component $c_i$.
\begin{align*}
E_{i_1} E_{i_2} \cdots E_{i_k}  = \sum_{j} c_{j}  E_j .
\end{align*}
If $\sum_{l=1}^k \deg E_{i_l} > n_\mu $, then $c_i $ is a polynomial consisting of terms of degree $1$ or higher in $\hbar(\mu) $.
As a similar claim, for $k > n_{\mu}$ and
\begin{align*}
e^{(\mu)}_{i_1} \cdots e^{(\mu)}_{i_k} = \sum_{j} c_{j}  E_j ,
\end{align*}
then $c_i $
 is a polynomial consisting of terms of degree $k- n_\mu$ or higher in $\hbar(\mu) $.
\end{lemma}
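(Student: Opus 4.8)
The plan is to turn both assertions into bookkeeping of powers of $\hbar(\mu)$, working — as in the paper — in $\langle\rho^\mu(e)\rangle[\hbar]$ with $e^{(\mu)}_i=\hbar(\mu)\rho^\mu(e_i)$, and using the filtration of $\langle\rho^\mu(e)\rangle$ by the number of generators. For $m\ge 0$ let $\mathcal F_m\subseteq\langle\rho^\mu(e)\rangle$ denote the $\mathbb C$-span of all products $\rho^\mu(e_{j_1})\cdots\rho^\mu(e_{j_p})$ with $p\le m$, so that $\mathcal F_0=\mathbb C\,\mathrm{Id}\subseteq\mathcal F_1\subseteq\cdots$ and $\mathcal F_a\mathcal F_b\subseteq\mathcal F_{a+b}$. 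First I would record the two structural facts that do the work. (i) In its minimizing expression each basis element is $\hbar$-homogeneous of its nominal degree: $E_j=\hbar^{\deg E_j}(\mu)\,M_j$ for a fixed $M_j\in\mathcal F_{\deg E_j}\subseteq\langle\rho^\mu(e)\rangle$. (This is the case for the Gr\"obner-basis constructions used in this paper, since the Casimir ideal is homogeneous and hence the remainder of a monomial is homogeneous of the same degree.) (ii) Since $\{E_1,\dots,E_D\}$ is a basis of $\langle e^{(\mu)}\rangle=\langle\rho^\mu(e)\rangle$ and $\hbar(\mu)\neq 0$, the rescaled family $\{M_1,\dots,M_D\}$ is again a $\mathbb C$-basis of $\langle\rho^\mu(e)\rangle$; as each $M_j$ lies in $\mathcal F_{\deg E_j}\subseteq\mathcal F_{n_\mu}$, this forces $\mathcal F_{n_\mu}=\langle\rho^\mu(e)\rangle$, hence $\mathcal F_m=\langle\rho^\mu(e)\rangle$ for every $m\ge n_\mu$.

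Given this, the first assertion is a one-line computation. Put $s:=\sum_{l=1}^k\deg E_{i_l}$; by (i), $E_{i_1}\cdots E_{i_k}=\hbar^s(\mu)\,M_{i_1}\cdots M_{i_k}$, and $M_{i_1}\cdots M_{i_k}\in\mathcal F_{\deg E_{i_1}}\cdots\mathcal F_{\deg E_{i_k}}\subseteq\mathcal F_s$. Because $s>n_\mu$, fact (ii) gives $\mathcal F_s=\langle\rho^\mu(e)\rangle$, so I may expand $M_{i_1}\cdots M_{i_k}=\sum_j b_j M_j$ with $b_j\in\mathbb C$, whence $E_{i_1}\cdots E_{i_k}=\sum_j b_j\,\hbar^{s-\deg E_j}(\mu)\,E_j$. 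By linear independence of the $E_j$ over $\mathbb C(\hbar)$ this is the expansion of the statement, so $c_j=b_j\,\hbar^{s-\deg E_j}(\mu)$; since $\deg E_j\le n_\mu<s$ the exponent $s-\deg E_j$ is $\ge 1$ (in particular $\ge 0$, so the $c_j$ really are polynomials in $\hbar(\mu)$), and each $c_j$ has no constant term. The second assertion is the same computation with $k$ in place of $s$: $e^{(\mu)}_{i_1}\cdots e^{(\mu)}_{i_k}=\hbar^k(\mu)\,\rho^\mu(e_{i_1})\cdots\rho^\mu(e_{i_k})$, the product of generators lies in $\mathcal F_k=\langle\rho^\mu(e)\rangle$ since $k>n_\mu$, expanding it in $\{M_j\}$ yields $c_j=b_j\,\hbar^{k-\deg E_j}(\mu)$, and $k-\deg E_j\ge k-n_\mu$, so every $c_j$ is divisible by $\hbar^{k-n_\mu}(\mu)$.

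The only real obstacle is the opening step: pinning down fact (i) — that each chosen basis element is $\hbar$-homogeneous of degree exactly $\deg E_j$ — and fact (ii) — that the product-length filtration $\mathcal F_\bullet$ has already stabilized by level $n_\mu$, which is where the role of $n_\mu$ as the maximal degree of a basis element enters. Once these are in place, the hypothesis $\sum_l\deg E_{i_l}>n_\mu$ — respectively $k>n_\mu$ — is exactly what guarantees that, upon re-expressing the product in the $E$-basis, dividing out $\hbar^{\deg E_j}(\mu)$ from each term leaves a strictly positive (respectively at least $(k-n_\mu)$-th) power of $\hbar(\mu)$; everything else is routine.
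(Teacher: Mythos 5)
Your final bookkeeping (rescale, expand, count exponents) is exactly the paper's own one-line argument made explicit: the paper proves the lemma by noting that the $\hbar$-degree of $E_{i_1}\cdots E_{i_k}$ exceeds $n_\mu$ while every basis element has degree at most $n_\mu$. The genuine gap in your proposal is fact (i) and the way you justify it. In Subsection 3.2 the basis $\{E_1,\dots,E_D\}$ of $T_\mu$ is an arbitrary basis written as polynomials in $e^{(\mu)}$, i.e. $E_i=\sum_k \hbar^k(\mu)\,E^i_{j_1,\dots,j_k}\rho^\mu(e_{j_1})\cdots\rho^\mu(e_{j_k})$ with the sum running over several values of $k$, and the degree is only the maximal such $k$ in a minimizing expression; nothing forces $E_i$ to be $\hbar$-homogeneous. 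The Gr\"obner-basis remainder construction you appeal to is irrelevant at this point — it enters only in Section 4 for $A_\mathfrak{g}/I(C)$ — and even there the ideal is generated by $C^k_i(x)-\lambda^k_i$, which is not homogeneous. So your parenthetical justification of (i) is incorrect, and (i) is an extra hypothesis rather than a feature of the paper's setup.

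Moreover (i) is doing real work, so it cannot simply be waved through: for a non-homogeneous basis the stated conclusion can fail. Take $\mathfrak{su}(2)$ in the spin-$\tfrac12$ representation, $e^{(\mu)}_a=\hbar(\mu)\rho^\mu(e_a)$, and the basis $E_0=Id$, $E'_a=e^{(\mu)}_a+Id$ $(a=1,2,3)$, so $\deg E'_a=1$ and $n_\mu=1$; then $E'_1E'_2=E'_1+E'_2+\tfrac{i\hbar(\mu)}{2}E'_3-\bigl(1+\tfrac{i\hbar(\mu)}{2}\bigr)E_0$, whose coefficients contain $\hbar$-independent constants although $\deg E'_1+\deg E'_2=2>n_\mu$. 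Hence your argument is valid exactly for bases each of whose elements is $\hbar$-homogeneous of its degree (which is what the paper actually uses later, e.g. the symmetrized elements $e^{(\mu)}_{(I)}$, and what its terse proof tacitly relies on), but as written it neither establishes the lemma in the generality in which it is stated nor gives a correct reason why the paper's bases satisfy (i). To repair it, either restrict explicitly to $\hbar$-homogeneous bases (such as those built from the $e^{(\mu)}_{(I)}$) and prove (i) for them directly, or replace (i) by an argument that works with the mixed-degree expansions $E_i=\sum_k\hbar^k(\mu)E^i_{j_1,\dots,j_k}\rho^\mu(e_{j_1})\cdots\rho^\mu(e_{j_k})$ themselves; the appeal to the Casimir ideal cannot serve this purpose.
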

\begin{proof}
It follows from the fact that the degree of $\hbar$ of $E_{i_1} E_{i_2} \cdots E_{i_k} $ is greater than $n_\mu$, but the maximum degree of $E_i$ is $n_\mu$.
\end{proof}

\begin{definition}\label{def_FuzzyQ} 
We define a linear function
$q_\mu : A_\mathfrak{g} \rightarrow T_\mu $
by
\begin{align} \label{q_mu}
\sum_I f_I x^I := \sum_k f_{i_1, \cdots , i_k} x_{i_1} \cdots x_{ i_k} \mapsto 
\sum_{|I|\le n_\mu} f_I e^{(\mu)}_{(I)} = \sum_k^{n_\mu} f_{i_1, \cdots , i_k} e^{(\mu)}_{(i_1, \cdots , i_k)} ,
\end{align}
where $f_{i_1, \cdots , i_k} \in \mathbb{C}$ is completely symmetric,
\begin{align*}
e^{(\mu)}_{(I)}:= e^{(\mu)}_{(i_1, \cdots ,i_k)}:=
\frac{1}{k!}\sum_{\sigma \in Sym(k)}
e^{(\mu)}_{i_{\sigma(1)}} \cdots e^{(\mu)}_{ i_{\sigma(k)}} ,
\end{align*}
and we require that the multiplicative identity of $A_\mathfrak{g}$ 
maps to the unit matrix in $T_\mu$ i.e.,  $q_\mu (1) = Id \in gl(V^\mu)$.
\end{definition}

\begin{rem}
The above $n_\mu$ is introduced to give a boundary that determines the kernel of $q_\mu$.
Although it is defined by $n_\mu  = \max \{ {\rm deg}E_1, \cdots , {\rm deg}E_D \}$ here, 
there is no particular reason why it has to be this way.
The important points are that $n_\mu$ 
is defined as a sequence that increases with the dimension of $V^\mu$,  
and that Lemma \ref{lemma3_2} is satisfied.
But the details are not essential.
For example, it is also possible to define $n_\mu$ by using multi-degree as follows.
We denote multi-degree of $E_i$ by ${\rm multdeg} E_i$. (In Appendix \ref{appendix_grobner}, we can see a definition of multi-degree.)
A $n_\mu$ can be defined as the highest multi-degree of $\{E_1, E_2, \dots, E_D \}$ i.e.
$n_\mu = \max \{ {\rm multdeg}E_1, \cdots , {\rm multdeg}E_D \}$.
It is possible to reduce the number of more degenerate elements of $q_\mu$.
However, this definition has not been chosen for simplicity.
Therefore,  the notation $\sum_k^{n_\mu}$ used here and in the following should be 
more accurately interpreted as 
$\sum_{(i_1, \cdots , i_k ) \in Dom}$, where 
\begin{align}
Dom= \{ (i_1, \cdots , i_k)~ | ~x_{i_1} \cdots x_{i_k}  \in A_\mathfrak{g}  \backslash \ker q_\mu \}. \label{Dom}
\end{align} 
\end{rem}
\bigskip

By definition, this quantization $q_\mu$ satisfies the following.
\begin{proposition}\label{sect5_thm_2}
Let $q_\mu : A_\mathfrak{g} \rightarrow  T_\mu = \langle e^{(\mu)}\rangle$ be a linear function 
of Definition \ref{def_FuzzyQ}. Then it satisfies
\begin{align*}
[q_{\mu} ( f ) , q_{\mu} ( g )]
= \hbar ({\mu}) q_{\mu} ( \{ f , g \} ) + \tilde{O}(\hbar^2({\mu}))
\end{align*}
for $\forall f,g \in A_\mathfrak{g} $. In other words, $q_\mu \in Q$.
\end{proposition}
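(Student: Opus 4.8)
The plan is to verify the defining relation \eqref{lie} of a quantization map directly for $q_\mu$, exploiting linearity to reduce to monomials and then using the symmetrization formula together with the commutation relations \eqref{hbarCommRel}. First I would note that by bilinearity of the commutator, of the Poisson bracket, and by linearity of $q_\mu$, it suffices to check the identity on a pair of monomials $f = x^I = x_{i_1}\cdots x_{i_k}$ and $g = x^J = x_{j_1}\cdots x_{j_m}$ (with the caveat that one must stay inside $A_\mathfrak{g}\backslash\ker q_\mu$, i.e. respect the cutoff $n_\mu$; I will address the boundary terms at the end). For such monomials $q_\mu(f) = e^{(\mu)}_{(I)}$ and $q_\mu(g) = e^{(\mu)}_{(J)}$, the symmetrized products of the matrices $e^{(\mu)}_i$.

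The key computational step is to expand the commutator $[e^{(\mu)}_{(I)}, e^{(\mu)}_{(J)}]$ by repeatedly moving one factor past another using \eqref{hbarCommRel}: each elementary transposition of adjacent factors $e^{(\mu)}_a e^{(\mu)}_b = e^{(\mu)}_b e^{(\mu)}_a + \hbar(\mu) f_{ab}^c e^{(\mu)}_c$ produces a ``leading'' term (the transposed product, which eventually cancels in the commutator of fully symmetrized objects) plus a correction of order $\hbar(\mu)$ with one fewer factor. Collecting all the order-$\hbar(\mu)$ corrections, the surviving piece is $\hbar(\mu)$ times a sum over one contraction $f_{i_a j_b}^c$ of the remaining symmetrized product of the other $x$-variables; this is precisely $\hbar(\mu)\, q_\mu(\{x^I, x^J\})$, since by the Leibniz rule $\{x^I, x^J\} = \sum_{a,b} f_{i_a j_b}^c\, x_{i_1}\cdots \widehat{x_{i_a}}\cdots x_{j_1}\cdots \widehat{x_{j_b}}\cdots x_c$ and $q_\mu$ symmetrizes exactly this. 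All terms involving two or more uses of \eqref{hbarCommRel} carry a factor $\hbar^2(\mu)$ or higher, hence lie in $\tilde O(\hbar^2(\mu))$; here I would invoke the definition of $\tilde O$ from Appendix \ref{ap1} to see that a coefficient-wise bound by a constant times $\hbar^2(\mu)$ is exactly what is required, so no norm on $T_\mu$ is needed.

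For a clean bookkeeping of the combinatorics I would not expand the full symmetrizations but instead use the standard fact that, modulo lower-$\hbar$ terms, the $e^{(\mu)}_i$ behave like commuting variables: writing $e^{(\mu)}_{(I)} = \prod_a x_{i_a}\big|_{x\to e^{(\mu)}} + (\text{higher order in }\hbar)$ inside $\langle \rho^\mu(e)\rangle[\hbar]$, one gets $[e^{(\mu)}_{(I)}, e^{(\mu)}_{(J)}] = \sum_{a,b}\frac{\partial (x^I)}{\partial x_{i_a}}[e^{(\mu)}_{i_a}, e^{(\mu)}_{j_b}]\frac{\partial(x^J)}{\partial x_{j_b}}\big|_{x\to e^{(\mu)}} + \tilde O(\hbar^2)$, and then substituting \eqref{hbarCommRel} and comparing with \eqref{poisson_brackets} gives the claim. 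The one genuine subtlety, which I expect to be the main obstacle, is the cutoff: the sum in \eqref{q_mu} only runs up to total degree $n_\mu$, so when $|I| + |J|$ is near $n_\mu$ the Poisson bracket $\{x^I, x^J\}$ may have degree $|I|+|J|-1$ which could exceed $n_\mu$, and likewise $q_\mu(f)q_\mu(g)$ may produce products of the $e^{(\mu)}_i$ of total degree $> n_\mu$ that must be re-expanded in the basis $E_1,\dots,E_D$. This is precisely where Lemma \ref{lemma3_2} is needed: any such re-expansion of an over-degree product contributes only terms of degree $\ge 1$ in $\hbar(\mu)$ (indeed degree $\ge |I|+|J|-n_\mu$), so these ``truncation defects'' are again absorbed into $\tilde O(\hbar^2(\mu))$ provided $|I|+|J|-n_\mu \ge 2$, and the borderline case $|I|+|J| = n_\mu+1$ must be checked to contribute at order $\hbar(\mu)$ consistently with the Poisson term; I would handle this by noting that in that case $\{x^I,x^J\}$ has degree $n_\mu$, still in the domain, and the defect from re-expressing the degree-$(n_\mu+1)$ part of the ordinary product sits at order $\hbar^2$ relative to the leading symbol. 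Assembling these three ingredients — linearity reduction, the first-order commutator expansion matching the Kirillov–Kostant bracket, and the degree/$\hbar$-counting of Lemma \ref{lemma3_2} for the truncation — completes the proof that $q_\mu \in Q$.
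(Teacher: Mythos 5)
Your argument is correct and follows essentially the same route as the paper: the paper defers the proof of this proposition to \cite{Sako:2022pid} and instead establishes the stronger weak-matrix-regularization statement by exactly your strategy — reduction to monomials, extraction of the order-$\hbar(\mu)$ term from a single application of the commutation relations (matching the Leibniz expansion of the Kirillov--Kostant bracket), and the case split at $|I|+|J|=n_\mu+1$ with Lemma \ref{lemma3_2} absorbing the over-degree truncation terms into the $\hbar^2(\mu)$ remainder. No gaps worth flagging.
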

The proof is given in \cite{Sako:2022pid}, 
however, there is no need to refer to it,
as we will give a more detailed discussion soon 
in a slightly different framework of ``weak matrix regularization''.
Since $q_\mu$ maps $n$-degree polynomials to 
$n$-degree quantities in $T_\mu$, 
this proposition looks almost a trivial assertion.
In other words, there is the inability to distinguish between $\hbar$, 
which represents noncommutativity, and $\hbar$, 
which originates from the degree of the polynomial.
This makes it insufficient to use the degree of 
$\hbar$ as a meaningful measure of noncommutativity. 
In order to address this issue, we introduce a ``weak matrix regularization'' 
in which the noncommutativity is explicitly manifested in terms of  $\hbar$. 
Furthermore, the condition for noncommutativity expressed via 
the asymptotic homomorphism that appears at the end of this section 
makes this point even more explicit. \\
The quantization from a Lie-Poisson algebra to a matrix algebra or a subset of a matrix algebra, such as $q_\mu \in Q$, 
which is included in the quantization $Q$ is regarded as a matrix regularization in this paper. 
Since its construction is a certain generalization of the matrix regularization of Madore\cite{fuzzy1} or 
de Wit-Hoppe-Nicolai \cite{deWitHoppeNicolai}, this quantization 
is regarded as a matrix regularization.
The target of matrix regularization is called a Fuzzy space.
There is not necessarily a consensus on a single definition of ``matrix regularization''. 
Commonly used definition of matrix regularization is given in Appendix \ref{AppenB}.
The definition used in this paper is less restrictive than the one in Appendix \ref{AppenB}.
We define weak matrix regularization here.

\begin{definition}[Weak matrix regularization]
Consider a Poisson algebra $A$ and a
sequence of subalgebras $B^m \subset End (V^m) ~(m=1,2, \cdots )$ where
$V^m$ is a finite dimensional vector space and $\lim_{m\rightarrow \infty} \dim V^m = \infty$.
Let $x=(x_i, \cdots , x_d)$ be a generator of $A$.
Let $q_m : A \to B^m ~ (m=1,2, \cdots )$ be a sequence of quantizations such that 
$\hbar(m):=\hbar(q_m)$ tends to zero as the dimension of $V^m$ tends to infinity.
For each $q_m$ and $\forall f,g \in A $,
when there exists some 
$$P_m = \sum_l 
a^m_{i_1, \cdots , i_l} q_m(x_{i_1}) \cdots  q_m(x_{i_l}) \in B^m,$$ 
where $a^m_{i_1, \cdots , i_l} \in {\mathbb C}$ and
$P_m$ is $\tilde{O}(\hbar^k(m)) ~(k\ge 0)$
satisfying
\begin{align}
[q_{m} ( f ) , q_{m} ( g )]
= \hbar (m) q_{m} ( \{ f , g \} ) + \hbar^2({m}) P_m,
\label{3_mat_reg_proof}
\end{align}
we call $ \{ q_m \}$ a weak matrix regularization. 
(Einstein summation convention is used, and the summation symbol 
$\sum_{i_1 , \dots , i_l}$ is omitted.)
We also simply say that $q_m$ is a weak matrix regularization.
\end{definition}


\begin{theorem}
Consider a Poisson algebra $A_{\mathfrak{g}}$ and a
sequence of subalgebras 
$T_\mu \subset End (V^\mu) ~(\mu=1,2, \cdots )$ defined above.
Let $q_\mu : A_{\mathfrak{g}} \to T_\mu ~ (\mu =1,2, \cdots )$ be a sequence
of quantizations
defined by Definition \ref{def_FuzzyQ}.
Suppose that $\{\hbar(\mu)\}$ is a sequence such that 
$\hbar(\mu) \rightarrow 0$ as $\dim V^\mu \rightarrow \infty$ .
Then $q_\mu$
is a weak matrix regularization.
\end{theorem}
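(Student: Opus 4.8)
The plan is to reduce the claimed statement to Proposition \ref{sect5_thm_2}, which already establishes $[q_{\mu}(f),q_{\mu}(g)] = \hbar(\mu)\, q_{\mu}(\{f,g\}) + \tilde{O}(\hbar^{2}(\mu))$ for all $f,g \in A_{\mathfrak{g}}$, and then to rewrite the error term $\tilde{O}(\hbar^{2}(\mu))$ in the specific form demanded by the definition of weak matrix regularization, namely as $\hbar^{2}(\mu) P_{\mu}$ with $P_{\mu} = \sum_{l} a^{\mu}_{i_1,\dots,i_l}\, q_{\mu}(x_{i_1})\cdots q_{\mu}(x_{i_l})$, coefficients in $\mathbb{C}$, and $P_{\mu}$ of order $\tilde{O}(\hbar^{k}(\mu))$ for some $k\ge 0$. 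So the real content is not re-proving the commutator identity but checking that the residual operator can be expanded as a polynomial in the images $q_{\mu}(x_i) = e^{(\mu)}_i$ with $\hbar$-independent scalar coefficients.

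First I would recall that $T_{\mu} = \langle e^{(\mu)} \rangle$ is by construction the $R$-algebra generated by $e^{(\mu)}_1,\dots,e^{(\mu)}_d$, and that $q_{\mu}(x_i) = e^{(\mu)}_i$. Hence $[q_{\mu}(f),q_{\mu}(g)]$, $q_{\mu}(\{f,g\})$, and therefore their difference all lie in $T_{\mu}$, and every element of $T_{\mu}$ is by definition a finite sum of monomials $e^{(\mu)}_{i_1}\cdots e^{(\mu)}_{i_l}$ — but with coefficients that a priori may involve $\hbar(\mu)$, since $e^{(\mu)}_i = \hbar(\mu)\rho^{\mu}(e_i)$. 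The key observation is that the coefficients appearing when one expands $[q_{\mu}(f),q_{\mu}(g)] - \hbar(\mu) q_{\mu}(\{f,g\})$ in terms of the $e^{(\mu)}_{i}$'s are in fact $\hbar$-independent rationals (built from the structure constants $f^{k}_{ij}$ and the Taylor coefficients of $f$ and $g$): the monomial $x_{i_1}\cdots x_{i_l}$ in a polynomial contributes $e^{(\mu)}_{i_1}\cdots e^{(\mu)}_{i_l}$ with the same numerical coefficient, and the commutator and the Poisson bracket only ever introduce structure constants. I would make this precise by expanding $f,g$ as polynomials, using bilinearity of $[\,\cdot\,,\,\cdot\,]$ and of $\{\,\cdot\,,\,\cdot\,\}$ to reduce to monomials, and using the commutation relation \eqref{hbarCommRel} $[e^{(\mu)}_i,e^{(\mu)}_j] = \hbar(\mu) f^{k}_{ij} e^{(\mu)}_k$ to reorganize. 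From the proof of Proposition \ref{sect5_thm_2} (in \cite{Sako:2022pid}) one extracts not merely that the error is $\tilde{O}(\hbar^{2})$ but that it equals $\hbar^{2}(\mu)$ times an element of $T_{\mu}$ whose expansion coefficients in the $e^{(\mu)}_{i}$ basis are independent of $\hbar(\mu)$; dividing out $\hbar^{2}(\mu)$ gives the required $P_{\mu}$, and since the $e^{(\mu)}_i = q_{\mu}(x_i)$ are each $\tilde{O}(\hbar(\mu))$, any fixed polynomial in them is $\tilde{O}(\hbar^{k}(\mu))$ for a suitable $k \ge 0$, so the order condition on $P_{\mu}$ holds automatically.

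I expect the main obstacle to be exactly the bookkeeping that the residual coefficients are genuinely $\hbar$-independent. The subtlety is that $q_{\mu}$ annihilates monomials outside $Dom$ (the $\ker q_{\mu}$ truncation governed by $n_{\mu}$), so when one expands $[q_{\mu}(f),q_{\mu}(g)]$ via the product in $T_{\mu}$ there is no matching truncation: products $q_{\mu}(x_{i_1})\cdots q_{\mu}(x_{i_l})$ of degree $>n_{\mu}$ do appear on the left, while $\hbar(\mu)q_{\mu}(\{f,g\})$ has been truncated. By Lemma \ref{lemma3_2}, however, such high-degree products, re-expressed in the finite basis $\{E_1,\dots,E_D\}$ of $T_{\mu}$, carry coefficients of $\hbar$-degree at least $1$ (indeed at least $\deg - n_\mu$), so all the "extra" terms are absorbed into higher order in $\hbar$ and contribute to $P_{\mu}$ with the right scaling; none of them spoil $\hbar$-independence of $P_{\mu}$ once the overall $\hbar^{2}(\mu)$ is factored out. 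I would therefore structure the proof as: (i) invoke Proposition \ref{sect5_thm_2} for the identity up to $\tilde{O}(\hbar^{2})$; (ii) expand the exact difference $[q_{\mu}(f),q_{\mu}(g)] - \hbar(\mu)q_{\mu}(\{f,g\})$ monomial-by-monomial using \eqref{hbarCommRel} and Lemma \ref{lemma3_2} to show it is $\hbar^{2}(\mu)$ times a polynomial in the $q_{\mu}(x_i)$ with scalar coefficients; (iii) set this polynomial to be $P_{\mu}$ and note the $\tilde{O}(\hbar^{k}(\mu))$ property is inherited from $e^{(\mu)}_i = \hbar(\mu)\rho^{\mu}(e_i)$. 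This yields the definition of weak matrix regularization for the sequence $\{q_{\mu}\}$, since by hypothesis $\hbar(\mu) \to 0$ as $\dim V^{\mu} \to \infty$.
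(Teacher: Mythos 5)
Your proposal is correct and follows essentially the same route as the paper's proof: reduce to monomials, split according to whether the total degree exceeds $n_\mu+1$, and use the commutation relation \eqref{hbarCommRel} together with Lemma \ref{lemma3_2} to absorb both the reordering terms and the truncated high-degree products into an explicit $\hbar^2(\mu)P_\mu$. The only (harmless) over-statement is your insistence that the coefficients of $P_\mu$ be $\hbar$-independent: the definition only asks for numerical coefficients and $P_\mu=\tilde{O}(\hbar^k(\mu))$ with $k\ge 0$, and the paper's own proof in fact writes $P$ in the basis $E_i$ with coefficients in $\mathbb{C}[\hbar(\mu)]$.
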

\begin{proof}
If we can show that for $\forall f,g \in A_\mathfrak{g} $
there exist $P = \sum_i^D c_i(\hbar(\mu)) E_i  \in T_\mu$
with $c_i(\hbar) \in {\mathbb C}[\hbar (\mu )]$ such that
\begin{align}
[q_{\mu} ( f ) , q_{\mu} ( g )]
= \hbar ({\mu}) q_{\mu} ( \{ f , g \} ) + \hbar^2({\mu}) P,
\label{3_mat_reg_proof}
\end{align}
then $q_\mu$ is a weak matrix regularization.
By Definition \ref{def_FuzzyQ}, 
$\displaystyle 
q_\mu (x^{I_k})= e^{(\mu)}_{(I_k)} 
= \frac{1}{k!} \!\!\!\! \sum_{\sigma \in Sym(k)} \!\!\!\!
e^{(\mu)}_{i_{\sigma (1)}} \cdots  e^{(\mu)}_{i_{\sigma (k)}}
$ for $|I_k| \le n_\mu$, and $\displaystyle 
q_\mu (x^{I_k})=0$ for $|I_k| > n_\mu$.

When $|I_k| +|J_m|  \le n_\mu +1 $, 
degree of $\{ x^{(I_k)} , x^{(J_m)} \}$ is $|I_k| +|J_m|-1$.
So, $[ e^{(\mu)}_{(I_k)}  , e^{(\mu)}_{(J_m)} ] = \hbar (\mu)
q_{\mu} ( \{ x^{(I_k)} , x^{(J_m)}  \} ) + \hbar^2(\mu) P_1$.
Here we use $P_k = \sum_i^D c^k_i(\hbar(\mu)) E_i  \in T_\mu$, that is,
each coefficient of base $E_i$ is a polynomial in  $\hbar(\mu)$.
When $|I_k| +|J_m|  > n_\mu +1 $,
$q_\mu ( \{ x^{(I_k)} , x^{(J_m)} \} )=0$.
$[ e^{(\mu)}_{(I_k)}  , e^{(\mu)}_{(J_m)} ] = \hbar^2(\mu) P_2$,
since the commutator makes $\hbar(\mu)$ and
the other $\hbar(\mu)$  arise from Lemma \ref{lemma3_2}.
Then, for $\displaystyle f = \sum_{I_k} f_{I_k} x^{I_k} $
and $\displaystyle g = \sum_{J_m} g_{J_m} x^{J_m}$, 
\begin{align}
[q_{\mu} ( f ) , q_{\mu} ( g )] &=
\sum_{|I_k|, |J_m| \le n_\mu} f_{I_k} g_{J_m} [ e^{(\mu)}_{(I_k)}  , e^{(\mu)}_{(J_m)} ] 
\notag \\
&=
\sum_{|I_k|+ |J_m| \le n_\mu+1} f_{I_k} g_{J_m} [ e^{(\mu)}_{(I_k)}  , e^{(\mu)}_{(J_m)} ] 
+ \hbar^2 (\mu) P_3
\notag \\
&=  \hbar (\mu) q_{\mu} ( \{ f , g \} ) + \hbar^2 (\mu) P_4. 
\end{align}
The $q_{\mu} ( \{ f , g \} ) $ is symmetrized, but reordered in the same order as 
$[ e^{(\mu)}_{(I_k)}  , e^{(\mu)}_{(J_m)} ] $ in the second line 
by using the commutation relation 
of the Lie algebra, and all the extra terms from the reordering 
are included in the second term.
\end{proof}

The following proposition is evident from the fact that 
$\hbar$ appears in the commutator product.
\begin{proposition}\label{prop3_7}
For any $f,g \in A_{\mathfrak{g}}$ with $\deg f + \deg g \le n_\mu$, there exists $P = \sum_i^D c_i(\hbar(\mu)) E_i  \in T_\mu$
with $c_i(\hbar) \in {\mathbb C}[\hbar (\mu )]$ satisfying
$$ q_{\mu} ( f )  q_{\mu} ( g ) = q_{\mu} ( f  g ) +\hbar(\mu) P.  $$
\end{proposition}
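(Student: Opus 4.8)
The plan is to mimic the structure of the preceding theorem's proof, but now tracking the \emph{symmetric} (anticommutator-type) part of the product $q_\mu(f)q_\mu(g)$ rather than the commutator. First I would reduce to monomials: write $f = \sum_{I_k} f_{I_k} x^{I_k}$ and $g = \sum_{J_m} g_{J_m} x^{J_m}$ with completely symmetric coefficients, so that by linearity it suffices to compare $q_\mu(x^{I_k}) q_\mu(x^{J_m}) = e^{(\mu)}_{(I_k)} e^{(\mu)}_{(J_m)}$ with $q_\mu(x^{I_k} x^{J_m}) = e^{(\mu)}_{(I_k, J_m)}$, the full symmetrization over all $k+m$ indices, under the hypothesis $k + m \le n_\mu$. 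Note that because $\deg f + \deg g \le n_\mu$ every monomial appearing has total degree $\le n_\mu$, so none of the $q_\mu$-images here are killed by the kernel condition $|I| > n_\mu$; this is where the degree hypothesis is used.

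The key step is the combinatorial identity relating an ordered product of symmetrized blocks to a single symmetrization: $e^{(\mu)}_{(I_k)} e^{(\mu)}_{(J_m)} - e^{(\mu)}_{(I_k, J_m)}$ is a sum of terms each of which differs from a monomial in the $e^{(\mu)}_i$ only by a reordering of adjacent factors, and each transposition of adjacent factors $e^{(\mu)}_a e^{(\mu)}_b = e^{(\mu)}_b e^{(\mu)}_a + [e^{(\mu)}_a, e^{(\mu)}_b] = e^{(\mu)}_b e^{(\mu)}_a + \hbar(\mu) f_{ab}^c e^{(\mu)}_c$ by (\ref{hbarCommRel}) introduces exactly one explicit factor of $\hbar(\mu)$ together with a lower-degree monomial. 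Iterating, every discrepancy term carries at least one power of $\hbar(\mu)$, so $e^{(\mu)}_{(I_k)} e^{(\mu)}_{(J_m)} = e^{(\mu)}_{(I_k, J_m)} + \hbar(\mu) R$ where $R$ is a polynomial in $e^{(\mu)}_1, \dots, e^{(\mu)}_d$. Expanding $R$ in the chosen basis $E_1, \dots, E_D$ gives $R = \sum_i c_i(\hbar(\mu)) E_i$ with $c_i(\hbar) \in \mathbb{C}[\hbar(\mu)]$, since rewriting any polynomial in the $e^{(\mu)}$ in terms of the $E_i$ produces $\hbar$-polynomial coefficients (as recorded in the discussion preceding Lemma \ref{lemma3_2}). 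Summing against $f_{I_k} g_{J_m}$ yields $q_\mu(f) q_\mu(g) = q_\mu(fg) + \hbar(\mu) P$ with $P = \sum_i c_i(\hbar(\mu)) E_i \in T_\mu$.

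The main obstacle I anticipate is purely bookkeeping rather than conceptual: one must be careful that $q_\mu(fg)$ is genuinely $e^{(\mu)}_{(I_k,J_m)}$ and not some other ordering — i.e. that the completely symmetric coefficients of the product polynomial $fg$ are correctly matched — and that when we reorder $e^{(\mu)}_{(I_k)} e^{(\mu)}_{(J_m)}$ into the fully symmetrized form, all ``extra'' terms coming from the commutators are collected into $\hbar(\mu) P$, exactly as in the last line of the proof of the previous theorem (``all the extra terms from the reordering are included in the second term''). No operator-norm estimate is needed, so the $\tilde{O}$ statement is immediate from $P \in T_\mu$ being a fixed element (independent of any further limiting parameter) with polynomial $\hbar(\mu)$-dependence; hence $\hbar(\mu) P = \tilde{O}(\hbar(\mu))$ and, more precisely, writing $P$ itself as $\tilde{O}(\hbar^k(\mu))$ for the appropriate $k \ge 0$ determined by its lowest-order term completes the claim.
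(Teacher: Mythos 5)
Your proposal is correct and follows essentially the same reasoning as the paper, which disposes of this proposition with the single remark that it is evident because every reordering needed to pass from $q_\mu(f)q_\mu(g)$ to the symmetrized $q_\mu(fg)$ proceeds via commutators, each contributing a factor of $\hbar(\mu)$ (cf.\ the expansion used in the proof of Proposition \ref{prop_Kinji}). Your write-up simply makes explicit the monomial reduction, the role of the hypothesis $\deg f+\deg g\le n_\mu$ in avoiding the kernel of $q_\mu$, and the expansion of the remainder in the basis $E_i$ with $\hbar$-polynomial coefficients.
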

\bigskip


In the definition of weak matrix regularization, 
the restrictions on correspondence with the classical (commutative) limit are relaxed.
How to choose the limit that brings $\hbar$ close to zero, or so-called classical (commutative) limit, 
is a delicate matter. 
Indeed, these classical limits lead to various classical Poisson algebras \cite{Chu:2001xi,Sako:2022pid}. 
Therefore, some restrictions on the classical limit are not  discussed in this paper.
For example, quantization maps are not equipped with algebra homomorphism, 
but it is required to be homomorphic in the limit where $\hbar$ approaches zero in Appendix \ref{AppenB}. 
This paper does not focus on a rigorous proof of 
algebra homomorphism in the limit by using operator norm, 
but Proposition \ref{prop3_7} shows that $q_\mu$ has a similar property.
Furthermore, we shall discuss this issue briefly by reversing the direction of the mapping,
in the remainder of this section.
We need a little preparation.
First, let us introduce an enveloping algebra with variable $\hbar$.
\\

%

From a set  $X= \{  X_1  , \cdots , X_d \}$ we make free monoid $\langle X \rangle = \{1\} \cup \{X_{i_1} X_{i_2} \cdots X_{i_n} ~| ~ n \in \mathbb{N} ,  X_{i_j} \in X \}$
and  free ${\mathbb C}$-algebra 
$${\mathbb C} \langle X \rangle = \left\{ \sum_n \sum_{ {i_1}, {i_2}, \cdots ,{i_n} } a_{{i_1}, {i_2}, \cdots ,{i_n}}  X_{i_1} X_{i_2} \cdots X_{i_n} \right\}. $$

The usual enveloping algebra of $\mathfrak{g}$, $ \mathcal{U}_\mathfrak{g} $,
is an algebra of all polynomials of $X_1  , \cdots , X_d $ with relations
$ X_i  X_j  - X_j  X_i  \sim [X_i ~ , ~ X_j ]:= f_{ij}^k X_k $.
This is a canonical definition of enveloping algebra.
We introduce a slightly changed algebra.
Let $ \mathcal{U}_\mathfrak{g} [ \hbar ]$ be an
algebra $R \langle X \rangle  $ over $R={\mathbb C}[\hbar]$
divided by a two-sided ideal $I$ generated by
\begin{align}
&X_i   X_j  - X_j   X_i  -  [X_i ~ , ~ X_j ] := X_i   X_j  - X_j   X_i  - \hbar f_{ij}^k X_k .
\label{commute_U_h} \\
& I := \left\{ \sum_{i,j,k,l}  a_{k} ( X_i   X_j  - X_j   X_i  -  [X_i ~ , ~ X_j ] ) b_{l} ~ | ~ a_{k} , b_{l} \in R \langle X \rangle  \right\}.
\end{align}
 Note that $\hbar$ is introduced in the relation. Let us introduce the following.
\begin{align*}
\mathcal{U}_\mathfrak{g}[\hbar ]
:= R \langle X \rangle  / I .
%
\end{align*}
This enveloping algebra $\mathcal{U}_\mathfrak{g}[\hbar ]$ allows for
discussion of the rest of this section and the next section.\\
\bigskip

Every $E_i$, which is the basis of $T_\mu$, has an expression of 
\begin{align}\label{SymEi}
E_i = \sum_k^{n_\mu} 
E^{(i)}_{j_1, \cdots , j_k} e^{(\mu)}_{(j_1, \cdots , j_k)} ,
\end{align}
where $E^{(i)}_{i_1, \cdots , i_k} \in {\mathbb C}$ is completely symmetric coefficient.
(In this paper, Einstein summation convention is used, 
so the above expression means 
$E_i = \sum_k^{n_\mu} 
\sum_{ j_1, \cdots , j_k } 
E^{(i)}_{j_1, \cdots , j_k} e^{(\mu)}_{(j_1, \cdots , j_k)}.$)
This follows from the following Lemma,
which is essentially equivalent to a part of Poincar\'e - Birkhoff - Witt (PBW) theorem.
\begin{lemma}\label{lemma3_4}
Let $ \mathcal{U}_\mathfrak{g}[\hbar ] $ be the enveloping algebra of Lie algebra $\mathfrak{g}$
defined above. 
Then $\forall X_{i_1} X_{i_2} \cdots X_{i_k} \in \mathcal{U}_\mathfrak{g}[\hbar ] $ can uniquely 
be written as  
\begin{align}
X_{i_1} X_{i_2} \cdots X_{i_k} = X_{(i_1, \cdots ,i_k)} + \sum_{l=0}^{k-1} 
 a_{j_1, \cdots , j_l} X_{(j_1, \cdots ,j_l)},
\end{align}
where $a_{j_1, \cdots , j_l} \in \mathbb{C}$ is completely symmetric, and 
\begin{align*}
X_{(i_1, \cdots ,i_l)}:=
\frac{1}{l!}\sum_{\sigma \in Sym(l)}
X_{i_{\sigma(1)}} \cdots X_{ i_{\sigma(l)}} .
\end{align*}
\end{lemma}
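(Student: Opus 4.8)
The plan is to prove both the existence of the stated expression and its uniqueness by induction on the word length $k$, exactly in the spirit of the Poincar\'e--Birkhoff--Witt theorem, but tracking the $\hbar$-grading carefully since the commutator relation in $\mathcal{U}_{\mathfrak g}[\hbar]$ carries a factor of $\hbar$.

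For existence, I would induct on $k$. The cases $k=0,1$ are trivial since $X_{i}=X_{(i)}$ and the empty word is $1=X_{()}$. For the inductive step, take a word $X_{i_1}\cdots X_{i_k}$ and compare it with the fully symmetrized monomial $X_{(i_1,\dots,i_k)}$. Their difference is a $\mathbb{C}[\hbar]$-linear combination of words obtained from $X_{i_1}\cdots X_{i_k}$ by transpositions of adjacent letters; each such transposition, by the defining relation \eqref{commute_U_h}, replaces $X_iX_j$ by $X_jX_i - \hbar f_{ij}^k X_k$, i.e.\ it produces the swapped word plus an $\hbar$-multiple of a word of length $k-1$. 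Iterating, $X_{i_1}\cdots X_{i_k} - X_{(i_1,\dots,i_k)}$ is a $\mathbb{C}[\hbar]$-combination of words of length $\le k-1$. Applying the induction hypothesis to each of those shorter words, and collecting terms, yields an expression of the required form $X_{(i_1,\dots,i_k)} + \sum_{l=0}^{k-1} a_{j_1,\dots,j_l} X_{(j_1,\dots,j_l)}$ with $a_{j_1,\dots,j_l}\in\mathbb{C}$ (after setting $\hbar$ to its fixed value, or keeping it as a parameter in $\mathbb{C}[\hbar]$; either reading works, and the symmetry of the coefficients can be imposed since $X_{(j_1,\dots,j_l)}$ is already symmetric in its indices, so only the symmetric part of any coefficient array contributes).

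For uniqueness, the key point is that the symmetrized monomials $\{X_{(j_1,\dots,j_l)}\}_{l\ge 0}$, as $(j_1\le\cdots\le j_l)$ ranges over weakly increasing multi-indices, form a $\mathbb{C}[\hbar]$-basis of $\mathcal{U}_{\mathfrak g}[\hbar]$. This is a reformulation of the PBW theorem: the ordered monomials $X_{j_1}\cdots X_{j_l}$ with $j_1\le\cdots\le j_l$ form a basis (the PBW basis), and the change of basis from ordered monomials to symmetrized monomials is unitriangular with respect to the length filtration — $X_{(j_1,\dots,j_l)}$ equals the ordered monomial plus lower-length corrections — hence invertible over $\mathbb{C}[\hbar]$. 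Given a basis, the coefficients in any expansion are unique; restricting to the symmetric representatives removes the only remaining ambiguity. I would state that I invoke the PBW theorem for $\mathcal{U}_{\mathfrak g}[\hbar]$ here (the $\hbar$-deformed version is proved identically to the classical one, or can be deduced from it by specializing $\hbar$).

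The main obstacle is making the PBW input rigorous in the $\hbar$-graded setting without simply asserting it: one must check that the defining ideal $I$ is generated by relations compatible with a length filtration whose associated graded is the symmetric algebra $\mathbb{C}[\hbar][x_1,\dots,x_d]$, so that the Diamond Lemma (or a direct filtered-ring argument) applies. In practice, since the paper is content to cite PBW, I would keep this brief: note that $\mathcal{U}_{\mathfrak g}[\hbar]\otimes_{\mathbb{C}[\hbar]}\mathbb{C}$ (specializing $\hbar\mapsto 1$) is the ordinary enveloping algebra $\mathcal{U}_{\mathfrak g}$, for which PBW is standard, and that the length-filtration argument above is insensitive to the value of $\hbar$, so the unique-expansion property descends to $\mathcal{U}_{\mathfrak g}[\hbar]$ and then to any fixed numerical $\hbar(\mu)\ne 0$.
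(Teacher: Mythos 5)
Your proof is correct, and for the existence part it runs on the same engine as the paper's: an induction on word length in which the defining relation (\ref{commute_U_h}) converts each reordering into an $\hbar$-multiple of a shorter word. The decomposition differs in detail --- you compare an arbitrary word with its full symmetrization via adjacent transpositions, whereas the paper's inductive step multiplies an already symmetrized monomial $X_{(i_1,\cdots,i_l)}$ by one further generator and re-symmetrizes through the explicit identity (\ref{3_10}), absorbing all commutator terms into lower degree. Where you genuinely diverge is uniqueness: the paper disposes of it inside the induction with the phrase ``uniquely symmetrized by the assumption'' (plus the remark, just before the lemma, that the statement is essentially part of PBW), while you ground it in the PBW basis of ordered monomials together with the unitriangular, length-filtered change of basis to the symmetrized monomials. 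That is the more complete treatment, since uniqueness really amounts to linear independence of the symmetrized monomials, which the inductive rewriting alone does not deliver; the caveat you flag about establishing PBW for the $\hbar$-homogenized algebra (Diamond Lemma, a filtered-ring argument, or specialization --- noting that evaluation at a single value of $\hbar$ needs to be supplemented either by all nonzero values or by the grading in which $X_i$ and $\hbar$ both have degree one) is exactly the point that deserves the extra sentence. Your parenthetical on the coefficient ring is also apt: the commutators produce powers of $\hbar$, so over $R=\mathbb{C}[\hbar]$ the coefficients naturally lie in $\mathbb{C}[\hbar]$ rather than in $\mathbb{C}$ as the statement literally says, a feature shared by the paper's own proof.
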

\begin{proof}
We shall show by mathematical induction.
$k=1$ is trivial.
Assume that it is valid up to the $l$-th.
It is sufficient to show that it is true for $X_{(i_1, \cdots ,i_l)}X_{i_{l+1}}$.
\begin{align}
X_{(i_1, \cdots ,i_l)}X_{i_{l+1}}
=&\frac{1}{l!}\sum_{\sigma \in Sym(l)}
X_{i_{\sigma(1)}} \cdots X_{ i_{\sigma(l)}} X_{i_{l+1}}
\notag \\
=&\frac{1}{l+1}\frac{1}{l!}\sum_{\sigma \in Sym(l)}
\sum_{k=1}^{l+1}
\left\{
(X_{i_{\sigma(1)}} \cdots X_{ i_{\sigma(k-1)}}) X_{i_{l+1}}
(X_{i_{\sigma(k+1)}} \cdots X_{ i_{\sigma(l)}} ) X_{ i_{\sigma(k)}} 
\right.
\notag \\
& - (X_{i_{\sigma(1)}} \cdots X_{ i_{\sigma(k-1)}}) 
[ X_{i_{l+1}} ~, ~ (X_{i_{\sigma(k+1)}} \cdots X_{ i_{\sigma(l)}} ) ] X_{ i_{\sigma(k)}}  \notag \\
& -\left.
 (X_{i_{\sigma(1)}} \cdots X_{ i_{\sigma(k-1)}}) 
 [ (X_{i_{\sigma(k+1)}} \cdots X_{ i_{\sigma(l)}} )X_{i_{l+1}}  ~, ~X_{ i_{\sigma(k)}} ] ~ \right\} . \label{3_10}
\end{align}
The terms that contain $[~ ,~ ]$ can be written as expressions of degree $l$ or lower, and are uniquely symmetrized by the assumption. Therefore, it is sufficient to look at the first term on the right-hand side of (\ref{3_10}).
The first term is written 
by
\begin{align*}
\frac{1}{(l+1)!}\sum_{\sigma \in Sym(l+1)}
X_{i_{\sigma(1)}} \cdots X_{ i_{\sigma(l+1)}}
= X_{(i_1, \cdots ,i_{l+1})}.
\end{align*}
The lemma was thus proved.
\end{proof}
To replace $X_i$ by $e^{(\mu )}_i$ 
each base $E_i $ is expressed as (\ref{SymEi}).
Because $E_i $'s  expression that represents as (\ref{SymEi}) 
is not uniquely determined,
we fix one expression of each $E_i$ by some (\ref{SymEi}).
Then any $f = \sum_i a^i E_i \in T_\mu = \langle e^{(\mu)}\rangle$
is uniquely written by
\begin{align}
f 
=  \sum_k^{n_\mu}  \sum_i a^i 
E^{(i)}_{j_1, \cdots , j_k} e^{(\mu)}_{(j_1, \cdots , j_k)}
= \sum_k^{n_\mu} 
f^{(k)}_{(j_1, \cdots , j_k)} e^{(\mu)}_{(j_1, \cdots , j_k)},
\end{align}
where 
$\displaystyle f^{(k)}_{(j_1, \cdots , j_k)}= \sum_i a^i 
E^{(i)}_{j_1, \cdots , j_k}  \in R$.
Using this fixed expression of $\{ E_i \}$,
let us introduce a linear map
$\phi_\mu : T_\mu \rightarrow  A_\mathfrak{g}$ such that
\begin{align} \label{phi_mu}
\phi_\mu ( f )=  
\sum_k^{n_\mu} 
f^{(k)}_{(j_1, \cdots , j_k)} x_{j_1} \cdots x_{ j_k} 
\end{align}
for $\displaystyle f= \sum_k^{n_\mu} 
f^{(k)}_{(j_1, \cdots , j_k)} e^{(\mu)}_{(j_1, \cdots , j_k)}$. In short, for $e^{(\mu)}_{(i_1, \cdots , i_k)}$
in (\ref{SymEi})
\begin{align*}
\phi_\mu (  e^{(\mu)}_{(i_1, \cdots , i_k)}   )=  x_{i_1} \cdots x_{ i_k} .
\end{align*}
It is clear from this definition, that
\begin{align}
q_\mu \circ \phi_\mu  = Id  
\end{align}
is satisfied.
Using this map $\phi_\mu$, we compare how similar
the algebra $T_\mu$ and the Poisson algebra 
${\mathbb C}[x]$ as algebras. 
In this paper, ``asymptotic algebra homomorphic map'' is defined as follows.
\begin{definition}[Asymptotic algebra homomorphism]
\label{def_weak_MR_asym_hom}
Let $A$ be a fixed Poisson algebra. 
Consider a sequence of vector spaces $V^m~ (m=1,2, \cdots )$ 
and a sequence of subalgebras of $End (V^m)$ denoted by $B^m ~ (m=1,2, \cdots )$.
Consider a weak matrix regularization $\{ q_m : A \to B^m \subset End (V^m)\}$.
Each noncommutativity parameter $\hbar(q_m)$ is abbreviated as $\hbar(m)$.
Suppose that there exists a series of linear maps $\{ \phi_m :  B^m \to A \}$ 
and a series of basis of $B^m$ 
satisfying the following conditions.
\begin{enumerate}
\item $\{ \phi_m  (X_i ) | \mbox{$X_i$ is in the basis of $B^m$} \}$ is linearly independent, and each $\phi_m (X_i )$ is independent of $\hbar (m )$.
\item 
As $\dim V^m \rightarrow \infty $, 
the number of pairs of basis elements
$X_i, X_j \in B^m $ satisfying the following condition tends to infinity.
$$\phi_m (X_i X_j) =\phi_m (X_i ) \phi_m (X_j) + \tilde{O} (\hbar(m)).$$
\end{enumerate}
In this case, we say that $\phi_m$ is an asymptotic algebra homomorphism.
\end{definition}
The basic idea underlying this definition and the subsequent discussions is presented in \cite{fuzzy1}.
Condition 1 is imposed to eliminate trivial mappings 
$\phi_m$, such as the zero map.
\\

The difficulty in showing that
$\phi_\mu$, as defined by (\ref{phi_mu}), is
an asymptotic algebra homomorphism 
lies in the fact that, in the case of matrix representation, 
the symmetrized $E_i E_j$ is not always included in a basis.
To observe the property of asymptotic homomorphism of $\phi_\mu$, we further restrict the construction method of a basis 
$E_i (i=1,2, \cdots , D)$ as follows.
We shall name index multisets as 
$I^1_j = \{\!\!\{ j \}\!\!\}, I^2_j=\{\!\!\{  j_1 , j_2 \}\!\!\} , \cdots , I^k_j=\{\!\!\{ j_1 , \cdots ,  j_k \}\!\!\}$ and denote 
$e^{(\mu)}_{(j_1, \cdots , j_k)} $ by 
$e^{(\mu)}_{I^k_j} $. 
Note that a multiset differs from a set in that it distinguishes the degree of overlap of its elements.

A notation such as $I^k_j \sqcup  I^l_i :=\{\!\!\{ j_1 , \cdots ,  j_k ,  i_1 , \cdots ,  i_l \}\!\!\}$ is also used. 
\begin{enumerate}
\item We choose the unit matrix $Id $ and $E_{I_1^1}=e^{(\mu )}_1, E_{I_2^1}=e^{(\mu )}_2 , \cdots , 
E_{I^1_d} =e^{(\mu )}_d$ as a part of the basis.
For the sake of uniform description, we denote these as $E_{I^0}:= Id ~(d_0:=1) $ and $E_{I^1_i} (i=1,\cdots , d=:d_1)$.
\item From $e^{(\mu)}_{(1,1)},  e^{(\mu)}_{(1,2)}, \cdots , e^{(\mu)}_{(1,d)}, e^{(\mu)}_{(2,2)} ,e^{(\mu)}_{(2,3)} ,\cdots, e^{(\mu)}_{(2,d)},
\cdots , e^{(\mu)}_{(d,d)} $ ,
the maximum linearly independent terms of these
are selected and added to $Id=: E_{I^0} ~(d_0:=1) $ and $E_{I^1_i} (i=1,\cdots , d_1)$
to be used as the basis. Let $d_2$ be the number of them, and 
$E_{I^2_j} (j=1,\cdots , d_2)$ be the element of the basis. 
The index multiset ${I^2_j} (j=1,\cdots , d_2)$ is also fixed.
\item 
We choose the largest number of linearly independent elements from  $e^{(\mu )}_{I^1_i \sqcup I^2_j} 
(1\le i \le d_1, 1 \le j \le d_2)$, and 
if more linearly independent elements can be made from 
$e^{(\mu )}_{I^1_i \sqcup I^1_j  \sqcup I^1_k} 
(1\le i , j , k \le d_1)$, they are also added to the basis components.
We denote them as $E_{I^3_j} (j=1,\cdots , d_3)$.
Here $d_3$ is the maximum number of linear independent terms.
The fixed index multiset ${I^3_j} (j=1,\cdots , d_3)$
are elements of 
$\{ I^1_i \sqcup I^1_j  \sqcup I^1_k ~ |~ 1\le i,j,k \le d \}$.
\item Using $e^{(\mu )}_{I^2_k \sqcup I^2_j} 
(1 \le j , k\le d_2)$ and $e^{(\mu )}_{I^1_i \sqcup I^3_j} 
(1\le i \le d_1, 1 \le j \le d_3)$, 
we do the similar process.
If necessary, $e^{(\mu )}_{I^1_i \sqcup I^1_j  \sqcup I^1_k  \sqcup I^1_l} 
(1\le i , j , k , l\le d_1)$ is also taken into account in the 
elements of the basis as well.
$E_{I^4_j} (j=1,\cdots , d_4)$ are chosen. 
\item We repeat the same process as above until the basis of $T_\mu$ is completed. $d_0 + d_1 +\cdots + d_{n_\mu} = D$.
\end{enumerate}

The basis of $T_\mu$ can be constructed in this way. 
In other words, it can be constructed so that 
the index multiset that distinguishes the elements of a basis 
is the union of the index multisets of the other elements. 
This is the point of this method of constructing a basis.

The following asymptotic algebra homomorphisms are established in the correspondence of the multisets of their indices.
\begin{proposition}\label{prop_Kinji}
Let $E_{I^k_i} = e^{(\mu)}_{(i_1 , \cdots , i_k)}, E_{I^l_j} = e^{(\mu)}_{(j_1, \cdots , j_l)}
$ and $E_{I^k_i \sqcup  I^l_j }= e^{(\mu)}_{(i_1 , \cdots , i_k , j_1, \cdots , j_l)}$ be elements of basis of $ T_\mu$. 
Then the following is obtained.
\begin{align}
\phi_{\mu} ( E_{I^k_i} ) \phi_{\mu} ( E_{I^l_j}  ) = \phi_{\mu} (E_{I^k_i}  E_{I^l_j} ) +  \tilde{O}(\hbar({\mu}) ).
\label{3_12}
\end{align}
\end{proposition}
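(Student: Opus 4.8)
The plan is to reduce the identity to a Poincar\'e--Birkhoff--Witt reordering in $\mathcal{U}_\mathfrak{g}[\hbar]$ and then push that reordering through the linear map $\phi_\mu$.

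\textbf{Reduction.} By the way the basis was constructed, $E_{I^k_i}=e^{(\mu)}_{(i_1,\dots,i_k)}$, $E_{I^l_j}=e^{(\mu)}_{(j_1,\dots,j_l)}$ and $E_{I^k_i\sqcup I^l_j}=e^{(\mu)}_{(i_1,\dots,i_k,j_1,\dots,j_l)}$ are themselves chosen symmetric products, so the fixed expression of type (\ref{SymEi}) for each of them is the product itself, and hence (by the description of $\phi_\mu$ just after (\ref{phi_mu})) $\phi_\mu(E_{I^k_i})=x_{i_1}\cdots x_{i_k}$, $\phi_\mu(E_{I^l_j})=x_{j_1}\cdots x_{j_l}$ and $\phi_\mu(E_{I^k_i\sqcup I^l_j})=x_{i_1}\cdots x_{i_k}x_{j_1}\cdots x_{j_l}$. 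Since $A_\mathfrak{g}$ is commutative, the first two multiply to the third, so it suffices to prove $\phi_\mu(E_{I^k_i}E_{I^l_j}-E_{I^k_i\sqcup I^l_j})=\tilde O(\hbar(\mu))$.

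\textbf{PBW step.} Expanding the two symmetrizers and applying Lemma \ref{lemma3_4} to each resulting ordered monomial gives, in $\mathcal{U}_\mathfrak{g}[\hbar]$,
\[
X_{(i_1,\dots,i_k)}X_{(j_1,\dots,j_l)}=X_{(i_1,\dots,i_k,j_1,\dots,j_l)}+\sum_{m=0}^{k+l-1}a^{(m)}_{(c_1,\dots,c_m)}(\hbar)\,X_{(c_1,\dots,c_m)},
\]
where the coefficients $a^{(m)}(\hbar)$ depend only on the structure constants of $\mathfrak{g}$ and on $k,l$, not on $\mu$, and each is divisible by $\hbar$ (at $\hbar=0$ the algebra is the commutative polynomial ring, where the two symmetrized products agree). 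Applying the algebra homomorphism $\mathcal{U}_\mathfrak{g}[\hbar]\to T_\mu$, $\hbar\mapsto\hbar(\mu)$, $X_a\mapsto e^{(\mu)}_a$ (well defined by (\ref{hbarCommRel}), and sending symmetric products to symmetric products), I obtain
\[
E_{I^k_i}E_{I^l_j}-E_{I^k_i\sqcup I^l_j}=\hbar(\mu)\sum_{m=0}^{k+l-1}\tilde a^{(m)}_{(c_1,\dots,c_m)}(\hbar(\mu))\,e^{(\mu)}_{(c_1,\dots,c_m)},
\]
with $a^{(m)}=\hbar\,\tilde a^{(m)}$, the $\tilde a^{(m)}$ being $\mu$-independent polynomials in $\hbar(\mu)$, hence bounded as $\dim V^\mu\to\infty$. (This is essentially the content of Proposition \ref{prop3_7} for $x^{I^k_i}$ and $x^{I^l_j}$, legitimate because $k+l\le n_\mu$ since $E_{I^k_i\sqcup I^l_j}$ is in the basis.)

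\textbf{Error estimate, and the main obstacle.} Applying $\phi_\mu$ linearly to the last display leaves the error $\hbar(\mu)\sum_m\tilde a^{(m)}(\hbar(\mu))\,\phi_\mu(e^{(\mu)}_{(c_1,\dots,c_m)})$, so everything comes down to showing that each $\phi_\mu(e^{(\mu)}_{(c_1,\dots,c_m)})$ stays bounded (is $\tilde O(1)$) as $\dim V^\mu\to\infty$. I expect this to be the one real difficulty: since $m\le k+l-1<n_\mu$, the symmetric product $e^{(\mu)}_{(c_1,\dots,c_m)}$ lies in the span of the degree-$\le m$ basis elements, but when it is not itself a basis element it must be rewritten through relations among them (Casimir-type relations), which introduces $\hbar(\mu)$-dependent scalars such as $\hbar(\mu)^2$ times a Casimir eigenvalue. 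To control these I would use the scalings $e^{(\mu)}_{(c_1,\dots,c_m)}=\hbar(\mu)^m\rho^\mu(e_{(c_1,\dots,c_m)})$ and $E_{I^q_j}=\hbar(\mu)^q\rho^\mu(\cdots)$: the coefficient of a degree-$q$ basis element in the expansion of $e^{(\mu)}_{(c_1,\dots,c_m)}$ is then $\hbar(\mu)^{m-q}$ times a purely representation-theoretic ($\hbar$-free) constant, and the standing assumption of this section — $\hbar(\mu)\to0$ at a rate compatible with the divergence of the Casimir eigenvalues — keeps these quantities bounded; combined with the fact that each $\phi_\mu(E_p)$ is an $\hbar$-independent polynomial of bounded degree, this gives $\phi_\mu(e^{(\mu)}_{(c_1,\dots,c_m)})=\tilde O(1)$. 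Multiplying by the prefactor $\hbar(\mu)$ then yields $\tilde O(\hbar(\mu))$, which completes the argument.
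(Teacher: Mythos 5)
Your proposal reaches the right conclusion, but by a longer route than the paper, and its last step is only sketched. The paper's proof never leaves $T_\mu$: using (\ref{hbarCommRel}) and the $\hbar$-homogeneity of the generators ($e^{(\mu)}_i=\hbar(\mu)\rho^\mu(e_i)$), it expands the product directly in the \emph{fixed basis},
$E_{I^k_i}E_{I^l_j}=e^{(\mu)}_{(i_1,\cdots,i_k,j_1,\cdots,j_l)}-\sum_{m<k+l}\sum_n\hbar(\mu)^{k+l-m}c_{I^m_n}E_{I^m_n}$ with $\hbar$-independent $c_{I^m_n}$, so that $\phi_\mu$ applies term by term and the correction is manifestly a finite sum of $\hbar^{\ge 1}$-scalars times $\hbar$-independent monomials, i.e.\ $\tilde{O}(\hbar(\mu))$. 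You instead pass through $\mathcal{U}_\mathfrak{g}[\hbar]$ via Lemma \ref{lemma3_4} and a specialization homomorphism, which is legitimate (and is essentially how Proposition \ref{prop3_7} is organized), but it leaves the corrections as symmetrized products $e^{(\mu)}_{(c_1,\dots,c_m)}$ that need not be basis elements, so you must additionally evaluate $\phi_\mu$ on them through the fixed expressions (\ref{SymEi}) --- an issue the paper's one-step expansion avoids entirely.

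That extra step is where your argument is not yet complete. What makes it work is exactly the fact you state in passing: every symmetric product of degree $m\le n_\mu$ lies in the span of basis elements of degree $q\le m$ (this follows from the degree-by-degree maximality in the basis construction of Subsection \ref{3_2}), and by $\hbar$-homogeneity the coefficient of a degree-$q$ basis element is $\hbar(\mu)^{m-q}$ times an $\hbar$-independent (though $\mu$-dependent) constant; hence $\phi_\mu(e^{(\mu)}_{(c_1,\dots,c_m)})$ is a finite sum of nonnegative powers of $\hbar(\mu)$ times $\hbar$-independent monomials, and the prefactor $\hbar(\mu)$ gives $\tilde{O}(\hbar(\mu))$. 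You should prove this rather than appeal to boundedness as $\dim V^\mu\to\infty$ and to the balancing of $\hbar(\mu)$ against Casimir eigenvalues: the statement $\tilde{O}(\hbar(\mu))$ in this paper is a formal, per-$\mu$ statement about the $\hbar$-structure (scalar functions of order at least one in $\hbar$ attached to $\hbar$-independent vectors, Appendix \ref{ap1}); no uniformity in $\mu$ is required, and the Casimir-rate condition plays no role in this proposition. With that correction and the span-of-lower-degree-basis-elements fact made explicit, your route closes; the paper's direct in-basis expansion is simply shorter because it needs the homogeneity argument only once.
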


\begin{proof}
The left-hand side of (\ref{3_12}) is given as
\begin{align*}
\phi_{\mu} ( E_{I^k_i} ) \phi_{\mu} ( E_{I^l_j}  ) 
&=    x_{i_1} \cdots x_{ i_k} x_{j_1} \cdots x_{ j_l}. 
\end{align*}
On the other hand, by using (\ref{hbarCommRel}) $E_{I^k_i}  E_{I^l_j}$ is expressed as
\begin{align}
 E_{I^k_i}  E_{I^l_j}
 =
  e^{(\mu)}_{(i_1, \cdots i_k ,j_1, \cdots ,j_{l})} 
 -\sum_{m < k+l} \sum_{n=1}^{d_m}  \hbar(\mu)^{k+l-m} c_{I^m_n,(i_1, \cdots ,j_{l})}  E_{I^m_n}
 , 
\end{align}
where $ c_{I^m_n,(i_1, \cdots ,j_{l})} $ is a complex number that does not depend on $\hbar (\mu ) $.
Because 
$E_{I^k_i \sqcup I^l_j }= e^{(\mu)}_{(i_1 , \cdots , i_k , j_1, \cdots , j_l)}$
is contained in the basis, (\ref{3_12})  leads to the following equation;
\begin{align*}
\phi_{\mu} (  E_{I^k_i}  E_{I^l_j} )
&= 
\phi_\mu ( e^{(\mu)}_{(i_1, \cdots i_k ,j_1, \cdots ,j_{l})} - \sum_{m < k+l} \sum_{n=1}^{d_m}  \hbar(\mu)^{k+l-m} c_{I^m_n,(i_1, \cdots ,j_{l})}  E_{I^m_n} )\\
&= 
x_{i_1} \cdots x_{ i_k} x_{j_1} \cdots x_{ i_l}
 - \sum_{m < k+l} \sum_{n=1}^{d_m}  \hbar(\mu)^{k+l-m} c_{I^m_n,(i_1, \cdots ,j_{l})} x_{n_1} \cdots x_{n_m}
 .
\end{align*}
Since $\sum_{m < k+l} \sum_{n=1}^{d_m} $ is a finite sum, 
the following conclusion is obtained.
\begin{align}
\phi_{\mu} ( E_{I^k_i} ) \phi_{\mu} ( E_{I^l_j}  ) = \phi_{\mu} (E_{I^k_i}  E_{I^l_j} ) +  \tilde{O}(\hbar({\mu}) ).
\end{align}
\end{proof}

More generally, this proposition can be generalized as follows.
\begin{proposition}\label{prop_Kinji_general}
Let $E_{I^{p_1}_i}, E_{I^{p_2}_j}, \cdots, E_{I^{p_m}_k}
$ and $E_{I^{p_1}_i \sqcup \cdots  \sqcup I^{p_m}_k }$ be elements of basis of $ T_\mu$.
Then the following is obtained
\begin{align}
\phi_{\mu} ( E_{I^{p_1}_i} ) \cdots \phi_{\mu} ( E_{I^{p_m}_k}  ) 
= \phi_{\mu} (E_{I^{p_1}_i}  \cdots  E_{I^{p_m}_k} ) +  \tilde{O}(\hbar({\mu}) ) .
\end{align}
\end{proposition}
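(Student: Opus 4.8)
The plan is to prove Proposition \ref{prop_Kinji_general} by induction on $m$, reducing the general case to repeated application of Proposition \ref{prop_Kinji}. The base case $m=2$ is exactly Proposition \ref{prop_Kinji}, so nothing is needed there. For the inductive step, I would assume the statement holds for any product of $m-1$ basis elements whose combined index multiset also labels a basis element, and try to peel off the last factor $E_{I^{p_m}_k}$.

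First I would write $\phi_\mu(E_{I^{p_1}_i})\cdots\phi_\mu(E_{I^{p_m}_k}) = \left(\phi_\mu(E_{I^{p_1}_i})\cdots\phi_\mu(E_{I^{p_{m-1}}_j})\right)\phi_\mu(E_{I^{p_m}_k})$, and here is where a subtlety appears: to invoke the induction hypothesis on the first $m-1$ factors I need $E_{I^{p_1}_i\sqcup\cdots\sqcup I^{p_{m-1}}_j}$ to be a basis element. This is guaranteed by the way the basis was constructed — the construction was designed precisely so that the index multiset distinguishing a basis element $E_{I^k}$ of degree $k$ is a union of index multisets of lower-degree basis elements, and this propagates: if $E_{I^{p_1}_i\sqcup\cdots\sqcup I^{p_m}_k}$ is in the basis, the intermediate partial unions are as well. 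I would state this as the key structural observation drawn from the five-step construction preceding Proposition \ref{prop_Kinji}. With that in hand, the induction hypothesis gives $\phi_\mu(E_{I^{p_1}_i})\cdots\phi_\mu(E_{I^{p_{m-1}}_j}) = \phi_\mu(E_{I^{p_1}_i}\cdots E_{I^{p_{m-1}}_j}) + \tilde{O}(\hbar(\mu))$.

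Next I would multiply both sides on the right by $\phi_\mu(E_{I^{p_m}_k})$. The error term $\tilde{O}(\hbar(\mu))$, when multiplied by the fixed polynomial $\phi_\mu(E_{I^{p_m}_k})$ which is independent of $\hbar(\mu)$, remains $\tilde{O}(\hbar(\mu))$. Then I apply Proposition \ref{prop_Kinji} to the two-factor product $\phi_\mu(E_{I^{p_1}_i\sqcup\cdots\sqcup I^{p_{m-1}}_j})\,\phi_\mu(E_{I^{p_m}_k})$ — noting that $E_{I^{p_1}_i\sqcup\cdots\sqcup I^{p_{m-1}}_j}$ is a genuine basis element (degree $p_1+\cdots+p_{m-1}$) and its union with $I^{p_m}_k$ labels the basis element $E_{I^{p_1}_i\sqcup\cdots\sqcup I^{p_m}_k}$ — to get $\phi_\mu(E_{I^{p_1}_i\sqcup\cdots\sqcup I^{p_{m-1}}_j})\,\phi_\mu(E_{I^{p_m}_k}) = \phi_\mu(E_{I^{p_1}_i\sqcup\cdots\sqcup I^{p_{m-1}}_j}\cdot E_{I^{p_m}_k}) + \tilde{O}(\hbar(\mu))$. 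Finally, using $q_\mu\circ\phi_\mu = \mathrm{Id}$ together with the relation $E_{I^{p_1}_i}\cdots E_{I^{p_{m-1}}_j} = e^{(\mu)}_{(i_1,\cdots,j_{m-1})} + (\text{higher order in }\hbar)$ from the expansion used in Proposition \ref{prop_Kinji}, I would identify $\phi_\mu(E_{I^{p_1}_i}\cdots E_{I^{p_{m-1}}_j})$ with $\phi_\mu(E_{I^{p_1}_i\sqcup\cdots\sqcup I^{p_{m-1}}_j})$ up to $\tilde{O}(\hbar(\mu))$, and chain the estimates, collecting all error terms into a single $\tilde{O}(\hbar(\mu))$ since the sum is finite.

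The main obstacle I anticipate is purely bookkeeping rather than conceptual: making sure that every intermediate object I apply the induction hypothesis or Proposition \ref{prop_Kinji} to is actually a chosen basis element, and that the $\hbar$-independence of the $\phi_\mu$-images (Condition 1 of Definition \ref{def_weak_MR_asym_hom}) is genuinely used when absorbing products of error terms with polynomials. I would be careful to phrase the "partial unions of a basis-labelling multiset are themselves basis-labelling" claim explicitly, since it is the one place the specific construction of $\{E_i\}$ is essential and where a careless induction could quietly fail. Everything else — distributing $\tilde{O}$, using $q_\mu\circ\phi_\mu=\mathrm{Id}$, finiteness of sums — is routine.
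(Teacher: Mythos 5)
Your overall strategy (induction on $m$, peeling off one factor and invoking Proposition \ref{prop_Kinji}) is different from the paper's, and it has a genuine gap at exactly the point you flagged: the claim that if $E_{I^{p_1}_i\sqcup\cdots\sqcup I^{p_m}_k}$ is in the basis then the intermediate partial unions also label basis elements. This is not guaranteed by the five-step construction. That construction only ensures that the index multiset of each chosen basis element admits \emph{some} decomposition into basis-labeling multisets; it does not ensure compatibility with the particular factorization $I^{p_1}_i,\dots,I^{p_m}_k$ given in the hypothesis, and at each degree the linearly dependent symmetrized products are discarded. For instance, in the $\mathfrak{su}(2)$ example of Section \ref{sect5}, $\{\!\!\{3,3\}\!\!\}$ is not a basis multiset at degree $2$ (since $e^{(\mu)}_{(1,1)}+e^{(\mu)}_{(2,2)}+e^{(\mu)}_{(3,3)}$ is proportional to the identity), yet step 3 of the construction explicitly allows degree-$3$ elements of the form $e^{(\mu)}_{I^1_i\sqcup I^1_j\sqcup I^1_k}$ to be adjoined, so a basis multiset such as $\{\!\!\{3,3,3\}\!\!\}$ can occur while its two-element sub-multiset does not. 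In that situation neither the induction hypothesis for $m-1$ factors nor Proposition \ref{prop_Kinji} applied to the peeled pair is available (its hypothesis requires the union of the two index multisets to label a basis element), and no reordering of factors repairs it. So the step "guaranteed by the way the basis was constructed" is the missing (and in general unavailable) ingredient.

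The paper avoids this entirely by proving the $m$-factor statement directly, exactly as in the proof of Proposition \ref{prop_Kinji}: using the commutation relations (\ref{hbarCommRel}), expand
$E_{I^{p_1}_i}\cdots E_{I^{p_m}_k}= e^{(\mu)}_{(\text{full index multiset})}-\sum \hbar(\mu)^{>0}\, c\, E_{I^q_n}$,
where the $c$'s are $\hbar$-independent; only the \emph{full} union needs to be a basis element, which is precisely the hypothesis, and applying the linear map $\phi_\mu$ term by term yields the product of monomials plus $\tilde{O}(\hbar(\mu))$. Your bookkeeping about multiplying $\tilde{O}(\hbar(\mu))$ terms by $\hbar$-independent polynomials and about finiteness of the sums is fine; to make the argument correct you should abandon the reduction to the two-factor case and run this direct expansion on the full product instead.
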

The proof is obtained in the same way as the proof of Proposition \ref{prop_Kinji}.\\

\begin{rem}
Consider the case where the dimension ${\rm dim} V^{\mu}$ of the representation space $V^\mu$ increases.
Suppose that the representation $\rho^\mu$ is an irreducible representation. 
In this case, the dimension of the algebra $T_\mu$ generated by the basis of the Lie algebra, 
i.e., the number of $E_1, ..., E_D$, increases.
Therefore, the number of pairs of elements of the basis that satisfy the above Proposition \ref{prop_Kinji} also increases.
In the limit ${\rm dim} V^{\mu} \rightarrow \infty$, we can understand $\mathbb{C}[x]$ as an approximation to
the algebra $T_\mu$ in the sense that the number of pairs of elements satisfying 
the above Proposition \ref{prop_Kinji} increases infinitely.
\end{rem}

\begin{rem}
Using this $\phi_\mu$, we can write the necessary condition that
a sequence of quantizations 
$\{ q_\mu : A_{\mathfrak{g}} \to T_\mu \subset End (V^\mu)\}$ is a weak matrix regularization
as
$$\phi_\mu ([q_{\mu} ( f ) , q_{\mu} ( g )])
= \hbar ({\mu}) \phi_\mu (q_{\mu} ( \{ f , g \} ) )+ \tilde{O}(\hbar^2(\mu)) $$
for any $f, g \in A_{\mathfrak{g}}$.
The existence of this asymptotic algebra homomorphism map 
$\phi_\mu$ clarifies that the quantization correctly reflects noncommutativity.
There exists an $\hbar (\mu)$ in the image of $q_\mu$ that is derived from the degree of the polynomial 
in $ A_{\mathfrak{g}}$.
On the other hand, in the image of this $\phi_\mu$, only the $\hbar (\mu)$ 
derived from commutators or, due to the finite dimensionality of the matrix, the 
$\hbar (\mu)$ that appears only from higher-degree terms exceeding the matrix dimension 
(i.e., the $\hbar (\mu)$ caused by Lemma \ref{lemma3_2}) appears.
The $\hbar (\mu)$ that appears in Lemma 3.2 originates from higher-order terms corresponding to monomials that are mapped outside the adaptive part of $gl(V^\mu)$.
Therefore, such terms are ignored in the commutative 
limit as well as higher-order of noncommutativity.
\end{rem}

\bigskip

Since the quantization by $q_\mu$ is defined as $q_\mu (x_i ) = e^{(\mu )}_i = \hbar (\mu ) \rho^\mu (e_i)$, 
the degree of $\hbar$ changes depending on the degree of the monomial.
As already pointed out in the case of the Fuzzy sphere by Chu, Madore, and Steinacker \cite{Chu:2001xi}, 
the classical (commutative) limit of matrix regularization changes 
its result by fixing the ratio of $1/\hbar$ and $\dim V^\mu$.
The Casimir operator is a quantity that has one typical feature that is consistent with a Lie algebra and a corresponding Lie-Poisson algebra.\\

The eigenvalues of the Casimir operator are determined by the representation.
When the dimension of the representation space becomes large, 
in the case of the fuzzy sphere,
the eigenvalues of the Casimir operator also increase, and one reduces 
$\hbar$ so as to counterbalance this.
In other words, the property that the limit in which the dimension of the representation space diverges
yields a commutative limit is an expected feature in ordinary matrix models.
This property of yielding a commutative limit is an expected feature in ordinary matrix models.
Although this requirement is not necessary merely to satisfy the definition of quantization,
we will nevertheless make reference to it here.
Assume that there are 
$d_c$ independent $k$th-degree Casimir operators $C_{k i} (i=1,2, \cdots , d_c)$.
We require that the Lie algebra admit a sequence of irreducible representations 
$\{ V^\mu \}$ whose dimension increases,
and that the eigenvalues of the 
$k$th-degree Casimir operator diverge; in other words, that the following condition be satisfied.
\begin{align}
C_{k i} = \Lambda^k_i (V^\mu) Id_\mu , \quad
\lim_{\dim V^\mu \to \infty} |\Lambda^k_i (V^\mu) | = \infty ,
\quad (i=1,2, \cdots , d_c),
\label{okubo} 
\end{align}
where $Id_\mu$ is the unit matrix in $gl(V\mu)$ and
$\Lambda^k_i (V^\mu) $ is the eigenvalue of $C_{ki}$.
This condition is naturally satisfied when the Lie algebra is semisimple.
This formula corresponds to the eigenvalue of Casimir operator
that does not depend on $\hbar$.
Casimir polynomials of $A_\mathfrak{g}$ are defined by
\begin{align}
\{ x_i , f(x) \} = 0~ ( i = 1, \cdots , d ) ,
\end{align}
and we denote the set of all  Casimir polynomials of $A_\mathfrak{g}$ by $CaP$.
\begin{proposition}
For any $k$th-degree Casimir polynomial 
$ f^C_k  \in CaP$, $\hbar^k (\mu )C_k := q_\mu  (f^C_k ) \in T_\mu$ 
 is a Casimir operator.
\end{proposition}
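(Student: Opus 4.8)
The plan is to realize $q_\mu$ on a homogeneous Casimir polynomial as a scalar multiple of the image under $\rho^\mu$ of a symmetrized element of the enveloping algebra, and then to derive the centrality of that element from the $\mathfrak{g}$-equivariance of symmetrization together with $\{x_j,f^C_k\}=0$. First I would observe that the Lie--Poisson operator $\{x_j,\,\cdot\,\}$ preserves the polynomial degree (it acts by the Leibniz rule and sends each $x_i$ into $\mathrm{span}_{\mathbb{C}}\{x_1,\dots,x_d\}$), so every homogeneous component of a Casimir polynomial is again a Casimir polynomial, and it suffices to prove the claim for a homogeneous $f^C_k$ of degree $k$. If $k>n_\mu$ then $q_\mu(f^C_k)=0$ by Definition \ref{def_FuzzyQ}, and the zero operator is trivially a Casimir operator, so I may assume $k\le n_\mu$.

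Next I would write $f^C_k=\sum_{|I|=k} f_I\, x^I$ with $f_I\in\mathbb{C}$ completely symmetric, and recall from Lemma \ref{lemma3_4} the symmetrization $\beta:\mathbb{C}[x]\to\mathcal{U}_\mathfrak{g}$, $\beta(x_{i_1}\cdots x_{i_k})=X_{(i_1,\dots,i_k)}$. Since $e^{(\mu)}_i=\hbar(\mu)\,\rho^\mu(e_i)$ and $\rho^\mu$ extends to an algebra homomorphism $\mathcal{U}_\mathfrak{g}\to gl(V^\mu)$, Definition \ref{def_FuzzyQ} gives
\begin{align*}
q_\mu(f^C_k)=\sum_{|I|=k} f_I\, e^{(\mu)}_{(I)}=\hbar^k(\mu)\sum_{|I|=k} f_I\,\rho^\mu\!\big(X_{(I)}\big)=\hbar^k(\mu)\,\rho^\mu\!\big(\beta(f^C_k)\big).
\end{align*}
Thus the proposition reduces to showing that $\beta(f^C_k)$ is central in $\mathcal{U}_\mathfrak{g}$: then $q_\mu(f^C_k)$ commutes with every $e^{(\mu)}_j=\hbar(\mu)\rho^\mu(e_j)$, hence with all of $T_\mu=\langle e^{(\mu)}\rangle$, and irreducibility of $\rho^\mu$ forces it to be a scalar multiple of $Id_\mu$ --- i.e.\ a Casimir operator, with the $\hbar$-free part $C_k=\rho^\mu(\beta(f^C_k))$.

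For the centrality I would establish the identity $[X_j,\beta(f)]=\beta(\{x_j,f\})$ for all $f\in\mathbb{C}[x]$. On the one hand $\{x_j,\,\cdot\,\}$ is the unique derivation of $\mathbb{C}[x]$ with $\{x_j,x_i\}=f_{ji}^k x_k$; on the other hand $[X_j,\,\cdot\,]$ is a derivation of $\mathcal{U}_\mathfrak{g}$ with $[X_j,X_i]=f_{ji}^k X_k$, so applying it to $\beta(x_{i_1}\cdots x_{i_k})=X_{(i_1,\dots,i_k)}$ and grouping the resulting terms according to which factor is acted on (averaging the arrangements, exactly as in the proof of Lemma \ref{lemma3_4}) yields $[X_j,X_{(i_1,\dots,i_k)}]=\sum_{q=1}^{k} f_{j i_q}^{\,m}\,\beta\big(x_m\textstyle\prod_{r\neq q} x_{i_r}\big)=\beta\big(\{x_j,\,x_{i_1}\cdots x_{i_k}\}\big)$; equivalently, this is the standard $\mathfrak{g}$-equivariance of the PBW symmetrization map. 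Applying the identity to $f=f^C_k$ and using $\{x_j,f^C_k\}=0$ gives $[X_j,\beta(f^C_k)]=0$ for every $j$, which is the required centrality.

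I expect the only genuinely delicate step to be the equivariance identity $[X_j,\beta(f)]=\beta(\{x_j,f\})$: since $\beta$ is merely linear and not an algebra map, one cannot factor it through products, so the identity has to be verified on the symmetrized monomials $X_{(i_1,\dots,i_k)}$ with the same combinatorial bookkeeping used in Lemma \ref{lemma3_4}. I would also flag, as a sanity check, that the weak matrix regularization estimate of Proposition \ref{sect5_thm_2} only gives $[e^{(\mu)}_j,q_\mu(f^C_k)]=\tilde{O}(\hbar^{k+2}(\mu))$, which is not strong enough; the \emph{exact} vanishing genuinely relies on the equivariance of $\beta$ rather than on the asymptotic quantization property.
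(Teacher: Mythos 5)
Your proposal is correct and follows essentially the same route as the paper: the paper proves this proposition by appealing to the exact degree-one identity in the first half of Proposition \ref{prop_qu} (the Abellanas--Alonso/Dixmier equivariance of symmetrization, $[q_U(x_i),q_U(g)]=\hbar\, q_U(\{x_i,g\})$), which is precisely the identity $[X_j,\beta(f)]=\beta(\{x_j,f\})$ you establish and then apply with $\{x_j,f^C_k\}=0$ before pushing through $\rho^\mu$. Your additional remarks (the trivial case $k>n_\mu$, Schur's lemma for irreducible $\rho^\mu$, and the observation that the $\tilde{O}(\hbar^2)$ estimate of Proposition \ref{sect5_thm_2} would not suffice and the exact identity is needed) are consistent with the paper's treatment.
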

This has already been given in \cite{AA}, albeit for enveloping algebras.
Also, the proof of the first half of Proposition \ref{prop_qu} corresponds to that proof, 
so you may refer to it there.
Since (\ref{okubo}), 
$\lim_{\dim V^\mu \rightarrow \infty} |C_{k} | = \infty$. 
So, if $\hbar$ is chosen such that the absolute value of the eigenvalue of $q_\mu  (f^C_k ) $
\begin{align}
{ |C_k| }{|\hbar (\mu ) |^{k}}  \label{hbar_ratio}
\end{align}
is fixed,
then the correspondent $k$th-degree Casimir polynomial 
survives under the
$\hbar(\mu) \rightarrow 0 , {\dim} V^\mu \rightarrow \infty$.
For later convenience, we will uniquely determine $\hbar(\mu)$ as follows.
Let $C^k_i(x) = \sum_J C_{i, J} x^{J}= \sum_J C_{i , J} x_{j_1}\cdots x_{j_k} (i=1,2, \cdots , d_c)$ be linearly independent  $k$th-degree Casimir polynomials.
We define $C^k_i( e^\mu) $ by
\begin{align}
C^k_i( e^\mu) := q_\mu ( C^k_i(x) )
= \sum_J C_{i ,J}  e^{(\mu)}_{(j_{1} , \cdots , j_k)}
=
\frac{\hbar^k (\mu )}{k!}
\sum_J C_{i , J} 
\sum_{\sigma \in Sym(k)}
 \rho^\mu (e_{j_{\sigma(1)}}) \cdots \rho^\mu (e_{j_{\sigma(k)}}).
\end{align}
From Schur's lemma, this Casimir operator is proportional to the unit matrix for semisimple Lie algebras.
Condition (\ref{okubo}) is imposed to account for the possibility 
that the Lie algebra is not semisimple.
Under  (\ref{okubo}), it can be fixed to any one eigenvalue $\lambda^k_i \in {\mathbb C}$ 
of the matrix $C^k_i( e^\mu)$
by determining the sequence of $\hbar(\mu)$ 
and $V^\mu$ appropriately;
\begin{align} \label{fixed_casimir_relation}
C^k_i( e^\mu) = \hbar^k (\mu ) C_{k i}
= \lambda^k_i ,
\end{align}
for any $q_\mu : A_\mathfrak{g} \to V^\mu $.
This $\lambda^k_i$ depends on $ \hbar (\mu )$
and $\dim V^\mu$, but it does not depend on $\mu$ by this definition.
As such, a single sequence $\hbar(\mu)$, 
defined so as to fix a given eigenvalue 
may also simultaneously determine multiple eigenvalues. 
We denote the number of such eigenvalues by $l$, 
and in what follows, we consider the set of Casimir operators 
$$
\{ C^{k}_1(e^{(\mu)}),  C^{k}_2(e^{(\mu)}), \cdots , C^{k}_l(e^{(\mu)}) \}
$$
whose eigenvalues are fixed simultaneously, as well as the corresponding set of Casimir polynomials
$$
\{ C^{k}_1(x),  C^{k}_2(x), \cdots , C^{k}_l(x) \}.
$$
The Casimir operators whose eigenvalues are fixed simultaneously 
are not necessarily limited to those of degree $k$.
Therefore, it might be more appropriate to denote by $C^{k(i)}_i (i=1,2, \cdots , l)$ 
as the Casimir operator of degree $k(i)$ above. 
However, for simplicity, we will denote them by 
$C^{k}_i$ in what follows.
This fixed sequence $\hbar(\mu)$  implies that equation
(\ref{fixed_casimir_relation}) approaches 
$C^k_i(x)= \lambda^k_i ~(i=1,2,\cdots,l)$ in the limit
$\hbar(\mu) \rightarrow 0 ,~ {\dim} V^\mu \rightarrow \infty$.
This is the reason for describing varieties in this paper 
in terms of Casimir polynomials
of degree $k$.\\

We comment on this eigenvalue (\ref{fixed_casimir_relation}).
By ``the eigenvalue is fixed'', we mean that it is the same value for any 
$q_\mu$ in the series of matrix regularization $\{q_\mu\}$.
$\lambda^k_i $ is still a polynomial of degree $k$ in $\hbar(\mu)$ 
and is still an $\tilde{O}(\hbar^k(\mu))$.\\

\section{ (Weak) Matrix regularization for Lie-Poisson varieties} \label{sect4}

A Casimir polynomial $f(x) \in A_\mathfrak{g}$ is defined by
$
\{ x_i , f(x) \} = 0~ ( i = 1, \cdots , d ) ,
$
and we denote the set of all  Casimir polynomials of $A_\mathfrak{g}$ by $CaP$.
Consider a vector space $C \subset CaP$ whose basis is $\{ f_i^C \} $, i.e.,
$C:= \{ \sum_i a_i f_i^C ~|~ a_i \in  \mathbb{C},  ~f_i^C \in CaP \}$.
We introduce an ideal of $A_\mathfrak{g} $ generated by $C$ as
\begin{align}
I(C) := \left\{ \sum f_i^C(x) g_i(x) \in  A_\mathfrak{g} ~ | ~ 
f_i^C(x) \in C, ~
g_i(x) \in  A_\mathfrak{g} ~ \right\} .
\end{align}
This ideal is compatible with the Poisson structure because $\{ x_i , f_j^C(x) \}=0$.
So, we can introduce the new Poisson algebra as follows:
\begin{align}
A_\mathfrak{g}  / I(C) := \left\{ [f(x)]  ~ \mid ~ f(x) \in A_\mathfrak{g} 
\right\} ,
\end{align}
where $ [f(x)] = \{ f(x) + h(x) ~\mid h(x) \in I(C) \} $, and the sum and multiplication are defined
as $[f(x)]+[g(x)]= [f(x)+g(x)]$ and $[f(x)]\cdot [g(x)] = [f(x)\cdot g(x)]$.
The Poisson bracket is also defined by 
as $$ \{ [f(x)] , [g(x)] \} := [ \{ f(x) , g(x) \}] .$$
We abbreviate this Poisson algebra $(A_\mathfrak{g}  / I(C) , \cdot  , \{ ~,~ \})$ as $A_\mathfrak{g}  / I(C)$.
In this section, we formulate the quantization of this Poisson algebra by means of weak matrix regularization.

\begin{rem}
In this paper, we call $A_\mathfrak{g}  / I(C)$ itself or 
the variety defined by $I(C)$  ``Lie-Poisson variety''.
However, it does not mean $I(C)$ is a prime ideal.
We use the term variety to mean an algebraic variety defined by $f_i^C(x) =0$.
\end{rem}
\bigskip


The quotient and remainder of a multivariate polynomial cannot be uniquely determined in general. However, after choosing a monomial ordering, such as the lexicographic order, and thereby inducing an ordering on multivariate polynomials, the remainder of a polynomial can be uniquely defined.

It is known that if we fix the ordering every ideal $I$ has a unique reduced Gr\"obner basis,
and  the following fact is known. 
(See the Appendix  \ref{appendix_grobner} for definitions and necessary information on the Gr\"obner basis.).
\begin{theorem}\label{reducedGrobner} (See for example \cite{Dumniit_Foote_Abstract Algebra,{Cox_Little_Oshea}}.)
Fix a monomial ordering on $K [x_1, \dots , x_n]$.
\begin{enumerate}
\item Every ideal $I \subset  K [x_1, \dots , x_n]$ has a unique reduced Gr\"obner basis.
\item  Let ${g_1, \dots , g_m}$ be the Gr\"obner 
basis for the ideal $I$ in $K [x_1, \dots , x_n]$. 
Then every polynomial $f \in K [x_1, \dots , x_n]$ can be written uniquely in the form 
\begin{align}
f= h + r \label{IplusG}
\end{align}
where $h  \in  I$ and no monomial term of the $r$ is divisible by any $LT(g_i)$.
\end{enumerate}

\end{theorem}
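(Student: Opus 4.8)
The statement to be proved is Theorem \ref{reducedGrobner}, the existence and uniqueness of reduced Gröbner bases together with the uniqueness of the remainder on division by such a basis. Since this is a standard result, the plan is to assemble it from the basic machinery of Gröbner bases, which I will take as available: the notion of a monomial ordering, leading terms $LT(g)$, the multivariate division algorithm, Dickson's lemma, and the characterization of a Gröbner basis $\{g_1,\dots,g_m\}$ of $I$ by the property $\langle LT(g_1),\dots,LT(g_m)\rangle = \langle LT(I)\rangle$ (the leading-term ideal). Hilbert's basis theorem guarantees $I$ is finitely generated, and running Buchberger's algorithm (or invoking Dickson's lemma on the monomial ideal $\langle LT(I)\rangle$) produces at least one Gröbner basis, so existence of \emph{some} Gröbner basis is in hand before the refinement to a reduced one.

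For part (1), the plan is in two steps. First, \textbf{minimality}: starting from any finite Gröbner basis, discard any $g_i$ whose leading term $LT(g_i)$ is divisible by $LT(g_j)$ for some $j \neq i$; what remains is still a Gröbner basis because the leading-term ideal is unchanged, and one may rescale each remaining generator so that $LT(g_i)$ is monic. Second, \textbf{reduction}: for each $g_i$ in a minimal Gröbner basis, replace $g_i$ by the remainder $g_i'$ of $g_i$ on division by the other generators; one checks $LT(g_i') = LT(g_i)$ (the division cannot touch the leading term, since no other leading term divides it by minimality), so the set stays a minimal Gröbner basis, and after doing this for all $i$ the resulting basis is reduced — no monomial of any $g_i$ is divisible by any $LT(g_j)$, $j\neq i$. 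For \textbf{uniqueness}, suppose $G = \{g_1,\dots,g_m\}$ and $G' = \{g_1',\dots,g_m'\}$ are both reduced Gröbner bases of $I$. By minimality the two sets of leading terms coincide (each generates $\langle LT(I)\rangle$ minimally, and the minimal monomial generating set of a monomial ideal is unique), so after matching them up $LT(g_i) = LT(g_i')$. Then $g_i - g_i' \in I$, so its leading term lies in $\langle LT(I)\rangle$ and is divisible by some $LT(g_j) = LT(g_j')$; but every monomial appearing in $g_i$ or $g_i'$ other than their common leading term is, by the reduced condition, \emph{not} divisible by any such $LT(g_j)$, and the leading terms cancel — so $g_i - g_i' = 0$.

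For part (2), \textbf{existence} of the decomposition $f = h + r$ is just the multivariate division algorithm applied to $f$ with divisors $g_1,\dots,g_m$: it terminates with a quotient expression $f = \sum q_i g_i + r$ where no term of $r$ is divisible by any $LT(g_i)$, and we set $h = \sum q_i g_i \in I$. \textbf{Uniqueness} is the one genuinely Gröbner-specific point: if $f = h + r = h' + r'$ with $h,h' \in I$ and no term of $r$ or $r'$ divisible by any $LT(g_i)$, then $r - r' = h' - h \in I$, so if $r - r' \neq 0$ its leading term lies in $\langle LT(I)\rangle = \langle LT(g_1),\dots,LT(g_m)\rangle$ and hence is divisible by some $LT(g_i)$ — contradicting that no term of $r$ nor of $r'$ has that property. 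Therefore $r = r'$ and $h = h'$.

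The only real obstacle is conceptual rather than computational: the proof of uniqueness in both parts rests entirely on the defining property of a Gröbner basis, namely that $LT(I)$ is generated by the $LT(g_i)$, so the argument must be careful to invoke that property (and not merely "$G$ generates $I$") at exactly the points where a leading term of an element of $I$ is claimed divisible by some $LT(g_i)$. Everything else — termination of division, Dickson's lemma, Hilbert's basis theorem — is standard and would simply be cited.
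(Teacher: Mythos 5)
Your proposal is correct: it is the standard textbook argument (existence via Dickson/Buchberger, passage to a minimal then reduced basis, uniqueness from the uniqueness of the minimal monomial generating set of $\langle LT(I)\rangle$ plus the reduced condition, and uniqueness of the remainder from the defining property that leading terms of elements of $I$ are divisible by some $LT(g_i)$). The paper itself gives no proof of this theorem --- it quotes it as a known result and cites Dummit--Foote and Cox--Little--O'Shea --- so your argument simply reproduces the proof from those references, and it does so correctly.
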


In the following, we fix a monomial ordering by the graded lexicographic ordering.


\subsection{Formulation of quantization via enveloping algebra} \label{4_1}
In this subsection, we formulate a quantization map from  $A_\mathfrak{g}  / I(C) $ to
a quotient algebra of the universal enveloping algebra 
by a two-sided ideal.\\

We use $\mathbb{C}[\hbar ]$ as the commutative ring $R$, where $\hbar$ is a complex variable in this subsection.
%
As in Subsection \ref{3_2}, we use the 
enveloping algebra $ \mathcal{U}_\mathfrak{g} [ \hbar ]$.
Recall the two-sided ideal is $I=\{ \sum_{i,j,k,l}  a_{k} ( X_i   X_j  - X_j   X_i  -  [X_i, X_j ] ) b_{l} ~ | ~ a_{k} , b_{l} \in R \langle X \rangle  \}$,
and $ \mathcal{U}_\mathfrak{g} [ \hbar ]$ is defined by
$\mathcal{U}_\mathfrak{g}[\hbar ]
= R \langle X \rangle  / I .$
%
\\

We comment on why we need to introduce 
$\mathcal{U}_\mathfrak{g}[\hbar ]$ here.
The deformation using the independent parameter $\hbar$ as above means that the generator 
$X_i$ is completely independent of $\hbar$.
Fuzzy space or matrix regularization, $q_\mu$, is an expansion with 
respect to $\hbar$, like a Taylor expansion. 
Care is needed to distinguish the degree of $\hbar$, 
which represents noncommutativity, 
and this requires an asymptotic algebraic homomorphism.
On the other hand, the quantization developed in this subsection has a major difference 
in that the degree of the generators and the degree of 
$\hbar$ are not related.\\

We are dealing with what has long been known as the canonical
 mapping from $A_{\mathfrak{g}}$ to $ \mathcal{U}_\mathfrak{g} $
in the studies by Abellanas-Alonso \cite{AA}, (see also \cite{Dixmier}) 
 , and we replace its target space from $ \mathcal{U}_\mathfrak{g} $ to $ \mathcal{U}_\mathfrak{g}  [\hbar]$.
\begin{definition} [Dixmier, Abellanas-Alonso]\label{def_q_u_1}
We define a canonical linear map 
$q_U : A_\mathfrak{g} \to \mathcal{U}_\mathfrak{g} [\hbar]$
by 
\begin{align}
x_{i_1} \cdots x_{i_k} \mapsto \frac{1}{k!}\sum_{\sigma \in Sym(k)} X_{i_{\sigma (1)}} \cdots X_{i_{\sigma (k)}} .
\label{AAD}
\end{align}
\end{definition}

\begin{proposition} \label{prop_qu}
$q_U : A_\mathfrak{g} \to \mathcal{U}_\mathfrak{g}[\hbar] $ is a quantization i.e.,
\begin{align*}
[q_U ( f ) , q_U ( g )]
= \hbar ~ q_{U} ( \{ f , g \} ) + \tilde{O}(\hbar^2)
\end{align*}
for $f,g \in A_\mathfrak{g} $. In other words, $q_U \in Q$.
Especially, if $\min \{ \deg f, \deg g \} \le 1 $, then 
\begin{align}
[q_U ( f ) , q_U ( g )]
= \hbar ~ q_{U} ( \{ f , g \} ) . \label{prop_deg1}
\end{align}
\end{proposition}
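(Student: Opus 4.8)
The plan is to verify the quantization property directly by computing the commutator $[q_U(f), q_U(g)]$ for $f, g \in A_{\mathfrak{g}} = \mathbb{C}[x]$ and expanding in powers of $\hbar$. Since both sides are linear (over $\mathbb{C}[\hbar]$) in each of $f$ and $g$, it suffices to treat the case where $f = x^I = x_{i_1}\cdots x_{i_k}$ and $g = x^J = x_{j_1}\cdots x_{j_m}$ are monomials, so that $q_U(f) = X_{(i_1,\dots,i_k)}$ and $q_U(g) = X_{(j_1,\dots,j_m)}$ are the symmetrized products in $\mathcal{U}_{\mathfrak{g}}[\hbar]$. The key structural fact I would use is the commutation relation (\ref{commute_U_h}), namely $X_a X_b - X_b X_a = \hbar f_{ab}^c X_c$ in $\mathcal{U}_{\mathfrak{g}}[\hbar]$, together with the PBW-type normal form of Lemma \ref{lemma3_4}, which lets me rewrite any product of generators as a $\mathbb{C}[\hbar]$-combination of symmetrized monomials $X_{(\cdot)}$.

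First I would record the general commutator identity in an associative algebra: for words $A = X_{a_1}\cdots X_{a_k}$ and $B = X_{b_1}\cdots X_{b_m}$, the commutator $[A,B]$ is a sum of terms each obtained by replacing one adjacent pair $X_{a_p} X_{b_q}$ (after moving them together) by its commutator $\hbar f_{a_p b_q}^c X_c$, plus higher commutator corrections. More precisely, moving generators past one another costs a factor of $\hbar$ each time, so $[A,B]$ lies in $\hbar\,\mathcal{U}_{\mathfrak{g}}[\hbar]$ and its order-$\hbar$ part is $\hbar \sum_{p,q} f_{a_p b_q}^c X_{a_1}\cdots \widehat{X_{a_p}}\cdots X_{b_q \to c}\cdots$, with everything else of order $\hbar^2$. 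Then I would symmetrize: since $q_U(f)$ and $q_U(g)$ are averages over $Sym(k)$ and $Sym(m)$ of such words, and since reordering the surviving generators into the symmetrized form only introduces further $\tilde{O}(\hbar^2)$ corrections (again by (\ref{commute_U_h})), the order-$\hbar$ term of $[q_U(f),q_U(g)]$ matches $\hbar$ times the symmetrization of $\sum_{p,q} f_{i_p j_q}^c x_{i_1}\cdots\widehat{x_{i_p}}\cdots x_{j_q\to c}\cdots$ evaluated under $q_U$. The remaining task is to recognize this as $\hbar\, q_U(\{f,g\})$, which follows from the Leibniz rule for the Poisson bracket (\ref{poisson_brackets}): $\{x^I, x^J\} = \sum_{p,q} x_{i_1}\cdots\widehat{x_{i_p}}\cdots \{x_{i_p},x_{j_q\}}\cdots = \sum_{p,q} f_{i_p j_q}^c\, x_{i_1}\cdots\widehat{x_{i_p}}\cdots x_c \cdots x_{j_q \to \text{removed}}\cdots$, matching term by term after applying $q_U$ and invoking its linearity to symmetrize. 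This establishes $[q_U(f),q_U(g)] = \hbar\, q_U(\{f,g\}) + \tilde{O}(\hbar^2)$, i.e. $q_U \in Q$.

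For the sharper claim (\ref{prop_deg1}), suppose $\deg g \le 1$, so without loss of generality $g = x_j$ is a single generator (the constant case being trivial) and $q_U(g) = X_j$. Then I would compute $[q_U(f), X_j]$ for $f$ a degree-$k$ monomial. The point is that commuting $X_j$ past the symmetrized word $X_{(i_1,\dots,i_k)}$ produces, by repeated use of $X_a X_j - X_j X_a = \hbar f_{aj}^c X_c$, a sum in which each term has strictly fewer generators, and one checks by induction on $k$ that the result is exactly $\hbar$ times the symmetrization of $\sum_p f_{i_p j}^c x_{i_1}\cdots x_c \cdots x_{i_k}$ with $x_{i_p}$ removed — with no higher-order $\hbar$ remainder, because each commutator resolution lowers the generator count by exactly one and the symmetrized building blocks $X_{(\cdot)}$ are closed under this operation up to exact (not approximate) identities. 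Concretely, I would prove the identity $[X_{(i_1,\dots,i_k)}, X_j] = \hbar \sum_{p=1}^{k} f_{i_p j}^c\, X_{(i_1,\dots,\widehat{i_p},\dots,i_k,c)}$ by induction on $k$, using Lemma \ref{lemma3_4} to organize the symmetrization; then since $\{x^I, x_j\} = \sum_p f_{i_p j}^c\, x_{i_1}\cdots x_c\cdots\widehat{x_{i_p}}\cdots x_{i_k}$ exactly, applying $q_U$ gives (\ref{prop_deg1}).

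The main obstacle I anticipate is the bookkeeping in the symmetrization step: when I resolve a commutator inside a symmetrized product, the generator that was "created" from the structure constant must be reinserted and re-symmetrized along with the survivors, and verifying that all the discrepancies between "symmetrize then commute" and "commute then symmetrize" are genuinely $\tilde{O}(\hbar^2)$ (and, in the degree-$\le 1$ case, genuinely zero) requires care. I would handle this cleanly by working modulo $\hbar^2$ in $\mathcal{U}_{\mathfrak{g}}[\hbar]$ throughout the general case — in that quotient the generators commute, so the symmetrized products $X_{(\cdot)}$ are just ordinary products and the computation collapses to the commutative Leibniz identity — and by the explicit induction of Lemma \ref{lemma3_4} for the exact degree-$\le 1$ statement. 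The appeal to $\tilde{O}$ (rather than ordinary Landau notation) is unproblematic here since $\mathcal{U}_{\mathfrak{g}}[\hbar]$ is a free $\mathbb{C}[\hbar]$-module and the error terms are honest polynomial multiples of $\hbar^2$.
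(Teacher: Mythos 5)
Your proposal is correct and is in essence the paper's own argument: the heart of both is the exact symmetrization identity $[X_{(i_1,\dots,i_k)},X_j]=\hbar\sum_{p} f_{i_p j}^{\,c}\,X_{(i_1,\dots,\widehat{i_p},\dots,i_k,c)}$, which the paper obtains by the direct permutation-sum computation in the $\deg f=1$ case and which yields (\ref{prop_deg1}) exactly, combined with the observation that every further reordering of generators costs an additional factor of $\hbar$. The only organizational difference is in the general case: the paper reduces $\deg f\ge 2$ to the degree-one identity via a Leibniz expansion and an explicit reordering estimate (written out for $\deg f=2$), whereas you match the order-$\hbar$ coefficients of both sides uniformly in all degrees by passing to the commutative quotient; both routes rest on the relation (\ref{commute_U_h}) and the PBW-type normal form of Lemma \ref{lemma3_4}. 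One sentence of yours should be repaired: it is not true that the generators commute in $\mathcal{U}_\mathfrak{g}[\hbar]/(\hbar^2)$, since their commutator $\hbar f_{ij}^k X_k$ survives there; what your argument actually uses, and what is true, is that each elementary commutator already supplies an explicit factor of $\hbar$, so the residual expression only needs to be identified modulo $\hbar$, i.e.\ in the commutative quotient $\mathcal{U}_\mathfrak{g}[\hbar]/(\hbar)$ --- together with $\hbar$-torsion-freeness (guaranteed by the normal form of Lemma \ref{lemma3_4}) so that ``the order-$\hbar$ part'' is well defined.
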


\begin{proof}
Since there is an anti-symmetry with respect to the permutation of $f$ and $g$, it is enough that we write the proof for $ \deg f \le \deg g$.
In addition, since it is sufficient to prove the case of monomials by using the linearity of $q_U$, Poisson brackets, and Lie brackets, 
we treat $f$ and $g$ as monomials below.
The case of $\deg f = 0$ is trivial.
Let us consider the case of  $\deg f = 1$.
For monomials $f(x)= x_i $ and $g = x^{I^m} = x_{i_1} \cdots x_{i_m} $ ,
\begin{align*}
 q_{U}( \{ x_i, x^{I^m} \} ) &=  \sum_{l, k} f^k_{i l} q_{U} ( x_k \partial_{l} x^{I^m} )
\\
&=  \sum_{l, k} f^k_{il}~ q_{U}\left( x_k \sum_{n=1}^{m} x_{i_1} \cdots x_{i_{n-1}}  
\delta_{i_n l} x_{i_{n+1}} \cdots x_{i_m} \right) .
\end{align*}
Let us introduce the indices $(j_1^{(n)} , j_2^{(n)}, \cdots ,  j_m^{(n)}) := ( i_1 , \cdots , i_{n-1} , k , i_{n+1} , \cdots , i_m )$.
Then, we obtain
\[
q_{U}( \{ x_i, x^{I^m} \} ) = \frac{1}{m!} \sum_{\sigma \in Sym(m)} \sum_{n=1}^{m} \sum_{k=1}^d
 \left( X_{j_{\sigma(1)}^{(n)}} \cdots (  f^k_{i i_n} X_{j_{\sigma(\sigma^{-1}(n))}^{(n)}} ) \cdots X_{j_{\sigma(m)}^{(n)}} \right).
\]
Here $ f^k_{i i_n} X_{j_{\sigma(\sigma^{-1}(n))}^{(n)}} \!\!\!\!\!\!\!
= f^k_{i i_n} X_{j_{n}^{(n)}} 
=  f^k_{i i_n} X_{k}$ 
is the $ \sigma^{-1}(n) $-th item from the left. Finally, we find 
\begin{align*}
\hbar~ q_{U}( \{ x_i, x^{I^m} \} ) &=
\frac{1}{m!} \sum_{\sigma \in Sym(m)} \sum_{n=1}^{m} X_{i_{\sigma(1)}} \cdots X_{i_{\sigma(\sigma^{-1}(n)-1)}} [ X_i, X_{i_{n}} ] X_{i_{\sigma(\sigma^{-1}(n)+1)}} \cdots X_{i_{\sigma(m)}} \\
&= \frac{1}{m!} \sum_{\sigma \in Sym(m)} \sum_{n=1}^{m} X_{i_{\sigma(1)}} \cdots X_{i_{\sigma(n-1)}} [ X_i, X_{i_{\sigma(n)}} ] X_{i_{\sigma(n+1)}} \cdots X_{i_{\sigma(m)}}\\
&= \frac{1}{m!} \sum_{\sigma \in Sym(m)} [ X_i, X_{i_{\sigma(1)}} \cdots X_{i_{\sigma(m)}} ]
=[ q_{U} (x_i), q_{U} (x^{I^m}) ] .
\end{align*}
Therefore, (\ref{prop_deg1}) is shown for 
$\min \{ \deg f, \deg g \} \le 1 $.\\

Next, we consider the case with $\deg f \ge 2$. 
Essentially, the case $\deg f \ge 3$ is the same as the case $\deg f = 2$,
so for simplicity we only describe the case where $\deg f =2$.
For monomials $f(x)= x_i x_j$ and $g = x^{I^m} = x_{i_1} \cdots x_{i_m} $,
\begin{align}
q_{U}(\{ x_i x_j, g \}) = q_{U} (x_i \{ x_j, g \} + x_j \{ x_i, g \}) = 
\sum_{l,k} \left\{ q_{U} (x_i f^k_{j l} x_k \partial_l x^{I^m}) + q_{U} (x_j f^k_{i l} x_k \partial_l x^{I^m})
\right\}.
\label{m2case1}
\end{align}
Let us take a closer look at the first term $q_{U} (x_i f^k_{j l} x_k \partial_l x^{I^m})$. 
After the similar calculations as in the case of $\deg f = 1$, 
using 
the indices $(\alpha_1 ,\alpha_2 , \cdots ,  \alpha_{m+1}) := 
(i_1 , \cdots , i_{n-1} , k , i_{n+1} , \cdots , i_m , i)$,
the following is obtained.
\begin{align*}
&\sum_{l,k} \hbar q_{U} (x_i f^k_{j l} x_k \partial_l x^{I^m}) = \\
&\frac{1}{(m+1)!} \sum_{\sigma \in Sym(m+1)} \sum_{n=1}^{m} X_{\alpha_{\sigma(1)}} \cdots X_{\alpha_{\sigma(\sigma^{-1}(n)-1)}} 
 [ X_j, X_{i_n}] X_{\alpha_{\sigma(\sigma^{-1}(n) +1)}} \cdots X_i \cdots X_{\alpha_{\sigma(m+1)}}.
\end{align*}
Here, $X_i$ is located at the $\sigma^{-1}(m+1)$-th position from the left.
Using the commutation relation, 
we can write the above with $X_i$ at the beginning and at the end as follows.
\begin{align}
\sum_{l,k} \hbar q_{U} (x_i f^k_{j l} x_k \partial_l x^{I^m}) =&
 \frac{1}{2} \frac{m+1}{(m+1)!} X_i \!\!\!\! \sum_{\sigma \in Sym(m)} 
 \!\! \sum_n X_{i_{\sigma(1)}} \cdots  [ X_j , X_{i_{n}} ] \cdots X_{i_{\sigma (m)}}  \label{m2case2}\\ 
+&\frac{1}{2} \frac{m+1}{(m+1)!} \!\!\!\! \sum_{\sigma \in Sym(m)} 
\!\! \sum_n X_{i_{\sigma(1)}} \cdots  [ X_j , X_{i_{n}} ] \cdots X_{i_{\sigma (m)}} X_i +\tilde{O}(\hbar^2 ).
\notag
\end{align}
The second term in (\ref{m2case1}) is obtained by swapping the $i$ and $j$ in (\ref{m2case2}).
Therefore, 
\begin{align}
\hbar~  q_{U}(\{ x_i x_j, g \}) =
&
 \frac{1}{2} \left( X_i  [ X_j , q_{U} (x^{I^m}) ] 
+  [ X_j , q_{U} (x^{I^m}) ]  X_i  +
   X_j  [ X_i , q_{U} (x^{I^m}) ]+
  [ X_i , q_{U} (x^{I^m}) ]  X_j \right) \notag\\
&+\tilde{O}(\hbar^2 ).
\label{m2case3}
\end{align}
On the other hand, 
\begin{align}
&[ q_{U} (x_i x_j), q_{U} (x^{I^m}) ] = \frac{1}{2} [ X_i X_j + X_j X_i,  q_{U} (x^{I^m}) ] \label{m2case4}\\
&= \frac{1}{2} \left( X_i  [ X_j , q_{U} (x^{I^m}) ] 
+  [ X_j , q_{U} (x^{I^m}) ]  X_i  +
   X_j  [ X_i , q_{U} (x^{I^m}) ]+
  [ X_i , q_{U} (x^{I^m}) ]  X_j \right). \notag
\end{align}
From (\ref{m2case3}) and (\ref{m2case4}), the desired result is proven.
\end{proof}

Next, in order to construct the quantization of $A_\mathfrak{g} / I(C) $, the following ideal 
$I(C(X)) \subset \mathcal{U}_\mathfrak{g}[\hbar] $
is also introduced into the enveloping algebra.
\begin{align}
I(C(X))&:= 
 \left\{ \sum_{i,j,k} a_j (X) f_i^C(X) b_k (X) \in \mathcal{U}_\mathfrak{g}[\hbar] ~ | ~ 
f_i^C(x) \in CaP, ~
a_j(X), b_k(X) \in  \mathcal{U}_\mathfrak{g}[\hbar] ~ \right\} .
\end{align}
Here, we use
\begin{align}
f_i^C(X) : = q_U ( f_i^C(x) ) , \quad   \quad 
f_i^C (x) \in CaP := \{ f(x) \in  A_\mathfrak{g} ~| ~ \{x_i , f(x) \}=0 (i =1, \cdots , d ) \}.
\end{align}
Note that from (\ref{prop_deg1}), 
\begin{align}
[ X_j ,  f_i^C(X) ] = [X_j ,  q_U ( f_i^C(x) )  ] = 0 , \quad ( j =1, \cdots , d ). \label{casimir_u}
\end{align}
In short $f_i^C(X) $ is a Casimir operator in $\mathcal{U}_\mathfrak{g}[\hbar] $.
Therefore, it is possible to define $\mathcal{U}_\mathfrak{g}[\hbar] / I(C(X)) $
while keeping it compatible with the commutation relations.
We denote $\{ f(X) + h(X) ~|~ f(X) \in \mathcal{U}_\mathfrak{g}[\hbar] , h(X) \in  I(C(X)) \}$
by $[f(X)]$. 
The sum and product of the algebra  $\mathcal{U}_\mathfrak{g}[\hbar] / I(C(X)) $ are defined 
by $[f]+[g]:= [f+g]$ and $[f][g]=[f g]$, and the commutator product is determined by 
$[[f] , [g] ] = [ [f , g ] ]$.
Let us generalize $q_U : A_\mathfrak{g} \to \mathcal{U}_\mathfrak{g}[\hbar] $
to $
q_{U/I} : A_\mathfrak{g} /I(C) \to \mathcal{U}_\mathfrak{g}[\hbar] / I(C(X)) 
$.
\begin{definition} \label{def_q_U_2}
A linear map
\[
q_{U/I}  : A_\mathfrak{g}  /I(C) \to  \mathcal{U}_\mathfrak{g} [\hbar]/I(C(X))
\]
is defined as follows.
Let $G$  be a reduced Gr\"obner basis of $I(C)$.
For any $f(x) \in A_\mathfrak{g} $,  $r_{f, G}$ is uniquely determined 
by Theorem \ref{reducedGrobner} as
\[
f(x) = r_{f, G}(x) + h_f , \quad h_f \in I(C).
\]
For $\forall [f(x)] \in  A_\mathfrak{g}  /I(C)$, we define
\[
q_{U/I}  ([f(x)]) :=  [ q_{U}  ( r_{f, G}(x) ) ],
\]
where
$q_{U}  ( r_{f, G}(x) )$ is determined by Definition \ref{def_q_u_1}.
\end{definition}

\begin{theorem}
The above $
q_{U/I} : A_\mathfrak{g} /I(C) \to \mathcal{U}_\mathfrak{g}[\hbar] / I(C(X)) 
$ is a quantization;
\begin{align}
[ q_{U/I} ( [f] ) ,  q_{U/I} ( [g] ) ] = \hbar   q_{U/I} (  \{ [f ] ,  [g] \}  ) + \tilde{O} (\hbar^2 ) ,    
\end{align}
for $\forall [f] , [g]  \in  A_\mathfrak{g} / I(C)$.
$\tilde{O}(\hbar^{n})$ is used in the sense of Example \ref{exA_4} in Appendix \ref{ap1}.
\end{theorem}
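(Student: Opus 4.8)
The plan is to reduce the assertion to Proposition~\ref{prop_qu} (which already gives that $q_U$ is a quantization) by a bookkeeping argument relating the two quotients. First I would record that $q_{U/I}$ is well defined and $\mathbb{C}$-linear: for the fixed reduced Gr\"obner basis $G$ of $I(C)$, Theorem~\ref{reducedGrobner} says the remainder $r_{f,G}$ depends only on $f \bmod I(C)$ (two representatives of $[f]$ differ by an element of $I(C)$, whose remainder is $0$), and $f\mapsto r_{f,G}$ is additive by uniqueness of the remainder; composing with the linear maps $q_U$ and the quotient projection shows $q_{U/I}([f]) = [q_U(r_{f,G})]$ is unambiguous and linear. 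Also note that $I(C)$ is a Poisson ideal, $\{A_\mathfrak{g}, I(C)\}\subseteq I(C)$, since each $f_i^C$ is a Casimir polynomial (so $\{p, f_i^C\}=0$ for every $p$ by the Leibniz rule), which is what makes $A_\mathfrak{g}/I(C)$ a Poisson algebra in the first place.

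The core of the proof is the following lemma: if $p(x)\in I(C)$, then $q_U(p)\in I(C(X)) + \tilde{O}(\hbar)$, i.e. $[q_U(p)] = \tilde{O}(\hbar)$ in $\mathcal{U}_\mathfrak{g}[\hbar]/I(C(X))$. To prove it I would first establish the sublemma $q_U(fg) = q_U(f)\,q_U(g) + \tilde{O}(\hbar)$ for $f,g\in A_\mathfrak{g}$: on monomials, $q_U(fg)$ is the full symmetrization of the word built from $q_U(f)$ and $q_U(g)$, while $q_U(f)q_U(g)$ is that word with the two blocks kept in order; rearranging one into the other only produces commutators, each of which carries a factor of $\hbar$ via~(\ref{commute_U_h}), so the difference lies in $\hbar\,\mathcal{U}_\mathfrak{g}[\hbar]$. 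Then, writing $p = \sum_i f_i^C(x)\,g_i(x)$ with $f_i^C\in C\subset CaP$ and $g_i\in A_\mathfrak{g}$, the sublemma gives $q_U(f_i^C g_i) = q_U(f_i^C)\,q_U(g_i) + \tilde{O}(\hbar) = f_i^C(X)\,q_U(g_i) + \tilde{O}(\hbar)$, and $f_i^C(X)\,q_U(g_i)\in I(C(X))$ by definition of $I(C(X))$; summing over $i$ proves the lemma.

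With the lemma the main computation is short. Fix $[f],[g]\in A_\mathfrak{g}/I(C)$ and abbreviate $r_f = r_{f,G}$, $r_g = r_{g,G}$. In $\mathcal{U}_\mathfrak{g}[\hbar]/I(C(X))$,
\begin{align*}
[\,q_{U/I}([f]),\,q_{U/I}([g])\,] = [\,[\,q_U(r_f),\,q_U(r_g)\,]\,] = \hbar\,[\,q_U(\{r_f,r_g\})\,] + \tilde{O}(\hbar^2),
\end{align*}
using Proposition~\ref{prop_qu}. Since $r_f = f - h_f$, $r_g = g - h_g$ with $h_f,h_g\in I(C)$ and $I(C)$ is a Poisson ideal, $\{r_f,r_g\} = \{f,g\} + p$ with $p\in I(C)$; writing also $\{f,g\} = r_{\{f,g\},G} + h$, $h\in I(C)$, we get $\{r_f,r_g\} = r_{\{f,g\},G} + (p+h)$ with $p+h\in I(C)$. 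Applying the lemma to $p+h$, $[q_U(\{r_f,r_g\})] = [q_U(r_{\{f,g\},G})] + \tilde{O}(\hbar) = q_{U/I}(\{[f],[g]\}) + \tilde{O}(\hbar)$, and multiplying by $\hbar$ turns the $\tilde{O}(\hbar)$ into $\tilde{O}(\hbar^2)$, giving the claimed relation.

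The step I expect to be the main obstacle is the lemma, and within it the handling of the $\tilde{O}$-notation in the absence of an operator norm: one must check, in the sense of Example~\ref{exA_4}, that $\hbar\,\mathcal{U}_\mathfrak{g}[\hbar]$ still maps to $\tilde{O}(\hbar)$ in $\mathcal{U}_\mathfrak{g}[\hbar]/I(C(X))$, i.e. that passing to the quotient by the (non-$\hbar$-homogeneous) ideal $I(C(X))$ does not spoil the order estimate. Establishing the sublemma $q_U(fg)=q_U(f)q_U(g)+\tilde{O}(\hbar)$ cleanly, as a PBW-type reordering statement with explicit $\hbar$-tracking, is the other technical point that needs care.
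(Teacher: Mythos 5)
Your proposal is correct and follows essentially the same route as the paper: both decompose via the unique Gr\"obner remainders, invoke Proposition \ref{prop_qu} for the leading commutator term, and reduce the rest to showing that $q_U$ of an element of $I(C)$ lies in $I(C(X))$ up to $\tilde{O}(\hbar)$, using the PBW-type factorization $q_U(f^C g) = q_U(f^C)q_U(g) + \tilde{O}(\hbar)$ together with the fact that $q_U(f^C)\,q_U(g) \in I(C(X))$. The point you flag as the main obstacle, that $[\tilde{O}(\hbar)] = \tilde{O}(\hbar)$ survives passage to the quotient, is exactly what Definition \ref{defA2} and Example \ref{exA_4} are set up to provide, and the paper handles it the same way.
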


\begin{proof}
In the following, for any polynomial $f$, 
 we write $f: =h_f + r_f $ to mean equation (\ref{IplusG}), i.e.,
$h_f \in I(C), r_f \notin I(C)$.
For any $f,g \in A_\mathfrak{g} $, from
\[
\{f, g\} = h_{\{f, g\}} + r_{\{f, g\}}, 
\]
and
\[
\{f, g\} = \{r_f + h_f, r_g + h_g\} = \{r_f, r_g\} + h ,
\]
where $f=h_f + r_f , ~g = h_g + r_g$ and $h \in I(C)$,
we obtain
\begin{align}
\{r_f, r_g\} = r_{\{f, g\}} + h_{\{r_f, r_g\}}, \quad r_{\{r_f, r_g\}} = r_{\{f, g\}}.
\end{align}
Here the uniqueness described in Theorem \ref{reducedGrobner} 
is used to obtain the above result.
Therefore we find
\begin{align}
\hbar \, q_{U/I} ([\{f, g\}]) = \hbar \,  [q_U(r_{\{f, g\}}) ]
= \hbar  \left[ q_{U} (\{r_f, r_g\})  - q_{U} (h_{\{r_f, r_g\}}) \right]  . \label{proof_AI_quant_1}
\end{align}
From the Proposition \ref{prop_qu}, 
\begin{align}
\left[ \hbar \, q_{U} (\{r_f, r_g\}) \right] &= \left[ [q_{U} (r_f), q_{U} (r_g)] \right] + \tilde{O}(\hbar^2)
\notag \\
&= [ q_{U/I} ([r_f]), q_{U/I} ([r_g])] + \tilde{O}(\hbar^2)
= [ q_{U/I} ([f]), q_{U/I} ([g])] + \tilde{O}(\hbar^2) \label{proof_AI_quant_2}
\end{align}
From (\ref{proof_AI_quant_1}) and (\ref{proof_AI_quant_2}),
this proof is complete if we can show that 
\[
[ q_{U} (h_{\{r_f, r_g\}})] = \tilde{O}(\hbar) .
\]
Recall that $h_{\{r_f, r_g\}}$ is an element of $I(C)$ that is generated by 
Casimir polynomials;
\[
h_{\{r_f, r_g\}} = \sum f_i^C(x) k_i(x), 
\]
where $k_i(x) \in A_{\mathfrak{g}}$ and $f_i^C(x)$ is a Casimir polynomial.
So, it is enough that we show
\[
[ q_{U} (C(x) x^{I}) ] = \tilde{O}(\hbar) ,
\]
for any Casimir polynomial $C(x)$ and any monomial $x^{I}= x_{i_1} \cdots x_{i_m}$.
Recall that $q_{U} (C(x))$ is also a Casimir operator as we saw in (\ref{casimir_u}). So, $q_{U} (C(x)) q_{U} (x^{I})$ is in the ideal $I(C(X))$, and
\begin{align}
[ q_{U} (C(x) x^{I}) ] = [ q_{U} (C(x) x^{I}) - q_{U} (C(x)) q_{U} (x^{I}) ]. \label{proof_AI_quant_3}
\end{align}
We put
$C(x) = \sum_{J} C_{J} x^{J}  ~(C_J \in {\mathbb{C}}, x^{J}= x_{j_1} \cdots x_{j_k})$, then
\[
q_{U} (C(x) x^{I}) = \sum_{J} C_{J} q_{U} (x^{J} x^{I} ).
\]
By the similar discussions with the proof for Lemma \ref{lemma3_4},
$$q_{U} (x^{J} x^{I} )= X_{(j_i , \cdots , j_k , i_1 , \cdots , i_m)}
= X_{(j_i , \cdots , j_k )}X_{(i_1 , \cdots , i_m)}+ \tilde{O}(\hbar)
=q_{U} (x^{J} ) q_U (x^{I} )+ \tilde{O}(\hbar).
$$
Using this, finally (\ref{proof_AI_quant_3}) is written as
\begin{align}
 [ q_{U/I} (C(x) x^{I}) ] &=
 \left[ \sum_{J} C_{J} q_{U} (x^{J} x^{I} ) - q_{U} (C(x)) q_{U} (x^{I}) \right] \notag\\
 &=  \left[ \sum_{J} C_{J} ( q_{U} (x^{J} x^{I} ) - q_{U} (x^{J} ) q_U (x^{I} ) )\right] \notag\\
 &= [ \tilde{O}(\hbar) ] = \tilde{O}(\hbar).
\end{align}
Here $[ \tilde{O}(\hbar) ] = \tilde{O}(\hbar)$ is obtained from Definition \ref{defA2}.

\end{proof}

\begin{rem}
As you can see from the above proof, 
there is no quantization $
A_\mathfrak{g} /I(C) \to \mathcal{U}_\mathfrak{g}[\hbar] 
$ 
in this construction method.
Therefore, it is necessary to discuss the quantization 
$
q_{U/I} : A_\mathfrak{g} /I(C) \to \mathcal{U}_\mathfrak{g}[\hbar] / I(C(X)) .
$ 
\end{rem}


\subsection{(Weak) Matrix regularization for Lie-Poisson varieties $A_\mathfrak{g} /I(C) $}\label{sect4_2}
Up to the previous subsection, we have discussed Casimir polynomials without any conditions.
In this subsection, we will impose restrictions so that it becomes Casimir relations that we need.
This will make it possible to construct matrix regularization of Lie-Poisson varieties.
As already mentioned around (\ref{hbar_ratio}),
by balancing the rate at which 
$\hbar$ approaches zero with the rate at which the dimension of the representation matrix diverges, 
one can ensure that Casimir polynomials of degree 
$k$ define a Lie-Poisson variety in the classical(commutative) limit.
In this following, $k < n_\mu$ is assumed.
\\

We fix $\hbar$ of $ \mathcal{U}_\mathfrak{g}[\hbar] $ as $\hbar = \hbar (\mu)$.
From $ \mathcal{U}_\mathfrak{g}[\hbar] $ to $gl (V^\mu )$ there is a representation $\rho_{U\mu}$ 
as an algebra homomorphism defined by
\begin{align}
\rho_{U\mu} (X_i ) := e^{(\mu)}_i , \quad \rho_{U\mu} (1) := Id^\mu,
\end{align}
where $ Id^\mu$ is the unit matrix.
$e^{(\mu)}$ is the basis of an irreducible representation of $\mathfrak{g}$
introduced in Section \ref{3_2}.
When we fix $\hbar = \hbar (\mu)$,
using this representation we obtain $k$th-degree Casimir operator 
$C^k_i (e^\mu ) := \rho_{U\mu} \circ q_U ( C^k_i (x)) = q_\mu ( C^k_i (x))$.
Here $C^k_i (x)$ is a Casimir polynomial of degree $k$.
It is clear from the definition of $\rho_{U\mu}$ that the commutation relation is unchanged 
even for the image of $\rho_{U\mu}$, so $C^k_i (e^\mu )$ is a Casimir operator. 
If it is an irreducible representation, then $C^k_i (e^\mu )$ is proportional to the unit matrix if it is not zero,
since Schur's lemma.
In the irreducible representation, the Casimir operator 
can be characterized by an eigenvalue.
Let $\lambda^k_i$ denote the eigenvalues of the Casimir operators of $k$-th degree 
$C^k_i (e^\mu ) $.
Recall the discussions at the end of  Subsection \ref{3_2}.
Let us consider the representation for constructing the matrix regularization in that case.
So we chose the generators of the ideal $I(C) \subset A_\mathfrak{g}$ for some fixed $k$ as
\begin{align} \label{4_22}
( f_i^C(x):= C^k_i (x) - \lambda^k_i )_{i \in \{1,\cdots , l\} } ,
\end{align}
where $l \in {\mathbb{N}}$ is chosen as a number of equations 
to determine a Lie-Poisson variety. 
The possible values of $l$ are also restricted by the Lie algebra $\mathfrak{g}$.
(For example, in the case of $\mathfrak{su} (n)$, $l$ is at most $1$.)
Let us reconstruct all in 
Subsection \ref{4_1} using  
\begin{align} \label{4_23}
I(C) := \left\{ \sum f_i^C(x) g_i(x) \in  A_\mathfrak{g} ~ | ~ 
f_i^C(x) =  C^k_i (x) - \lambda^k_i , ~
g_i(x) \in  A_\mathfrak{g} ~ \right\}  \subset A_\mathfrak{g} .
\end{align}
Then, 
\begin{align}
I(C(X))&:= 
 \left\{ \sum a_j (X) f_i^C(X) b_k (X) \in \mathcal{U}_\mathfrak{g}[\hbar] ~ | ~ 
a_j(X), b_k(X) \in  \mathcal{U}_\mathfrak{g}[\hbar] ~ \right\} .
\end{align}
Here, $f_i^C(X)$ is a Casimir operator in $ \mathcal{U}_\mathfrak{g}[\hbar] $ given as
\begin{align}
f_i^C(X) : = q_U ( f_i^C(x) ) = q_U ( C^k_i (x) - \lambda^k_i ).
\end{align}
$ \mathcal{U}_\mathfrak{g}[\hbar] / I(C(X)) ,  q_{U/I}$ and so on are defined by using these ideals.
\begin{definition} A linear function
$\rho_{U/I , \mu} :  \mathcal{U}_\mathfrak{g}[\hbar] / I(C(X))  \to gl (V^\mu )$
is defined as follows. For any monomial $X^I = X_{i_1} \cdots X_{i_m} ~(m \in {\mathbb{N}})$
in $\mathcal{U}_\mathfrak{g}[\hbar]$,
\begin{align}
\rho_{U/I , \mu} ( [X^I ] ) = \rho_{U\mu} (X^I)  = e^{(\mu)}_{i_1} \cdots e^{(\mu)}_{i_m}.
\end{align}
\end{definition}
Let us check the consistency of this definition of $\rho_{U/I , \mu}$.
Let $\Sigma$ be the set of generators of all relations,
\begin{align*}
\Sigma := \left\{
X_j X_k - X_k X_j - [X_j , X_k ] , ~ q_U ( C^k_i (x) ) - \lambda^k_i  ~ | ~ 1\le j,k \le d , i = 1,2, \cdots  , l \right\} .
\end{align*}
We denote the ideal generated by $\Sigma$ by 
$$\displaystyle {\mathfrak I}:= \left\{ \sum_{i,j,k}  a_i  {\mathcal I}_j 
b_k ~ | ~  a_i ,  b_k \in  R \langle X \rangle, ~{\mathcal I}_j \in \Sigma \right\} .$$ 
In other words, $ \mathcal{U}_\mathfrak{g}[\hbar] / I(C(X))  = R \langle X \rangle / {\mathfrak I}$.
For any $[f(X)] = [g(X)]$, there exists $h(X) \in  {\mathfrak I}$ such that
$ g(X) = f(X) + h(X) $. 
Note that 
\begin{align}
\rho_{U/I , \mu} ( [ X_j X_k - X_k X_j - [X_j , X_k ] ]) &=  e^{(\mu)}_j  e^{(\mu)}_k -  e^{(\mu)}_k  e^{(\mu)}_j -
[ e^{(\mu)}_j ,  e^{(\mu)}_k ] = 0  ,   \label{h=0_1}\\
\rho_{U/I , \mu}( [ q_U ( C^k_i (x) ) - \lambda^k_i ]) &= 0 . \label{h=0_2}
\end{align}
These equations mean $h( e^{(\mu)} ) = 0$, so we obtain
\begin{align}
\rho_{U/I , \mu} ( [g(X)] ) = g(e^{(\mu)} ) = f( e^{(\mu)} ) + h( e^{(\mu)} ) = f( e^{(\mu)} ) 
= \rho_{U/I , \mu} ( [f(X)] ) .
\end{align}
In this way, it was confirmed that $\rho_{U/I , \mu} :  \mathcal{U}_\mathfrak{g}[\hbar] / I(C(X))  \to gl (V^\mu )$ is well-defined.
In addition, $\rho_{U/I , \mu}$ is apparently algebra homomorphism.
Linearity is trivial from the definition, and the product is as follows.
\begin{align*}
\rho_{U/I , \mu} ( [f(X)][g(X)] ) = \rho_{U/I , \mu} ( [f(X) g(X)] ) = f( e^{(\mu)} ) g( e^{(\mu)} )
=\rho_{U/I , \mu} ( [f(X)] ) \rho_{U/I , \mu} ( [g(X)] ) .
\end{align*}
\bigskip

\begin{lemma}\label{lem4_7}
For the algebra homomorphism 
$\rho_{U/I , \mu} :  \mathcal{U}_\mathfrak{g}[\hbar] / I(C(X))  \to gl (V^\mu )$,
if $ [f(X)] \in  \mathcal{U}_\mathfrak{g}[\hbar] / I(C(X)) $ is $ \tilde{O} (\hbar^n )$
in the sense of Example \ref{exA_4} in Appendix \ref{ap1}, then
$\rho_{U/I , \mu} ([f(X)])$ is $ \tilde{O} (\hbar^n )$.
\end{lemma}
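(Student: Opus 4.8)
The plan is to reduce the statement to the fact, just verified, that $\rho_{U/I , \mu}$ is a well-defined algebra homomorphism which sends the formal parameter $\hbar$ to the scalar $\hbar(\mu)$. First I would spell out what the hypothesis means. By Example \ref{exA_4} together with Definition \ref{defA2}, ``$[f(X)]$ is $\tilde{O}(\hbar^n)$'' says that the class $[f(X)]$ admits a representative $\tilde{f}(X) \in \mathcal{U}_\mathfrak{g}[\hbar]$ which, expressed in the PBW-ordered monomial basis of $\mathcal{U}_\mathfrak{g}[\hbar]$ over $R = \mathbb{C}[\hbar]$, has every coefficient in $\hbar^n\,\mathbb{C}[\hbar]$; equivalently, $\tilde{f}(X) = \hbar^n\,\tilde{g}(X)$ for some $\tilde{g}(X) \in \mathcal{U}_\mathfrak{g}[\hbar]$. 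This factorization makes sense because, by the argument underlying Lemma \ref{lemma3_4}, $\mathcal{U}_\mathfrak{g}[\hbar]$ is a free $\mathbb{C}[\hbar]$-module on the PBW monomials, so one may divide each coefficient by $\hbar^n$.

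Next I would pass through the homomorphism. Because $\rho_{U/I , \mu}$ has been shown to be well defined --- the computation around (\ref{h=0_1}) and (\ref{h=0_2}) shows that every generator of $I(C(X))$, and hence every element of $I(C(X))$, is sent to $0$ in $gl(V^\mu)$ --- I may replace the given representative by the good one: $\rho_{U/I , \mu}([f(X)]) = \rho_{U/I , \mu}([\tilde{f}(X)]) = \rho_{U/I , \mu}([\hbar^n\,\tilde{g}(X)])$. Since $\rho_{U/I , \mu}$ is $R$-linear and the formal $\hbar$ acts as multiplication by the scalar $\hbar(\mu)$, this equals $\hbar^n(\mu)\,\rho_{U/I , \mu}([\tilde{g}(X)])$.

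Finally I would observe that $\rho_{U/I , \mu}([\tilde{g}(X)])$ is a matrix in $gl(V^\mu)$ whose entries are ordinary polynomials in $\hbar(\mu)$ with fixed complex coefficients --- they are finite sums of products of the matrices $\rho^\mu(e_i)$ with coefficients in $\mathbb{C}[\hbar(\mu)]$ --- hence it is $\tilde{O}(\hbar^0)$, and therefore $\hbar^n(\mu)\,\rho_{U/I , \mu}([\tilde{g}(X)])$ is $\tilde{O}(\hbar^n)$, as claimed. The point that needs the most care --- and which I expect to be the only real obstacle --- is the bookkeeping of $\tilde{O}$ across the quotient: the ideal $I(C(X))$ is not homogeneous in $\hbar$ (its generators $q_U(C^k_i(x)) - \lambda^k_i$ mix powers of $\hbar$), so ``$\tilde{O}(\hbar^n)$'' is genuinely a statement about the existence of a suitable representative, and one must be sure that the representative chosen in the first step is compatible with $\rho_{U/I , \mu}$; this is exactly what the well-definedness of $\rho_{U/I , \mu}$ provides, so once it is invoked the remaining steps are purely formal.
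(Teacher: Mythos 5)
Your proof is correct and follows essentially the same route as the paper: choose an $\tilde{O}(\hbar^n)$ representative of the class, use the well-definedness of $\rho_{U/I,\mu}$ (the ideal is annihilated) to evaluate on that representative, and track the powers of $\hbar$ produced by the substitution $X_i \mapsto \hbar(\mu)\rho^\mu(e_i)$. The only cosmetic difference is that you factor $\hbar^n$ out of the coefficients and treat the remaining image as $\tilde{O}(\hbar^0)$, whereas the paper keeps the coefficients $a_J(\hbar)$ and observes term by term that each monomial contributes $\tilde{O}(\hbar^{\,n+|J|}) \subset \tilde{O}(\hbar^n)$.
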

\begin{proof}
When $ [f(X)]=  \tilde{O} (\hbar^n )$, there exists $g(X) \in \mathcal{U}_\mathfrak{g}[\hbar]$
such that $f(X)= g(X) + I(X) ~ ( I(X) \in {\mathfrak I}) $ and $g(X) = \tilde{O} (\hbar^n )$ by its definition.
$g(X) = \tilde{O} (\hbar^n )$ means that every $a_J (\hbar ) \in {\mathbb C}[\hbar]$ in $g(X) = \sum_{J} a_{J} (\hbar ) X^J $ satisfies
\begin{align*}
\lim_{x\rightarrow 0} \left|
\frac{ a_{J} (x \hbar )}{x^n}
\right| < \infty .
\end{align*}
Recall that $I (e^{(\mu)}) =0$, then
\begin{align}
\rho_{U/I , \mu} ([f(X)]) &= g(e^{(\mu)}) = \sum_{J} a_{J}(\hbar ) e^{(\mu)}_{j_1} \cdots e^{(\mu)}_{j_m}  \notag \\
&= \sum_{J} \hbar^{|J|} a_{J}( \hbar ) \rho^\mu (e_{j_1}) \cdots 
\rho^\mu (e_{j_m}) .
\end{align} 
Here we denote $ |J| :=m$ for $J = (j_1 , j_2 , \cdots, j_m)$.
Then, we find 
\begin{align*}
\rho_{U/I , \mu} ([f(X)]) &= \sum_{J}  \tilde{O} ( \hbar^{n + |J|} ) =   \tilde{O} ( \hbar^n  ) .
\end{align*}
\end{proof}

Recall that $\hbar$ in $ \mathcal{U}_\mathfrak{g}[\hbar] / I(C(X))$ was a variable
introduced independently of $\hbar(\mu)$, at first.
Using the discussions in Subsection \ref{4_1} and this $\rho_{U/I , \mu} $
with Lemma \ref{lem4_7},
we get the following theorem.
\begin{theorem} \label{pre_mat_reg}
The linear map $q^{pre}_\mu : A_\mathfrak{g}  /I(C) \to  gl (V^\mu )$
defined by
\begin{align}
q^{pre}_\mu := \rho_{U/I , \mu} \circ q_{U/I}
\end{align}
is a weak matrix regularization, i.e.,  $\forall [f] , [g]  \in  A_\mathfrak{g} / I(C)$
there exists $P_\mu = \sum_i^D c_i(\hbar(\mu)) E_i  \in T_\mu$, where 
 $c_i(\hbar) $ is a polynomial in $\hbar(\mu)$, such that
\begin{align}
 [ q^{pre}_\mu ( [f] ) ,  q^{pre}_\mu ( [g] ) ] = \hbar   q^{pre}_\mu (  \{ [f ] ,  [g] \}  ) + 
 \hbar^2(\mu) P_\mu .
\end{align}
\end{theorem}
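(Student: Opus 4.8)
The plan is to obtain the estimate simply by composing the two maps whose relevant properties have just been established. Recall that $q_{U/I}:A_\mathfrak{g}/I(C)\to\mathcal{U}_\mathfrak{g}[\hbar]/I(C(X))$ is a quantization, and that after setting $\hbar=\hbar(\mu)$ the map $\rho_{U/I,\mu}:\mathcal{U}_\mathfrak{g}[\hbar]/I(C(X))\to gl(V^\mu)$ is a unital algebra homomorphism which, by Lemma \ref{lem4_7}, carries $\tilde{O}(\hbar^n)$ to $\tilde{O}(\hbar^n)$. So first I would apply the quantization property of $q_{U/I}$ to a pair $[f],[g]\in A_\mathfrak{g}/I(C)$, writing
\[
[q_{U/I}([f]),q_{U/I}([g])]=\hbar\,q_{U/I}(\{[f],[g]\})+R,\qquad R=\tilde{O}(\hbar^2),
\]
an identity in $\mathcal{U}_\mathfrak{g}[\hbar]/I(C(X))$.

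Second, I would apply $\rho_{U/I,\mu}$ to both sides. Since $\rho_{U/I,\mu}$ is an algebra homomorphism it intertwines commutators and commutes with multiplication by the scalar $\hbar\mapsto\hbar(\mu)$, so the left side becomes $[q^{pre}_\mu([f]),q^{pre}_\mu([g])]$ and the first term on the right becomes $\hbar(\mu)\,q^{pre}_\mu(\{[f],[g]\})$. By Lemma \ref{lem4_7} we have $\rho_{U/I,\mu}(R)=\tilde{O}(\hbar^2(\mu))$ in $gl(V^\mu)$, which already yields
\[
[q^{pre}_\mu([f]),q^{pre}_\mu([g])]=\hbar(\mu)\,q^{pre}_\mu(\{[f],[g]\})+\tilde{O}(\hbar^2(\mu)).
\]

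The remaining step, which I expect to be the only delicate point, is to replace ``$\tilde{O}(\hbar^2(\mu))$'' by the explicit form $\hbar^2(\mu)P_\mu$ with $P_\mu=\sum_i^D c_i(\hbar(\mu))E_i\in T_\mu$ and $c_i\in\mathbb{C}[\hbar(\mu)]$. For this I would trace through the construction and note that every quantity occurring is polynomial in $\hbar(\mu)$ with no negative powers: the reduced Gr\"obner basis of $I(C)$, and hence each remainder $r_{f,G}$, has coefficients polynomial in $\hbar(\mu)$ (the only such dependence entering through the fixed eigenvalues $\lambda^k_i$, which are $\tilde{O}(\hbar^k(\mu))$); $q_U$ introduces no $\hbar$; and $\rho_{U/I,\mu}$ sends each generator $X_i$ to $e^{(\mu)}_i=\hbar(\mu)\rho^\mu(e_i)$, so $q^{pre}_\mu([f])$, $q^{pre}_\mu(\{[f],[g]\})$, and therefore the error matrix $\rho_{U/I,\mu}(R)$, are matrices with entries polynomial in $\hbar(\mu)$. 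An $\hbar(\mu)$-polynomial that is $\tilde{O}(\hbar^2(\mu))$ is divisible by $\hbar^2(\mu)$, so $\rho_{U/I,\mu}(R)=\hbar^2(\mu)P_\mu$ for a matrix $P_\mu$ with polynomial entries; since the image of $\rho_{U/I,\mu}$ is contained in $T_\mu$, expanding $P_\mu$ in the basis $E_1,\dots,E_D$ gives $P_\mu=\sum_i^D c_i(\hbar(\mu))E_i$ with $c_i\in\mathbb{C}[\hbar(\mu)]$, which is precisely what the definition of weak matrix regularization requires. Linearity of $q^{pre}_\mu$, inherited from that of $q_{U/I}$ and $\rho_{U/I,\mu}$, would let one reduce to monomial $f,g$ at the outset if desired, though it is not needed here.
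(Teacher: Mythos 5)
Your proposal is correct and follows essentially the same route as the paper, which obtains the theorem by composing the quantization property of $q_{U/I}$ from Subsection \ref{4_1} with the algebra homomorphism $\rho_{U/I,\mu}$ and invoking Lemma \ref{lem4_7} to control the error term. Your final step, upgrading the abstract $\tilde{O}(\hbar^2(\mu))$ to the explicit form $\hbar^2(\mu)P_\mu$ with $P_\mu=\sum_i c_i(\hbar(\mu))E_i$ by noting that every quantity involved is polynomial in $\hbar(\mu)$, is a detail the paper leaves implicit, and you supply it correctly.
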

\bigskip

The $q^{pre}_\mu : A_\mathfrak{g}  /I(C) \to  gl (V^\mu )$ introduced above can be
said to be sufficiently weak matrix regularization.
Here, we consider imposing the restriction that the degree of the polynomial 
in the domain of definition, excluding the kernel of the matrix regularization, 
is less than or equal to $n_\mu$. 
This restriction makes $q^{pre}_\mu$
be the generalization of the 
matrix regularization in the case of  
$q_\mu : A_\mathfrak{g}  \to  gl (V^\mu )$ or fuzzy sphere.\\

We introduce a projection map 
$R_\mu : gl (V^\mu )[\hbar(\mu)]  \to gl (V^\mu ) [\hbar(\mu)] $ as follows.
Here $\hbar(\mu)$ is considered to be a variable. 
(Note that $\hbar(\mu)$ was chosen to satisfy  
$ q_U ( C^k_i (x) ) = \lambda^k_i Id $. 
The eigenvalue of the Casimir operator, $\lambda^k_i$, 
can also be freely chosen by the scaling of $\hbar(\mu)$. 
In this sense, we use $\hbar(\mu)$ as a variable.)
Any $M(\hbar(\mu) ) \in gl (V^\mu ) [\hbar(\mu)] $ is expressed as
$M(\hbar(\mu) )= \sum_{0 \le k} \hbar(\mu)^k M_k = \sum_{0 \le k \le n_\mu} \hbar(\mu)^k M_k + \tilde{O}(\hbar(\mu)^{n_\mu +1})$, where each $M_k \in gl(V^\mu)$ does not depend on $\hbar(\mu)$. 
For any $M(\hbar(\mu) )$, we define 
$R_\mu $ by
\begin{align}
R_\mu (M) := \sum_{0 \le k \le n_\mu} \hbar(\mu)^k M_k .
\end{align}
Using this $R_\mu$, let us define a matrix regularization for $A_\mathfrak{g}  /I(C)$.
\begin{definition}
We call 
\begin{align}
q_{A/I ,\mu} : = R_\mu \circ q^{pre}_\mu = R_\mu \circ  \rho_{U/I , \mu} \circ q_{U/I} : A_\mathfrak{g}  /I(C) \to  gl (V^\mu )
\end{align}
a quantization of $A_\mathfrak{g}  /I(C)$.
\end{definition}

\begin{theorem}
$q_{A/I, \mu} $
is a weak matrix regularization, i.e., for $\forall f, g \in A_\mathfrak{g} $, 
there exists $P = \sum_i^D c_i(\hbar(\mu)) E_i  \in T_\mu$,
where each $c_i(\hbar(\mu)) $ is a polynomial in $\hbar(\mu)$, such that
\begin{align}
[ q_{A/I, \mu} ( [ f ] ) ,  q_{A/I, \mu} ( [ g ] ) ] = \hbar(\mu ) q_{A/I, \mu} ( \{ [f] , [g] \} ) + \hbar^2(\mu) P.
\label{q_AI_quantization}
\end{align}
\end{theorem}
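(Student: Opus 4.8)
The plan is to exploit the fact that $q_{A/I,\mu}$ differs from $q^{pre}_\mu = \rho_{U/I,\mu}\circ q_{U/I}$, which is already a weak matrix regularization by Theorem \ref{pre_mat_reg}, only by terms of high order in $\hbar(\mu)$, so that the defining relation (\ref{q_AI_quantization}) survives the truncation $R_\mu$ with an error that remains inside the allowed $\hbar^2(\mu)P$.

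First I would unwind $q_{A/I,\mu}$ on a class $[f]$ with reduced remainder $r_f := r_{f,G}$ modulo the fixed reduced Gr\"obner basis $G$ of $I(C)$. Since $q_{U/I}([f]) = [q_U(r_f)]$ and $\rho_{U/I,\mu}$ sends $[g(X)]$ to $\rho_{U\mu}(g(X))$, one has $q^{pre}_\mu([f]) = \rho_{U\mu}(q_U(r_f))$. The key point is that $q_U$ preserves polynomial degree (it merely symmetrizes), while $\rho_{U\mu}$ sends a symmetrized degree-$k$ monomial to $e^{(\mu)}_{(i_1,\cdots,i_k)}$, which equals $\hbar^{k}(\mu)$ times a matrix independent of $\hbar(\mu)$; hence the coefficient of $\hbar^{j}(\mu)$ in $q^{pre}_\mu([f])$ is assembled solely from the degree-$j$ part of $r_f$. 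Therefore $R_\mu$ discards exactly the terms with $j > n_\mu$, and $q^{pre}_\mu([f]) = q_{A/I,\mu}([f]) + T_f$ with $T_f = \tilde{O}(\hbar^{n_\mu+1}(\mu))$; since the basis $\{E_i\}$ of $T_\mu$ contains $Id$ and the $e^{(\mu)}_i$, we have $n_\mu \ge 1$, so $T_f = \tilde{O}(\hbar^{2}(\mu))$, and likewise for $T_g$ and for $T_{\{f,g\}} := q^{pre}_\mu(\{[f],[g]\}) - q_{A/I,\mu}(\{[f],[g]\})$.

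Then I would substitute these decompositions into the identity of Theorem \ref{pre_mat_reg},
$$[q^{pre}_\mu([f]),q^{pre}_\mu([g])] = \hbar(\mu)\,q^{pre}_\mu(\{[f],[g]\}) + \hbar^2(\mu)P',\qquad P' = \sum_i^D c'_i(\hbar(\mu))E_i .$$
On the left, $[q_{A/I,\mu}([f])+T_f,\ q_{A/I,\mu}([g])+T_g]$ equals $[q_{A/I,\mu}([f]),q_{A/I,\mu}([g])]$ plus the three cross-commutators, and the latter are $\tilde{O}(\hbar^2(\mu))$ because $q_{A/I,\mu}([f]),q_{A/I,\mu}([g]) \in T_\mu$ while $T_f,T_g = \tilde{O}(\hbar^2(\mu))$. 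On the right, $\hbar(\mu)\,q^{pre}_\mu(\{[f],[g]\}) = \hbar(\mu)\,q_{A/I,\mu}(\{[f],[g]\}) + \hbar(\mu)T_{\{f,g\}}$, and $\hbar(\mu)T_{\{f,g\}} = \tilde{O}(\hbar^{n_\mu+2}(\mu)) = \tilde{O}(\hbar^2(\mu))$. Collecting all the $\tilde{O}(\hbar^2(\mu))$ contributions, which lie in $T_\mu$ and can hence be written as $\hbar^2(\mu)\sum_i c_i(\hbar(\mu))E_i$, yields exactly (\ref{q_AI_quantization}).

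\textbf{The main obstacle} is the bookkeeping in the first step: one must check carefully that under $q^{pre}_\mu = \rho_{U\mu}\circ q_U$ the power of $\hbar(\mu)$ carried by each term equals precisely the total degree of the corresponding monomial of $r_{f,G}$, so that the polynomial-degree cutoff built into $R_\mu$ agrees with truncating the $\hbar(\mu)$-expansion at order $n_\mu$; one must also keep $n_\mu \ge 1$ in play so that the discarded tails are genuinely $\tilde{O}(\hbar^2(\mu))$ and not merely $\tilde{O}(\hbar(\mu))$. A more hands-on alternative, which avoids Theorem \ref{pre_mat_reg}, is to note first that $q_{A/I,\mu}([f]) = q_\mu(r_{f,G})$ with $q_\mu$ as in Definition \ref{def_FuzzyQ}, then invoke the weak matrix regularization of $q_\mu$ proved earlier, and finally control $q_\mu$ of the Casimir-ideal remainder $h_{\{r_f,r_g\}}$ by reducing, via linearity, to $q_\mu\big((C^k_i(x)-\lambda^k_i)m\big) = q_\mu(C^k_i(x)m) - q_\mu(C^k_i(x))\,q_\mu(m)$ for a monomial $m$ (using $q_\mu(C^k_i(x)) = \lambda^k_i\,Id$): this is $\tilde{O}(\hbar(\mu))$ by Proposition \ref{prop3_7} when $\deg C^k_i + \deg m \le n_\mu$, while if $\deg C^k_i + \deg m > n_\mu$ the first term vanishes by the degree cutoff and the second is $\tilde{O}(\hbar^2(\mu))$ since $\lambda^k_i = \tilde{O}(\hbar^k(\mu))$ with $k \ge 1$.
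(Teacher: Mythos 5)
Your proposal is correct and follows essentially the paper's own argument: your split $q^{pre}_\mu([f]) = q_{A/I,\mu}([f]) + T_f$ is exactly the paper's decomposition into $F_1 + F_2$ (and $G_1+G_2$), and, like the paper, you combine Theorem \ref{pre_mat_reg} with the fact that the discarded tails carry $\hbar$-order at least $n_\mu+1 \ge 2$ (Lemma \ref{lemma3_2}) to absorb all remainders into $\hbar^2(\mu)P$. The only differences are organizational — the paper applies $R_\mu$ to $\hbar^{-1}[q^{pre}_\mu([r_f]),q^{pre}_\mu([r_g])]$ and identifies the result with the low-degree part of the commutator, whereas you expand the commutator of the decomposed pieces; your ``hands-on alternative'' via $q_\mu$, Proposition \ref{prop3_7} and the Casimir relation mirrors the paper's treatment in Proposition \ref{prop4_12}.
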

\begin{proof}
Since it is linear, it is sufficient to show the case of monomials.
Recall that we fix a monomial ordering for every Lie-Poisson variety by the graded lexicographic ordering.
$\forall f, g \in A_\mathfrak{g} $ with a reduced Gr\"obner basis of $I(C)$
is uniquely expressed as
$f = r_f + h_f, g= r_g + h_g $, where $h_i  \in I(C) $ and
$r_i \notin I(C)$ for any $i \in A_\mathfrak{g}$.
We split $q^{pre}_\mu ( [ r_i ] ) $  as
\begin{align*}
q^{pre}_\mu ( [ r_f ] ) &= F_1 + F_2 ,\\
q^{pre}_\mu ( [ r_g ] ) &= G_1 + G_2 , \\
[F_1 , G_1 ] &= \hbar(\mu)( FG_1 + FG_2 ) ,
\end{align*}
where $\deg F_1 \le n_\mu , \deg G_1 \le n_\mu , \deg FG_1 \le n_\mu$, 
and the degree of any monomial in $F_2$, $G_2$ and $FG_2$ is greater than $n_\mu $.
Using this notation,
\begin{align}
[ q_{A/I, \mu} ( [ f ] ) ,  q_{A/I, \mu} ( [ g ] ) ] 
= [ F_1 , G_1 ] = \hbar(\mu)( FG_1 + FG_2 ). \label{4_36}
\end{align}
On the other hand,
\begin{align}
 \hbar(\mu ) q_{A/I, \mu} ( \{ [f] , [g] \} ) 
 &=  \hbar(\mu ) R_\mu  \circ q^{pre}_\mu ( \{ [r_f] , [r_g] \} ) \notag \\
 &=  \hbar(\mu )  R_\mu  ( \frac{1}{\hbar(\mu)}[ q^{pre}_\mu ( [r_f] ) ,  q^{pre}_\mu ( [r_g] ) ] )
  + \hbar^2 (\mu) P_1 , \label{4_37}
\end{align}
where we use Theorem \ref{pre_mat_reg}.
We denote elements in $T_\mu$ of the same 
type as $P$ by $P_i = \sum_j^D c_j^i (\hbar(\mu)) E_j  $ for $i = 1,2$. 
($P_2$ will be used soon.)
There is a notation of $1/\hbar(\mu)$, but this will not cause any misunderstanding because it cancels out with the $\hbar(\mu)$ 
that arises from the commutator.
The first term in the right-hand side of (\ref{4_37}) is written as
\begin{align}
R_\mu  (\frac{1}{\hbar(\mu)} [ q^{pre}_\mu ( [r_f] ) ,  q^{pre}_\mu ( [r_g] ) ] )
&= R_\mu  \left( \frac{1}{\hbar(\mu)}( [F_1 , G_1 ] + [F_1 , G_2 ] + [ F_2 , G_1 ] + [F_2, G_2] ) \right)
\end{align}
If any of $[F_1 , G_2 ] , [ F_2 , G_1 ] $, or $[F_2, G_2]$ is not $0$,
then its degree is greater than or equal to $n_\mu +2$ and thus
$\displaystyle R_\mu  \big( \frac{1}{\hbar(\mu)}( [F_1 , G_2 ] + [ F_2 , G_1 ] + [F_2, G_2] ) \big) = 0$. 
Here, we used the fact that the commutator product does not change the degree of $\hbar(\mu)$.
Therefore,
\begin{align}
R_\mu  (\frac{1}{\hbar(\mu)} [ q^{pre}_\mu ( [r_f] ) ,  q^{pre}_\mu ( [r_g] ) ] )&= R_\mu  ( \frac{1}{\hbar(\mu)} [F_1 , G_1 ] ) =  FG_1 . \label{4_38}
\end{align}
From (\ref{4_36}), (\ref{4_37}), and (\ref{4_38}),
\begin{align}
[ q_{A/I, \mu} ( [ f ] ) ,  q_{A/I, \mu} ( [ g ] ) ] 
- \hbar(\mu ) q_{A/I, \mu} ( \{ [f] , [g] \} ) 
= \hbar(\mu) FG_2   - \hbar^2(\mu) P_1  . \label{4_39}
\end{align}
Here $FG_2 = \tilde{O} (\hbar^{n_\mu +1}(\mu ) )$.
From Lemma \ref{lemma3_2}, $FG_2$ is expressed as $\hbar(\mu) P_2$. 
Then  we find that (\ref{q_AI_quantization}) is satisfied.
\end{proof}

\bigskip

For $f= h + r_{f,G} $ with
$ \displaystyle
r_{f,G}= \sum_{J} a_{J} x^{J}
= \sum_{\deg x^{J} \le n_\mu } \!\!\!\!\!
a_{J} x^{J}
+
 \sum_{\deg x^{J} > n_\mu  } \!\!\!\!\!
a_{J} x^{J} 
$, the explicit calculation of 
$q_{A/I, \mu}  : A_\mathfrak{g} / I(C)\to gl(V^\mu)$ is given as
\begin{align}
q_{A/I, \mu}  ([f(x)] ) =
\sum_{m \le n_\mu } \!
a_{J}~ \rho_{U/I , \mu}  ([ X_{(j_1 , \cdots  , j_m )} ])
= \sum_{m \le n_\mu } \!
a_{J}~ e_{(j_1 , \cdots  , j_m )}^{(\mu )}.
\end{align}
As can be easily seen from the definition of $q_{A/I, \mu}$, when we chose $I(C) =\{ 0 \} $, 
it is the same as $q_\mu$. 
Therefore, $q_{A/I, \mu}$ is a generalization of $q_\mu$ in Section \ref{3_2}.\\
\bigskip


In the following, we shall consider the approximate homomorphism property once again.
\begin{lemma}\label{lem4_11}
Consider $f= r_f + h_f \in  A_\mathfrak{g}$ with $\deg r_f \le n_\mu$. 
Here $r_f$ and $h_f$ are used in the sense defined in Theorem \ref{reducedGrobner}.
Then we have
$$ q_{A/I ,\mu} ([f]) = q_\mu (r_f) .$$
\end{lemma}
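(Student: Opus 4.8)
The plan is to unwind the threefold composition $q_{A/I,\mu} = R_\mu \circ \rho_{U/I,\mu} \circ q_{U/I}$ applied to the class $[f]$ and compare the result term by term with Definition \ref{def_FuzzyQ}. First I would observe that, since we have fixed the graded lexicographic monomial ordering, the decomposition $f = r_f + h_f$ with $h_f \in I(C)$ and $r_f \notin I(C)$ in the sense of Theorem \ref{reducedGrobner} is precisely the Gr\"obner-basis remainder, i.e.\ $r_f = r_{f,G}$; hence $q_{U/I}([f]) = [\,q_U(r_f)\,]$ by Definition \ref{def_q_U_2}. Writing $r_f = \sum_J a_J x^J$ with $x^J = x_{j_1}\cdots x_{j_m}$ and $a_J \in \mathbb{C}$, the hypothesis $\deg r_f \le n_\mu$ says that every multi-index $J$ occurring satisfies $|J| = m \le n_\mu$.

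Next I would apply $\rho_{U/I,\mu}$. Since it is an algebra homomorphism with $\rho_{U/I,\mu}(X_i) = e^{(\mu)}_i$ and $q_U(x^J) = X_{(j_1,\dots,j_m)}$, linearity gives $\rho_{U/I,\mu}([\,q_U(r_f)\,]) = \sum_J a_J\, e^{(\mu)}_{(j_1,\dots,j_m)}$, the symmetrization being preserved under the homomorphism. Then I would apply $R_\mu$: each term $e^{(\mu)}_{(j_1,\dots,j_m)} = \hbar(\mu)^m \rho^\mu(e_{(j_1,\dots,j_m)})$ has $\hbar(\mu)$-degree $m \le n_\mu$, so $R_\mu$ fixes every term, and therefore $q_{A/I,\mu}([f]) = \sum_J a_J\, e^{(\mu)}_{(j_1,\dots,j_m)}$ — equivalently, this is just the explicit formula for $q_{A/I,\mu}$ recorded before the lemma, with the truncation $\sum_{m \le n_\mu}$ rendered vacuous by the hypothesis $\deg r_f \le n_\mu$.

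Finally, rewriting each monomial $x^J$ of $r_f$ with completely symmetric coefficients and invoking Definition \ref{def_FuzzyQ}, one has $q_\mu(x^J) = e^{(\mu)}_{(j_1,\dots,j_m)}$ for $|J| \le n_\mu$, so $q_\mu(r_f) = \sum_J a_J\, e^{(\mu)}_{(j_1,\dots,j_m)}$. Comparing the two expressions yields $q_{A/I,\mu}([f]) = q_\mu(r_f)$, as claimed.

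There is no genuine obstacle here; the proof is a matter of carefully tracking definitions. The one point that needs attention is the $\hbar(\mu)$-grading bookkeeping: one must confirm that $\rho_{U/I,\mu}$ of a symmetrized degree-$m$ monomial carries exactly the factor $\hbar(\mu)^m$ — so that $R_\mu$ acts as the identity on it precisely when $m \le n_\mu$ — and that passing to the quotient by $I(C(X))$ cannot raise the $\hbar(\mu)$-power of any term of $r_f$, since the Casimir relations enter only implicitly and $\rho_{U/I,\mu}$ is already compatible with them (equations \eqref{h=0_1}--\eqref{h=0_2}). It is also worth stating explicitly that the $r_f$ appearing in the lemma is the Gr\"obner remainder $r_{f,G}$, so that $q_{U/I}$ may be evaluated directly on $[q_U(r_f)]$.
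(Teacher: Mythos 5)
Your proof is correct and follows essentially the same route as the paper: unwind $q_{A/I,\mu}=R_\mu\circ\rho_{U/I,\mu}\circ q_{U/I}$, use that $r_f$ is the Gr\"obner remainder so $q_{U/I}([f])=[q_U(r_f)]$, note that $\deg r_f\le n_\mu$ makes $R_\mu$ act as the identity, and identify the result with $q_\mu(r_f)$ via Definition \ref{def_FuzzyQ}. The extra bookkeeping you mention (the $\hbar(\mu)^m$ factor and the compatibility of $\rho_{U/I,\mu}$ with the ideal $I(C(X))$) is exactly what the paper relies on through the well-definedness of $\rho_{U/I,\mu}$, so no gap remains.
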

\begin{proof}
By the definition,
$
q_{A/I ,\mu}([f]) = R_\mu \circ  \rho_{U/I , \mu} ([ q_U (r_f) ]) = \rho_{U,\mu} (q_U (r_f)),
$
where we use $\deg r_f \le n_\mu$.
Let us write $r_f = \sum_I a_I x^I$ $( |I| \le n_\mu)$. 
Then we get $q_{A/I ,\mu}([f]) =  \sum_I a_I e^{(\mu)}_{(I)} = q_\mu (r_f).$
\end{proof}

The following proposition is also obtained.
\begin{proposition}\label{prop4_12}
For any $[f], [g] \in A_\mathfrak{g} / I(C)$ with $\deg r_f +\deg r_g \le n_\mu$ and $\deg r_{fg} \le n_\mu$,
there exists $P = \sum_i^{D} c_i(\hbar(\mu)) E_i  \in T_\mu$
with $c_i(\hbar) \in {\mathbb C}[\hbar (\mu )]$ satisfying
$$
q_{A/I, \mu} ( [ f ] )  q_{A/I, \mu} ( [ g ] ) = q_{A/I, \mu} ( [f] [g]) + \hbar(\mu)  P .
$$
Here $r_f$, $r_g$, $h_f$, and $h_g$ are used in the sense defined in Theorem \ref{reducedGrobner}.
\end{proposition}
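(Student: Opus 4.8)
The plan is to reduce the statement, via Lemma \ref{lem4_11}, to a claim about the plain quantization $q_\mu$ of Subsection \ref{3_2}, and then to control the error terms by combining Proposition \ref{prop3_7} with the enveloping-algebra estimate already used to prove that $q_{U/I}$ is a quantization. First I would dispose of the degenerate cases $r_f=0$ or $r_g=0$, in which $[f]$ or $[g]$ is the zero class and both sides vanish; so assume $r_f,r_g\neq 0$, whence $\deg r_f\le n_\mu$ and $\deg r_g\le n_\mu$ follow from $\deg r_f+\deg r_g\le n_\mu$. Lemma \ref{lem4_11} then gives $q_{A/I,\mu}([f])=q_\mu(r_f)$ and $q_{A/I,\mu}([g])=q_\mu(r_g)$, and, since $[f][g]=[fg]$ and $\deg r_{fg}\le n_\mu$ by hypothesis, also $q_{A/I,\mu}([f][g])=q_\mu(r_{fg})$.

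Next I would relate the two remainders: because $fg-r_fr_g=r_fh_g+h_fr_g+h_fh_g\in I(C)$, the uniqueness in Theorem \ref{reducedGrobner} forces $r_{fg}=r_{r_fr_g}$, so writing $r_fr_g=r_{r_fr_g}+h$ with $h:=h_{r_fr_g}\in I(C)$ one has $\deg h\le n_\mu$, since both $\deg(r_fr_g)\le\deg r_f+\deg r_g\le n_\mu$ and $\deg r_{r_fr_g}=\deg r_{fg}\le n_\mu$. Proposition \ref{prop3_7}, applied to $r_f$ and $r_g$ (whose degrees sum to at most $n_\mu$), produces $P_1=\sum_i^D c^1_i(\hbar(\mu))E_i$ with $q_\mu(r_f)q_\mu(r_g)=q_\mu(r_fr_g)+\hbar(\mu)P_1$, and linearity of $q_\mu$ rewrites this as $q_\mu(r_{r_fr_g})+q_\mu(h)+\hbar(\mu)P_1=q_{A/I,\mu}([f][g])+q_\mu(h)+\hbar(\mu)P_1$.

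What remains, and is the only step needing real care, is to show $q_\mu(h)=\hbar(\mu)P_2$ for a suitable $P_2$. Since $h\in I(C)$, write $h=\sum_i f_i^C(x)k_i(x)$ with the $f_i^C$ Casimir polynomials; expanding the $k_i$ into monomials and using linearity, the estimate $[q_U(C(x)x^I)]=\tilde{O}(\hbar)$ in $\mathcal{U}_\mathfrak{g}[\hbar]/I(C(X))$ that was established in the proof that $q_{U/I}$ is a quantization yields $[q_U(h)]=\tilde{O}(\hbar)$. Because $\deg h\le n_\mu$, one has $q_\mu(h)=\rho_{U\mu}(q_U(h))=\rho_{U/I,\mu}([q_U(h)])$ exactly as in the proof of Lemma \ref{lem4_11}, and Lemma \ref{lem4_7} propagates $\tilde{O}(\hbar)$ through $\rho_{U/I,\mu}$, giving $q_\mu(h)=\hbar(\mu)P_2$ with $P_2=\sum_i^D c^2_i(\hbar(\mu))E_i$, $c^2_i\in\mathbb{C}[\hbar(\mu)]$. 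Combining the displays gives $q_{A/I,\mu}([f])q_{A/I,\mu}([g])=q_{A/I,\mu}([f][g])+\hbar(\mu)(P_1+P_2)$, so $P:=P_1+P_2$ works. I expect the main obstacle to be precisely this last step: one must be sure that $\deg h\le n_\mu$ genuinely lets one identify $q_\mu(h)$ with $\rho_{U\mu}(q_U(h))$, so that the enveloping-algebra estimate and Lemma \ref{lem4_7} apply; the remaining manipulations are routine bookkeeping with Lemma \ref{lem4_11} and Proposition \ref{prop3_7}.
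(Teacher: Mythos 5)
Your proposal is correct, and its skeleton (disposing of the degenerate classes, reducing via Lemma \ref{lem4_11} to $q_\mu(r_f)$, $q_\mu(r_g)$, $q_\mu(r_{fg})$, the identity $r_{fg}=r_{r_fr_g}$ from uniqueness in Theorem \ref{reducedGrobner}, and Proposition \ref{prop3_7} for $q_\mu(r_f)q_\mu(r_g)=q_\mu(r_fr_g)+\hbar(\mu)P_1$) coincides with the paper's. Where you diverge is the last step, controlling $q_\mu(h_{r_fr_g})$: the paper writes $h_{r_fr_g}=\sum f_i^C(x)g_i(x)$, applies Proposition \ref{prop3_7} once more to get $q_\mu(f_i^Cg_i)=q_\mu(f_i^C)q_\mu(g_i)+\hbar(\mu)P_2$, and then kills the leading term outright using the fixed Casimir relation (\ref{fixed_casimir_relation}), $q_\mu(f_i^C)=q_\mu(C^k_i)-\lambda^k_i=0$; you instead route through the enveloping algebra, reusing the estimate $[q_U(f^C(x)x^I)]=\tilde{O}(\hbar)$ from the proof that $q_{U/I}$ is a quantization, the identification $q_\mu(h)=\rho_{U\mu}(q_U(h))=\rho_{U/I,\mu}([q_U(h)])$ (valid since $\deg h\le n_\mu$), and Lemma \ref{lem4_7}. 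Both arguments work, and in your version the Casimir relation enters only implicitly, through the well-definedness of $\rho_{U/I,\mu}$ (equation (\ref{h=0_2})). The trade-off is that the paper's route delivers directly the exact shape demanded by the statement, $\hbar(\mu)P_2$ with $P_2=\sum_i c_i(\hbar(\mu))E_i$ and $c_i\in\mathbb{C}[\hbar(\mu)]$, whereas Lemma \ref{lem4_7} as stated only yields $\tilde{O}(\hbar)$; to close that small gap you must unpack its proof, noting that the representative $h(X)=\sum_J a_J(\hbar)X^J$ has coefficients in $\mathbb{C}[\hbar]$ with $a_J(0)=0$, so $a_J=\hbar b_J$ and expanding the resulting products $e^{(\mu)}_{j_1}\cdots e^{(\mu)}_{j_m}$ in the basis $E_i$ (with coefficients polynomial in $\hbar(\mu)$, as in Lemma \ref{lemma3_2}) gives the required $P_2$. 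With that one-line supplement your proof is complete; it is slightly heavier machinery than the paper's, but it has the mild advantage of not needing to re-express $h_{r_fr_g}$ with degree control on each summand $f_i^C g_i$ before invoking Proposition \ref{prop3_7}.
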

\begin{proof}
Note that $[fg]= [(r_f +h_f)(r_g+h_g)] = [r_f r_g ] $, $[fg]=[r_{fg} ]$, and $r_f r_g = r_{r_f r_g} + h_{r_f r_g}$.
Then we obtain
\begin{align*}
 [f] [g] = [fg] =  [r_{fg}] =  [r_f r_g ] =  [r_{r_f r_g} ].
\end{align*}
From $\deg r_f +\deg r_g \le n_\mu$ $R_\mu$ acts as an identity,
then
\begin{align*}
q_{A/I, \mu} ([f] [g])  =   q_{A/I, \mu} ([r_{r_f r_g} ]) =  q_\mu  (r_{r_f r_g})=    q_{\mu}  ({r_f r_g}- h_{r_f r_g} )  .
\end{align*}
Here we use Lemma \ref{lem4_11}. 
On the other hand, using  Lemma \ref{lem4_11} and Proposition \ref{prop3_7}, 
there exists $P_1$ such that
$$
q_{A/I, \mu} ( [ f ] )  q_{A/I, \mu} ( [ g ] )= q_\mu ( r_f ) q_\mu (r_g) 
= q_\mu   (r_f r_g )  +  \hbar(\mu)  P_1 .
$$
Therefore, if $q_{\mu}  (h_{r_f r_g} ) = \hbar(\mu) P_2 $ is shown, this proposition is proven.
Here, we denoted elements in $T_\mu$ of the same 
type as $P$ by $P_i = \sum_j^D c_j^i (\hbar(\mu)) E_j  $ for $i = 1,2$.
Note that $h_{r_f r_g} $ is expressed as $\sum f_i^C(x) g_i(x) $ in (\ref{4_23})
and $\deg h_{r_f r_g} \le n_\mu$. So, using Proposition \ref{prop3_7} again, we have
$$
q_{\mu}  (h_{r_f r_g} ) = \sum q_{\mu} ( f_i^C(x) g_i(x) ) = 
\sum q_{\mu} ( f_i^C(x) ) q_{\mu} ( g_i(x) ) +  \hbar(\mu)  P_2.
$$
From the Casimir relation (\ref{fixed_casimir_relation}), i.e.
$q_{\mu} ( f_i^C(x) ) = q_\mu ( C^k_i(x) ) - \lambda^k_i = 0$,
we find that the desired result is obtained.
\end{proof}

\bigskip

Let us show that, 
as in Proposition \ref{prop_Kinji}, there is an asymptotic homomorphism between 
algebra $A_\mathfrak{g} / I(C)$ and algebra $T_\mu$ with a fixed basis $E_1 , \cdots , E_D$.
We define a linear map $[\phi_\mu ] : T_\mu \to  A_\mathfrak{g} / I(C)$ by 
\begin{align}
[\phi_\mu ] (e^{(\mu)}_{(i_1 , \cdots  , i_k ) } ) := [ x_{i_1} \cdots x_{i_k} ]
\end{align}
for $E_I = e^{(\mu)}_{(i_1 , \cdots  , i_k ) } $.

Then the following is obtained.
\begin{proposition}\label{prop_Kinji_2}
Let $E_{I^k_i} = e^{(\mu)}_{(i_1 , \cdots , i_k)}, E_{I^l_j} = e^{(\mu)}_{(j_1, \cdots , j_l)}
$ and $E_{I^k_i \sqcup  I^l_j }= e^{(\mu)}_{(i_1 , \cdots , i_k , j_1, \cdots , j_l)}$ be elements of basis of $ T_\mu$.
Then,
\begin{align}
[\phi_{\mu} ]( E_{I^k_i} ) [\phi_{\mu}] ( E_{I^l_j}  ) = [\phi_{\mu} ](E_{I^k_i}  E_{I^l_j} ) +  \tilde{O}(\hbar({\mu}) ).
\end{align}
\end{proposition}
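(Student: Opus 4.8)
The plan is to transcribe the proof of Proposition~\ref{prop_Kinji} into the quotient setting, inserting the projection $[\,\cdot\,]$ at each step and then checking that the error bound survives the passage to $A_\mathfrak{g}/I(C)$. First I would note that, just as for $\phi_\mu$, the map $[\phi_\mu]$ is defined relative to a fixed choice of symmetrized expression $E_{I^k_i}=e^{(\mu)}_{(i_1,\ldots,i_k)}$ for each basis element, so that it is an unambiguous linear map on $T_\mu$; that distinct monomials may become equal in $A_\mathfrak{g}/I(C)$ causes no difficulty, since the quotient appears only on the target side.

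Using $[f][g]=[fg]$ in $A_\mathfrak{g}/I(C)$, the left-hand side equals $[x_{i_1}\cdots x_{i_k}x_{j_1}\cdots x_{j_l}]$. For the right-hand side I would reorder the matrix product $e^{(\mu)}_{i_1}\cdots e^{(\mu)}_{i_k}e^{(\mu)}_{j_1}\cdots e^{(\mu)}_{j_l}$ into the fully symmetrized element $e^{(\mu)}_{(i_1,\ldots,i_k,j_1,\ldots,j_l)}=E_{I^k_i\sqcup I^l_j}$ by repeated use of the commutation relation~(\ref{hbarCommRel}), exactly as in Proposition~\ref{prop_Kinji}. Since $E_{I^k_i\sqcup I^l_j}$ is by hypothesis a basis element, the resulting corrections can be written as $\sum_{m<k+l}\sum_n \hbar(\mu)^{k+l-m}\,c_{I^m_n,(i_1,\ldots,j_l)}\,E_{I^m_n}$ with constants $c_{I^m_n,(i_1,\ldots,j_l)}$ independent of $\hbar(\mu)$. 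Applying the linear map $[\phi_\mu]$ and subtracting, the difference of the two sides is the finite sum
\[
\sum_{m<k+l}\sum_n \hbar(\mu)^{k+l-m}\,c_{I^m_n,(i_1,\ldots,j_l)}\,[x_{n_1}\cdots x_{n_m}],
\]
each term of which is a strictly positive power of $\hbar(\mu)$ times an $\hbar$-independent class; as a finite sum of $\tilde{O}(\hbar(\mu))$ terms is again $\tilde{O}(\hbar(\mu))$, the claim follows.

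The one point requiring more care than in Proposition~\ref{prop_Kinji} is that the ideal $I(C)$ of~(\ref{4_23}) itself depends on $\hbar(\mu)$ through the Casimir eigenvalues $\lambda^k_i$. I would therefore establish the estimate first at the level of $A_\mathfrak{g}$, where all coefficients are $\hbar$-independent and the argument is literally that of Proposition~\ref{prop_Kinji}, and only then project, invoking $[\tilde{O}(\hbar)]=\tilde{O}(\hbar)$ (in the sense of Definition~\ref{defA2}) to carry the bound down to $A_\mathfrak{g}/I(C)$. Beyond this there is no real obstacle: the statement is essentially Proposition~\ref{prop_Kinji} composed with the quotient map, and the several-factor version analogous to Proposition~\ref{prop_Kinji_general} follows by the same reasoning.
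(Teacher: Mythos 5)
Your proposal is correct and follows essentially the same route as the paper: the paper's own proof simply observes that $[x^\alpha][x^\beta]=[x^\alpha x^\beta]$ and repeats the argument of Proposition \ref{prop_Kinji} verbatim with $\phi_\mu$ replaced by $[\phi_\mu]$. Your additional care in establishing the bound in $A_\mathfrak{g}$ first and then projecting via $[\tilde{O}(\hbar)]=\tilde{O}(\hbar)$ (Definition \ref{defA2}) is a harmless, slightly more explicit rendering of the same step.
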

Since $[x^\alpha][x^\beta]=[x^\alpha x^\beta]$, the proof is the same for Proposition \ref{prop_Kinji}.
Proposition \ref{prop_Kinji_general} can also be generalized to the current case, 
and is obtained by replacing $\phi_\mu$ with $[\phi_\mu ]$.

\begin{proposition}\label{prop_Kinji_general_2}
Let $E_{I^{p_1}_i}, E_{I^{p_2}_j}, \cdots, E_{I^{p_m}_k}
$ and $E_{I^{p_1}_i \sqcup \cdots  \sqcup I^{p_m}_k }$ be elements of basis of $ T_\mu$ .
Then the following is obtained
\begin{align}
[\phi_{\mu} ]( E_{I^{p_1}_i} ) \cdots [\phi_{\mu} ]( E_{I^{p_m}_k}  ) 
= [\phi_{\mu} ](E_{I^{p_1}_i}  \cdots  E_{I^{p_m}_k} ) +  \tilde{O}(\hbar({\mu}) ) .
\end{align}
\end{proposition}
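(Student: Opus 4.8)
The plan is to mirror the proof of Proposition~\ref{prop_Kinji} (equivalently Proposition~\ref{prop_Kinji_general}), the only extra ingredient being that $[\phi_\mu]$ is compatible with the quotient, i.e. $[\phi_\mu](e^{(\mu)}_{(i_1,\cdots,i_k)})=[x_{i_1}\cdots x_{i_k}]$ and that multiplication in $A_\mathfrak{g}/I(C)$ is induced from $A_\mathfrak{g}$, so that $[x^\alpha][x^\beta]=[x^\alpha x^\beta]$. Write $P:=p_1+\cdots+p_m$ and, for a multiset $I^{p_r}$, write $x^{I^{p_r}}$ for the product of the variables it indexes. First I would evaluate the left-hand side: by the definition of $[\phi_\mu]$ each factor is $[\phi_\mu](E_{I^{p_r}})=[x^{I^{p_r}}]$, and multiplying the classes together in $A_\mathfrak{g}/I(C)$ gives
\[
[\phi_\mu](E_{I^{p_1}_i})\cdots[\phi_\mu](E_{I^{p_m}_k})=[\,x^{I^{p_1}_i}\cdots x^{I^{p_m}_k}\,]=[\,x^{I^{p_1}_i\sqcup\cdots\sqcup I^{p_m}_k}\,].
\]

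Next I would expand the product $E_{I^{p_1}_i}\cdots E_{I^{p_m}_k}$ inside $T_\mu$. Using the commutation relation~(\ref{hbarCommRel}) and the symmetrization (PBW-type reordering) of Lemma~\ref{lemma3_4}, one obtains
\[
E_{I^{p_1}_i}\cdots E_{I^{p_m}_k}=e^{(\mu)}_{(I^{p_1}_i\sqcup\cdots\sqcup I^{p_m}_k)}-\sum_{m'<P}\sum_{n=1}^{d_{m'}}\hbar(\mu)^{P-m'}\,c_{I^{m'}_n}\,E_{I^{m'}_n},
\]
where the leading term is the basis element $E_{I^{p_1}_i\sqcup\cdots\sqcup I^{p_m}_k}$ by hypothesis, each $c_{I^{m'}_n}\in\mathbb{C}$ is independent of $\hbar(\mu)$, and every correction term carries a positive power of $\hbar(\mu)$ since each reordering produces a commutator, which by~(\ref{hbarCommRel}) contributes a factor $\hbar(\mu)$. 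Applying $[\phi_\mu]$, using its linearity and the identity $[\phi_\mu](e^{(\mu)}_{(I^{p_1}_i\sqcup\cdots\sqcup I^{p_m}_k)})=[x^{I^{p_1}_i\sqcup\cdots\sqcup I^{p_m}_k}]$, yields
\[
[\phi_\mu](E_{I^{p_1}_i}\cdots E_{I^{p_m}_k})=[\,x^{I^{p_1}_i\sqcup\cdots\sqcup I^{p_m}_k}\,]-\sum_{m'<P}\sum_{n=1}^{d_{m'}}\hbar(\mu)^{P-m'}\,c_{I^{m'}_n}\,[\,x^{I^{m'}_n}\,].
\]
Subtracting this from the left-hand side, the monomial classes cancel and what remains is the finite sum $\sum_{m'<P}\sum_n\hbar(\mu)^{P-m'}c_{I^{m'}_n}[x^{I^{m'}_n}]$, which is $\tilde{O}(\hbar(\mu))$ because it is a finite $\mathbb{C}[\hbar(\mu)]$-combination each term of which is divisible by $\hbar(\mu)$. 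This is the claim.

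I expect the only real work to be the bookkeeping in the expansion step: one must check that the leading symmetrized term is exactly $E_{I^{p_1}_i\sqcup\cdots\sqcup I^{p_m}_k}$ (guaranteed by the multiset-union closure built into the construction of the basis $\{E_1,\dots,E_D\}$, which is precisely why that construction was arranged that way) and that every remaining term acquires at least one factor of $\hbar(\mu)$ (which follows from Lemma~\ref{lemma3_2} together with the reordering in Lemma~\ref{lemma3_4}). Since this is identical to the corresponding step in the proof of Proposition~\ref{prop_Kinji} with $\phi_\mu$ replaced by $[\phi_\mu]$ and products of monomials read modulo $I(C)$, no new difficulty arises. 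An alternative, slightly longer route is induction on $m$ reducing to Proposition~\ref{prop_Kinji_2}, but it still requires first expanding $E_{I^{p_1}_i}\cdots E_{I^{p_{m-1}}_j}$, so the direct computation above is preferable.
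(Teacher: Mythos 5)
Your proposal is correct and takes essentially the same route as the paper: the paper simply states that the proof is obtained from Proposition \ref{prop_Kinji} (and its generalization Proposition \ref{prop_Kinji_general}) by replacing $\phi_\mu$ with $[\phi_\mu]$ and using $[x^\alpha][x^\beta]=[x^\alpha x^\beta]$, which is exactly the expansion-and-cancellation argument you wrote out explicitly.
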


\section{Examples}\label{sect5}
Let us see the examples of weak matrix regularization constructed in Section \ref{sect4}.
As the Lie algebra, we consider a semisimple Lie algebra. 
It corresponds to a classical solution of the 
mass-deformed IKKT matrix model.
We suppose a Lie-Poisson algebra as a classical space 
when the Lie algebra is regarded as a quantized space (fuzzy space).
$\mathfrak{su}(n)$ is a typical example of a semisimple Lie algebra.
In this section, $\mathfrak{su}(2)$ and $\mathfrak{su}(3)$ are
examined.
The $\mathfrak{su}(2)$ case is a well-known example of the fuzzy sphere, 
whereas the other cases provide examples of matrix regularizations 
that have not been previously studied.
In this section, attention is restricted to the weak matrix regularization mapped to irreducible representations. 
Consequently, one should note that, when considering each individual matrix regularization, there is inevitably a kernel originating, 
for instance, from Casimirs not included in the analysis. This issue will be addressed in Section \ref{rev_sect6}.

\subsection{
$\mathfrak{su}(2)$ ; Fuzzy space } 
The  fuzzy sphere is considered in {\rm \cite{matrix1,fuzzy1}}. See {\rm \cite{matrix1,fuzzy1,fuzzyb,fuzzyc}} for details. 
In \cite{Rieffel:2021ykh}, more general and mathematically precise statements are given. 
The fuzzy ${\mathbb R}^3$ is discussed in \cite{Hammou:2001cc}.
In this subsection, we reconstruct the fuzzy sphere using 
the method for constructing the weak matrix regularization 
of a Lie-Poisson variety in Section \ref{sect4} of this paper. 
In other words, we confirm that the matrix regularization in this paper is 
a generalization of the method for constructing the fuzzy sphere. \\

Let us consider $\mathfrak{su}(2)$ as $\mathfrak{g} $. 
The enveloping algebra of $\mathfrak{su}(2)$, $ \mathcal{U}_{\mathfrak{su}(2)}[\hbar] $,
is an algebra of all polynomials in $X_1  , X_2 , X_3 $ with relations
$ X_i  X_j  - X_j  X_i  - i \hbar  \epsilon^{ijk} X_k ~(i,j,k \in \{1,2,3 \} ) $.
Let $x_a~(1\le a\le 3)$ be commutative variables. 
$(x_1, x_2, x_3)=(x,y,z)$ is identified with the coordinates of ${\mathbb R}^3$.
The Lie-Poisson structure is defined by
\begin{align}
\{x_a, x_b\}= i \epsilon^{abc}x_c. \label{eq.poi}
\end{align}
$A_{\mathfrak{su}(2)}$ is given by $\mathbb{C}[x]$ with this Poisson bracket.
For arbitrary $f\in A_{\mathfrak{su}(2)}$ is given as
\begin{align*}
f=f_0+f_a x_a+\frac{1}{2}f_{ab}x_a x_b+\cdots ,
\end{align*}
where $f_{a_1\cdots a_i}\in \mathbb{C}$ is completely symmetric with respect to $a_1\cdots a_i$. 
Let $V^\mu$ be a vector space ${\mathbb{C}}^k$.
For example, we consider $V^2 = {\mathbb{C}}^2$, then the $q_{2}$ is given 
by a map from $A_{\mathfrak{su}(2)}$ to a matrix algebra ${\rm Mat}_{2}(\mathbb{C})$ is defined by
\begin{align*}
q_2 (f)&:=f_0{\bf 1}_2+f_a q_2 (x_a),\quad \quad q_2 (x_a):=\frac{\hbar}{2}\sigma^a ,
\end{align*}
where $\sigma^a$ is a Pauli matrix and ${\bf 1}_k$ is a $k\times k$ unit matrix. \\

In the case of $\dim V^\mu \ge 2$, the matrix regularization 
$q_k :A_{\mathfrak{su}(2)} \to  {\rm Mat}_{k}(\mathbb{C})$ is defined by
\begin{align*}
q_k(f)&:=f_0{\bf 1}_k+f_{a_1}q_k(x_{a_1})+\cdots +
\frac{1}{k!} 
f_{a_1\cdots a_{k}} q_k (x_{a_1}\cdots x_{a_{k-1}})\\
q_k (x_{a_1}\cdots x_{a_{m}})&:= \frac{\hbar^m(k)}{m!} \sum_{\sigma \in Sym (m) } J_{a_{\sigma (1)}} \cdots J_{a_{\sigma (m)}} 
\end{align*}
where $J_{a}$ are generators for the $k$-dimensional irreducible representation of $\mathfrak{su}(2)$ (the spin $2s+1=k$ representation). 
Each $q_k$ gives a map from a polynomial to a $k\times k$ matrix. 
$J_a $ satisfies
\begin{align}
[J_a , J_b]&= i \epsilon^{abc}J_c, \quad \quad [q_k (x_a), q_k (x_b)]=i \hbar^2(k) \epsilon^{abc}J_c=i\hbar(k) \epsilon^{abc} q_k (x_c). \label{eq.com}
\end{align}
Up to this point, we have not discussed the Casimir polynomial, so we have only been discussing the matrix regularization corresponding to the polynomial functions (Lie-Poisson algebra) defined 
on ${\mathbb R}^3$.
If the series $\hbar(k)$ converging to $0$ when $\dim V^\mu$ tends to infinity , the sequence of
$q_k$ is a matrix regularization of $A_{\mathfrak{su}(2)}$
whose corresponding Lie-Poisson albebra is defined on ${\mathbb R}^3$.
\\
\bigskip

From now on, we will consider the matrix regularization of the polynomial ring defined on the sphere by referring to the Casimir polynomial.
Solving $\{ x_a , f(x) \} = 0~  (a=1,2,3)$ for a 2nd-degree homogeneous polynomial $f \in A_{\mathfrak{su}(2)}$, we obtain 
a solution as a quadratic Casimir polynomial
\begin{align}\label{su(2)casimir}
f= \delta^{ab}x_a x_b .
\end{align}
Then $q_k (f(x)) =  \hbar^2(k) \delta^{ab}J_a J_{b}$ is a Casimir invariant:
\begin{align}
 \hbar^2(k)  \delta^{ab}J_a J_{a}= \hbar^2(k) \frac{1}{4}(k^2-1){\bf 1}_k  = : \lambda^2 {\bf 1}_k , \label{casimir_eigen_su(2)}
\end{align}
where the eigenvalue $ \lambda^2 $ is a non-negative constant.
We construct an ideal according to the method described in (\ref{4_22}) and the following.
So we choose the generators of the ideal $I(C) \subset A_{\mathfrak{su}(2)}$ as
\begin{align} \label{5_4}
f^C(x):= \delta^{ab}x_a x_b - \lambda^2  ,
\end{align}
and
\begin{align} \label{5_5}
I(C) := \left\{  f^C(x) g(x) \in  A_{\mathfrak{su}(2)} ~ | ~ 
g(x) \in  A_{\mathfrak{su}(2)} ~ \right\}  \subset A_{\mathfrak{su}(2)} .
\end{align}
Then, $A_{\mathfrak{su}(2)} /I(C)$ is a set of polynomials on $S^2$ given by
\begin{align}\label{sphere}
\delta^{ab}x_a x_b=\lambda^2.
\end{align}
From (\ref{casimir_eigen_su(2)}) and Section \ref{sect4_2}, we find that 
when $\hbar(k)$ is chosen as
\begin{align*}
\hbar(k) = \sqrt{\frac{4\lambda^2}{k^2-1}},
\end{align*} 
then we can define a weak matrix regularization of $A_{\mathfrak{su}(2)} /I(C)$.
To make a matrix regularization of $A_{\mathfrak{su}(2)} /I(C)$,
we also build the other parts.
\begin{align}
I(C(X))&:= 
 \left\{  \sum_{i,j} a_i (X) f^C(X) b_j (X) \in \mathcal{U}_{\mathfrak{su}(2)}[\hbar]~ | ~ 
a_i (X), b_j(X) \in  \mathcal{U}_{\mathfrak{su}(2)}[\hbar] ~ \right\} .
\end{align}
Here, $f^C(X)$ is a Casimir operator;
\begin{align}
f^C(X) : = q_U ( f^C(x) ) = \delta^{ab}X_a X_b - \lambda^2.
\end{align}
The reduced Gr\"obner basis for this $I(C)$ is given by 
$G= \{ \delta^{ab}x_a x_b - \lambda^2 \}$.
$ \mathcal{U}_{\mathfrak{su}(2)}[\hbar] / I(C(X)) ,  q_{U/I}$ and so on are defined by using these ideals and $G$.
A linear function
$\rho_{U/I , k} :  \mathcal{U}_{\mathfrak{su}(2)}[\hbar] / I(C(X))  \to gl (V^k )=gl ({\mathbb C}^k )$
is defined as follows. For any monomial $X^A = X_{a_1} \cdots X_{a_m} ~(m \in {\mathbb{N}})$
in $\mathcal{U}_{\mathfrak{su}(2)}[\hbar]$,
\begin{align}
\rho_{U/I , k} ( [X^A ] ) = \hbar^m (k) J_{a_1} \cdots J_{a_m}.
\end{align}
Finally the matrix regularization of $A_{\mathfrak{su}(2)} /I(C)$, is given as follows.
For $f= h + r_{f,G} ~( h \in I(C) ,  r_{f,G} \notin I(C))$ with
$$ \displaystyle
r_{f,G}= \sum_{A} c_{A} x^{A}
= \sum_{\deg x^{A} \le n_k } \!\!\!\!\!
c_{A} x^{A}
+
 \sum_{\deg x^{A} > n_k  } \!\!\!\!\!
c_{A} x^{A} ,
$$ the explicit calculation of 
$q_{A/I, k}  : A_{\mathfrak{su}(2)} / I(C)\to gl(V^k)$ is given as
\begin{align}
q_{A/I, k}  ([f(x)] ) 
= \sum_{m \le n_k } \!
c_{A}~ 
\frac{\hbar^m(\mu)}{m!} \sum_{\sigma \in Sym (m) } J_{a_{\sigma (1)}} \cdots J_{a_{\sigma (m)}} .
\end{align}
Thus, the matrix regularization of $A_{\mathfrak{su}(2)} / I(C)$, 
the fuzzy sphere, could be reconstructed by the method of this paper.\\

As an example, let us consider $f(x) = x^3 \in A_{\mathfrak{su}(2)}$.
By the graded lexicographic ordering,
$r_{f,G} = x(-y^2 -z^2 +\lambda^2)$.
Then 
\begin{align*}
q_{A/I, k}([x^3])= R_k \left( 
-\frac{\hbar^3(k)}{3} (J_1J_2^2 + J_2 J_1 J_2 +J_2^2 J_1) 
-\frac{\hbar^3(k)}{3} (J_1J_3^2 + J_3 J_1 J_3 +J_3^2 J_1) 
+\hbar(k) \lambda^2 J_1 
\right).
\end{align*}
If $n_k > 3~ (k >3)$, then  
$R_k$ is not different from a unit matrix 
in the above equation, i.e., $R_k (\cdots  ) = (\cdots  )$, then the result is 
\begin{align*}
q_{A/I, k}([x^3])= 
\hbar^3(k) J_1^3 +  \frac{ \hbar^3(k) }{3} J_1 .
\end{align*}

For example, we consider $q_3 : A_{\mathfrak{su}(2)}\to {\rm Mat}_{3}(\mathbb{C})$, generators $J_a$ for the $3$-dimensional irreducible representation of $\mathfrak{su}(2)$ given by
\begin{align*}
J_1 &= \begin{pmatrix}
0 & 0 & 0 \\
0 & 0 & i \\
0 & -i & 0
\end{pmatrix}, \quad
J_2 = \begin{pmatrix}
0 & 0 & -i \\
0 & 0 & 0 \\
i & 0 & 0
\end{pmatrix}, \quad
J_3 = \begin{pmatrix}
0 & i & 0 \\
-i & 0 & 0 \\
0 & 0 & 0
\end{pmatrix}.
\end{align*}
We construct the basis $E_i$ of $ {\rm Mat}_{3}(\mathbb{C})$ according to the construction methods $1$ to $5$
in Subsection \ref{3_2}. 
First, the generators $E_i:=\hbar J_i~(i=1,2,3)$ and the unit matrix $E_0:=Id_3$ are chosen as basis elements. 
Next, we construct an independent element 
from the symmetrized product of $E_i$. That is,
\begin{align*}
E_4:=&E_1^2=\frac{\hbar^2}{2}
\begin{pmatrix}
0&0&0\\
0&1&0\\
0&0&1
\end{pmatrix},\quad
E_5:=E_2^2=\frac{\hbar^2}{2}
\begin{pmatrix}
1&0&0\\
0&0&0\\
0&0&1
\end{pmatrix},\\
E_6:=&\frac{1}{2}(E_1E_2+E_2E_1)=\frac{\hbar^2}{2}
\begin{pmatrix}
0&-1&0\\
-1&0&0\\
0&0&0
\end{pmatrix},\\
E_7:=&\frac{1}{2}(E_2E_3+E_3E_2)=\frac{\hbar^2}{2}
\begin{pmatrix}
0&0&0\\
0&0&-1\\
0&-1&0
\end{pmatrix},\\
E_8:=&\frac{1}{2}(E_1E_3+E_3E_1)=\frac{\hbar^2}{2}
\begin{pmatrix}
0&0&-1\\
0&0&0\\
-1&0&0
\end{pmatrix}.
\end{align*} 
Since these $E_i~(i=0,\cdots ,8)$ are independent of each other, we obtained a basis. 
Consider $f=x^3+xy+x$ in $A_{\mathfrak{su}(2)}$. If $x>y>z$ with the graded lexicographic order as an ordering relation, then $r_{f,G}=(1+\lambda^2) x+xy-xy^2-xz^2$ since the reduced Gr\"obner basis 
$G= \{ z^2+y^2+x^2-\lambda^2 \}$ from 
$f^C(x,y,z)=z^2+y^2+x^2-\lambda^2$ of (\ref{4_22}) for $\mathfrak{su}(2)$. 
Note that $n_3 =2$ and $\lambda^2 = 2 \hbar^2(3)$,
\begin{align*}
q_{A\slash I, 3}([f])= \frac{\hbar(3)}{2} 
\begin{pmatrix}
0&-\hbar(3)&0\\
-\hbar(3)&0&-2i\\
0&2i&0
\end{pmatrix}
.
\end{align*}
For another case, let $g=z^3+z$. Then $g=r_{g,G}=z^3+z$, and
\begin{align*}
q_{A\slash I, 3}([g])= \hbar(3)  J_3 = E_3.
\end{align*}
The commutator of these is obtained by
\begin{align*}
[q_{A\slash I, 3}([f]),q_{A\slash I,3}([g])]&=\hbar^2(3) 
\begin{pmatrix}
-i\hbar(3) & 0 & 1\\
0 & i\hbar(3) & 0\\
-1 & 0 & 0
\end{pmatrix}.
\end{align*}
On the other hand, from
$r_{\{f,g\},G}= -i\left(\left(3 z^2+1\right) \left(\lambda^2 y+2 x^2 y-x^2-y^3+y^2-y z^2+y\right)\right)$,
we obtain
\begin{align*}
\hbar(3) q_{A\slash I, 3}(\{[f],[g]\})&= \hbar^2(3)
\begin{pmatrix}
-3i\hbar(3)&0&1\\
0&-i\hbar(3)&0\\
-1&0&-2i\hbar(3)
\end{pmatrix}.
\end{align*}
 The difference between them is 
\begin{align*}
[q_{A\slash I, 3}([f]),q_{A\slash I, 3}([g])]
- \hbar(3) q_{A\slash I, 3}(\{[f],[g]\}) = 2i\hbar^3(3)E_0 .
\end{align*}



\subsection{ $\mathfrak{su}(3)$; Fuzzy space}
Let us consider $\mathfrak{su}(3)$ as $\mathfrak{g}$. 
For a typical example of representation of $\mathfrak{su}(3)$, 
generators $T_i$ of $\mathfrak{su}(3)$ Lie algebra are given by
\begin{align*}
T_i=\frac{1}{2}\lambda_i ,
\end{align*}
where $\lambda_i$ are Gell-Mann matrices
\begin{align*}
\lambda_1 &= \begin{pmatrix}
0 & 1 & 0 \\
1 & 0 & 0 \\
0 & 0 & 0
\end{pmatrix}, \quad
\lambda_2 = \begin{pmatrix}
0 & -i & 0 \\
i & 0 & 0 \\
0 & 0 & 0
\end{pmatrix}, \quad
\lambda_3 = \begin{pmatrix}
1 & 0 & 0 \\
0 & -1 & 0 \\
0 & 0 & 0
\end{pmatrix},\\
\lambda_4 &= \begin{pmatrix}
0 & 0 & 1 \\
0 & 0 & 0 \\
1 & 0 & 0
\end{pmatrix}, \quad
\lambda_5 = \begin{pmatrix}
0 & 0 & -i \\
0 & 0 & 0 \\
i & 0 & 0
\end{pmatrix}, \quad
\lambda_6 = \begin{pmatrix}
0 & 0 & 0 \\
0 & 0 & 1 \\
0 & 1 & 0
\end{pmatrix},\\
\lambda_7 &= \begin{pmatrix}
0 & 0 & 0 \\
0 & 0 & -i \\
0 & i & 0
\end{pmatrix}, \quad
\lambda_8 = \frac{1}{\sqrt{3}} \begin{pmatrix}
1 & 0 & 0 \\
0 & 1 & 0 \\
0 & 0 & -2
\end{pmatrix}.
\end{align*}
From this basis, the structure constants are determined by
\begin{align}
f_{ab}^c = 2 ~ \mathrm{tr} [ T_a , T_b ] T_c .
\end{align}
In the following, this structure constant is fixed.
Using this structure constant, the Lie-Poisson structure is given as
\begin{align*}
\{x_a,x_b\}= f^{c}_{ab}x_c
\end{align*}
for $x_a (a= 1,2, \cdots , 8)$.
Solving $\{x_a,f(x)\}=0~(a=1,\cdots,8)$ for a $2$nd-degree homogeneous polynomial $f\in A_{\mathfrak{su}(3)}$, we obtain a quadratic Casimir polynomial 
\begin{align}
C^2 (x)=\frac{1}{3}\delta^{ab}x_ax_b. \label{su(3)_C2}
\end{align}
For the other case, a cubic Casimir Polynomial is given as
\begin{align*}
C^3 (x)&=\frac{1}{18} (2 \sqrt{3} x_8^3-6 \sqrt{3} x_1^2 x_8-6 \sqrt{3} x_2^2 x_8-6
   \sqrt{3} x_3^2 x_8+3 \sqrt{3} x_4^2 x_8+3 \sqrt{3} x_5^2 x_8+3 \sqrt{3}
   x_6^2 x_8+3 \sqrt{3} x_7^2 x_8\\
&\quad \quad -18 x_2 x_5 x_6+18 x_2 x_4 x_7-18 x_1(x_4 x_6+x_5 x_7)-9 x_3 \left(x_4^2+x_5^2-x_6^2-x_7^2\right) ).
\end{align*}
\begin{figure}[H]
  \begin{center}
    \includegraphics[width=70mm,height=70mm]{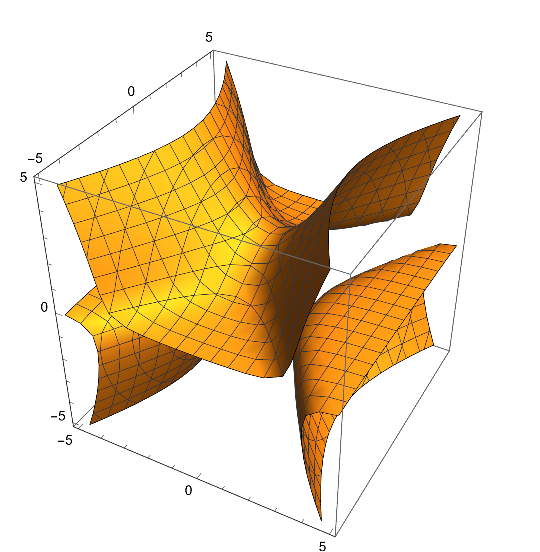}
  \end{center}
  \caption{Variety with $C^3 (x)=1$. ($x_2=x_3=x_4=x_5=x_7=0$) }
  \label{fig:c3variety.eps}
\end{figure}
From the above considerations, the spaces in which a Lie-Poisson algebra 
$A_{\mathfrak{su}(3)}/ I(C)$ can be defined are
${\mathbb R}^8$, $C^2 (x)= const$ in ${\mathbb R}^8$, i.e. $S^7$, 
and $C^3 (x) = const$ in ${\mathbb R}^8$.\\
\bigskip

i) Case of ${\mathbb R}^8$.\\
When the series $\hbar(\mu)$ converging to $0$ in the limit as $\dim V^\mu$ approaches infinity,
and the eigenvalues of the Casimir operators are not fixed,
$q_\mu$ is a matrix regularization of $A_{\mathfrak{su}(3)}$.
\\

ii) Case of $S^7$ : (fixing each $\hbar(\mu)$ by $q_\mu (C^2(x) - \lambda^2) = 0$). \\
Consider $\dim V^3 = 3 $ case as a simple example.
(We denote $\mu$ of this example by $3$.)
$E_i=\hbar(3) T_i~(i=1,\cdots, 8)$ and $E_0=Id_3$ yield a basis of ${\rm Mat}_3( \mathbb{C} )$, 
so $n_3=1$. 
For example, $f=x_1 +x_1^2$ in $A_{\mathfrak{su}(3)}$. 
In this case, the Gr\"obner basis is given by $\{ C^2 (x)- \lambda^2 \}$, then
\begin{align*}
r_{f,G}= x_1 + 3 \lambda^2-x_2^2-x_3^2-x_4^2-x_5^2-x_6^2-x_7^2-x_8^2.
\end{align*}
Since $n_3 = 1$, we simply 
replace each $x_i$ with $\hbar(3) T_i$, then
\begin{align*}
q_{A\slash I, 3}([f])= E_1.
\end{align*}
Suppose $g(x)= x_2 \in  A_{\mathfrak{su}(3)}$.
Since $g=r_{g,G}=x_2$, 
the commutator of them is given by
\begin{align*}
[q_{A\slash I, 3}([f]),q_{A\slash I, 3}([g])]=& i\hbar(3) E_3.
\end{align*}
From the Lie-Poisson $\{[f],[g]\} = [i x_3+  2i x_1 x_3 ]$, the following is obtained:
\begin{align*}
\hbar(3) q_{A\slash I, 3}(\{[f],[g]\})=&
\hbar(3) q_{A\slash I, 3}([i x_3+  2i x_1 x_3 ])= i \hbar(3)E_3
\end{align*}
Therefore, in this case,
$[q_{A\slash I, 3}([f]),q_{A\slash I, 3}([g])] = 
\hbar q_{A\slash I, 3}(\{[f],[g]\})$.
This result is expected from (\ref{prop_deg1}).

\bigskip

iii) Case of $C^3(x) =\lambda^3$ in ${\mathbb R}^8$:
(fixing $\hbar(\mu) $ by $q_\mu (C^3 (x) - \lambda^3)=0$). \\
Consider the same case $\dim V^\mu = 3 $  as $S^7$ case. 
To distinguish it from Case ii) (case of $S^7$), we do not assign a specific number to 
$\mu$, and instead keep it as 
$\mu$ in the following discussion.
Note also that the values of $\hbar(3)$ in Case ii) and 
$\hbar(\mu)$ in the following are different.
The basis is the same as in Case ii), i.e., 
$E_i=\hbar(\mu) T_i~(i=1,\cdots, 8)$ and $E_0=Id_3$ yield the basis. $n_\mu=1$ is also the same as in Case ii). 
The Gr\"obner basis is given by $\{ C^3 (x) - \lambda^3 \}$.
As an example, let us consider 
$f_2=x_1^2x_3x_8+x_2$.
Then $r_{f_2 , G}$ is given by
\begin{align*}
&x_2-\sqrt{3} \lambda^3 x_3-\sqrt{3} x_1 x_3 x_4
   x_6-\sqrt{3} x_1 x_3 x_5 x_7-x_2^2 x_3
   x_8+\sqrt{3} x_2 x_3 x_4 x_7-\sqrt{3} x_2
   x_3 x_5 x_6-x_3^3  x_8
   \\
& +\frac{1}{2}\left(-
   \sqrt{3} x_3^2 x_4^2- \sqrt{3} x_3^2
   x_5^2+\sqrt{3} x_3^2 x_6^2+ \sqrt{3}
   x_3^2 x_7^2+ x_3 x_4^2 x_8+
   x_3 x_5^2 x_8+ x_3 x_6^2
   x_8+ x_3 x_7^2 x_8 \right)+\frac{x_3
   x_8^3}{3}.
\end{align*}
Since $\lambda^3$ is in proportional to $\hbar^3(\mu)$, we get
\begin{align*}
q_{A\slash I,\mu}([f_2]) = \hbar(\mu) T_2  = E_2 . 
\end{align*}
Let us consider $g_2 = r_{g,G}=x_4$.
The similar calculations as in the case of $S^7$ yield
\begin{align*}
[q_{A\slash I,\mu}([f_2]),q_{A\slash I,\mu}([g_2])]& = \frac{i \hbar(\mu)}{2} E_6.
\end{align*}
The matrix regularization for Poisson brackets can also be performed straightforwardly, 
yielding the following, 
\begin{align*}
\hbar(\mu)  q_{A\slash I,\mu}(\{[f_2],[g_2]\})&=[q_{A\slash I,\mu}([f_2]),q_{A\slash I,\mu}([g_2])].
\end{align*}


\section{Reducible representations, coadjoint orbits, and discussions}\label{rev_sect6}

Thus far, our discussion has focused solely on the quantization sequence of irreducible representations.
Focusing on a single quantization corresponds to a single irreducible representation, and its geometric counterpart is the coadjoint orbit \cite{{Kirillov2004},{Kostant1970},{Woodhouse1991},{Souriau1997}}.
In other words, once a representation is fixed by its Dynkin label or highest weight, the corresponding coadjoint orbit is determined.
The coadjoint orbit is identified by the eigenvalues of all Casimir operators.
Accordingly, when the discussion is restricted to irreducible representations, a situation may arise in which a single quantization is associated, but its kernel becomes large and corresponds to a proper subspace of the space under consideration.
For example, in the previous section we discussed the quantization of $\mathfrak{su}(3)$ case. 
In that context, what was described as the quantization of ${\mathbb{R}}^8$
 or $S^7$ actually corresponds only to a subspace when the discussion is restricted to irreducible representations, despite the fact that quantizations of them are associated.
To clarify this point further by means of the $\mathfrak{su}(3)$
example: when the quotient space is constructed as a Lie-Poisson algebra using the quadratic Casimir, the domain of definition of the quantization becomes the algebra of functions on  $S^7$.
However, once a particular representation $\rho_{U\slash I,\mu}$
 is chosen, this representation necessarily admits the cubic Casimir, and consequently elements corresponding to the relations generated by the cubic Casimir may lie in the kernel of the quantization 
 of $A_\mathfrak{g}  /I(C)$.
\\

Since we are, from the outset, considering quantizations mapping into finite-dimensional matrices, the existence of a kernel in the quantization, or a loss of faithfulness, must be regarded as assumed a priori.
Nevertheless, with respect to kernels arising from the intrinsic properties of the classical space, it is natural to regard their elimination as providing a more faithful reflection of the underlying classical geometry.
In this section, on the basis of this perspective, we extend the discussion to the case of reducible representations of quantization.
In so doing, we examine the possibility that various coadjoint orbits may be incorporated so as to encompass a substantial portion of the classical space.

\subsection{Generalization to reducible representations}

\label{rev_sect6.1}
We consider the following decomposition of a reducible representation into a direct sum of irreducible representations of a semi-simple Lie algebra $\mathfrak{g}$
\begin{align*}
V^\mu &=\bigoplus_{a=1}^{m_\mu} V_\mu ^a.
\end{align*}
We define a projection $\hat{P}_a:V^\mu\to V_\mu^a$, and a reducible representation $\rho^\mu$ of $\mathfrak{g}$
\begin{align*}
\rho^\mu &= \sum_a^{m_\mu} \rho_{\mu ,a}\hat{P}_a,\\
\rho_{\mu,a}&:\mathfrak{g}\to End(V_\mu ^a).
\end{align*}
Suppose that each representation satisfies $[\rho_{\mu,a}(e_i),\rho_{\mu,a}(e_j)]=f^k_{ij}\rho_{\mu,a}(e_k)$. Then
\begin{align*}
[\rho ^\mu(e_i),\rho^\mu(e_j)]&=\sum_{a,b}[\rho_{\mu,a}(e_i)\hat{P}_a,\rho_{\mu,b}(e_j)\hat{P}_b]
=f^k_{ij}\rho^\mu(e_k).
\end{align*}
Let us assume that $\hbar(\mu)$ is given for each representation
\begin{align*}
\hat{\hbar}(\mu) = \hbar_1(\mu )Id_1\oplus \cdots \oplus \hbar_{m_\mu}(\mu)Id_{m_\mu} = \hbar(\mu ) 
( r_1(\mu )Id_1\oplus \cdots \oplus r_{m_\mu}(\mu)Id_{m_\mu} ) 
,
\end{align*}
where $Id_a:= Id_{V_{\mu^a}}~ (a= 1, \dots , m_\mu)$, $\hbar_a(\mu) \in {\mathbb R}~ (a= 1, \dots , m_\mu)$, 
and $r_a(\mu) \in {\mathbb R}~ (a= 1, \dots , m_\mu)$
satisfies $\hbar(\mu) r_a(\mu) = \hbar_a(\mu) $. 
Here, the direct sum of algebras is represented 
by block-diagonal matrices. 
Following this convention, we shall also describe direct product of non-matrix algebras using the direct sum notation in what follows.
We also define each basis as follows
\begin{align*}
e^{(\mu)}_i&:=\hat{\hbar}(\mu)\rho^\mu(e_i)=\hbar_1(\mu)\rho_{\mu,1}(e_i)\oplus \cdots \oplus \hbar_{m_\mu}(\mu)\rho_{\mu,m_\mu}(e_i),\\
e^{(\mu,a)}_i&:=\hbar_a(\mu)\rho_{\mu,a}(e_i).
\end{align*}
From the commutation relations of the representation $\rho_{\mu,a}$,
\begin{align*}
[e^{(\mu,a)}_i,e^{(\mu,a)}_j]&=\hbar_a(\mu)f^k_{ij}e^{(\mu,a)}_k,\\
[e_i^{(\mu)},e_j^{(\mu)}]&=\hat{\hbar}(\mu)f^k_{ij}e^{(\mu)}_k.
\end{align*}
For $T_\mu^a:=\langle e^{(\mu,a)}\rangle$, we redefine $T_\mu:=\langle e^{(\mu)} \rangle=T^1_\mu\oplus \cdots \oplus T^{m_\mu}_\mu$ as an $R$-algebra generated by $e^{(\mu)}$. Given the basis $E_i^a$ $(i=1,\cdots,D_a)$ of $T_\mu^a$, we set $n_\mu^a:=\max(\deg E_1^a,\cdots, \deg E_{D_a}^a)$, and proceed to redefine the quantization map for the reducible representations.
\begin{definition}\label{def6_1}
We define a linear function $q_\mu^a:A_{\mathfrak{g}}\to T_\mu^a$ by
\begin{align*}
q^a_\mu \Big(\sum_I f_Ix^I \Big)=\sum_{|I|\le n^a_\mu}f_I e^{(\mu,a)}_{(I)},
\end{align*}
and a linear function $q^R_\mu:A_\mathfrak{g}\to T_\mu$ by
\begin{align*}
q^R_\mu \Big( \sum_I f_Ix^I \Big)=\sum_{a=1}^{m_\mu}\sum_{|I|\le n_\mu^a}f_Ie_{(I)}^{(\mu,a)}\hat{P}_a=\sum_{a=1}^{m_\mu} q^a_\mu \Big(\sum_If_Ix^I \Big)\hat{P}_a,
\end{align*}
i.e. $q^R_\mu:=q^1_\mu\oplus \cdots \oplus q^{m_\mu}_\mu$.
\end{definition}
Under these settings, results analogous to similar lemmas, propositions and theorems stated in Section \ref{sect3} can be derived by direct calculation, as follows.
\begin{theorem}
Let $\{ q_\mu^R:A_\mathfrak{g}\to T_\mu \}$ be a sequence of quantizations defined by Definition \ref{def6_1}. 
Suppose that $\{\hbar(\mu)\}$ is a sequence such that $\hbar_a(\mu)\to 0 ~(a= 1,2, \cdots , m_\mu )$ as $\dim V_{\mu} \to \infty$. Then $q_\mu^R$ is a weak matrix regularization. In other words,
\begin{align*}
[q^R_\mu(f),q^R_\mu(g)]=\hat{\hbar}(\mu)q^R_\mu(\{f,g\})+\hat{\hbar}^2(\mu)P,
\end{align*}
where $P=\sum_{a=1}^{m_\mu}\sum_la_{i_1\cdots i_l}^{\mu,a}q^a_\mu(x_{i_1})\cdots q^a_{\mu}(x_{i_l})\hat{P}_a$.
\end{theorem}
\begin{proposition}
For any $f,g\in A_{\mathfrak{g}}$ with $\deg f+\deg g \le \min(n_\mu^1,\cdots,n_\mu^{m_\mu})$, there exists 
$\displaystyle P=\sum_{a=1}^{m_\mu}\sum_{i=1}^{D_a} c_i^a(\hbar_a(\mu))E^a_i\hat{P}_a \in T_\mu$ with $c_i^a(\hbar_a(\mu))\in \mathbb{C}[\hbar_a(\mu)]$ satisfying
\begin{align*}
q_\mu^R(f)q^R_\mu(g)=q^R_\mu(fg)+\hat{\hbar}(\mu)P.
\end{align*}
\end{proposition}
\bigskip
We consider a $k$th-degree Casimir operator $C_{k,i}^{(\mu,a)}=\Lambda_i^{a,k}(V_\mu^a)Id_{\mu,a}$ of each $V_\mu^a$ such that $|\Lambda_i^{a,k}(V_\mu^a)|\to \infty $ when $\dim V_\mu^a\to \infty$. For a fixed Casimir polynomial $C_i^k(x)=\sum_J C^k_{i,J}x^J$ in $A_\mathfrak{g}$, we defined $C^{k,a}_i(e^{(\mu,a)})$ by
\begin{align*}
C^{k,a}_i(e^{(\mu,a)})&:=q^a_\mu(C^k_i(x))\\
&=\frac{(\hbar_a(\mu))^k}{k!}\sum _J C_{i,J}^k \sum_{\sigma \in S_k}\rho_{\mu,a}(e_{i_\sigma(1)})\cdots \rho_{\mu,a}(e_{i_\sigma (k)})\\
&=(\hbar_a(\mu))^k \Lambda_i^{a,k}(V_\mu^a)Id_{\mu,a}.
\end{align*}

As in the case of irreducible representations, we fix the parameter 
$\lambda_i^k\in \mathbb{C}$ determining the ideal, and choose each $\hbar_a(\mu) $  so as to satisfy the following condition
\begin{align*}
C_i^{k,a}(e^{(\mu,a)})=(\hbar_a(\mu))^k C_{k,i}^{(\mu,a)}=\lambda^k_i,
\end{align*}
for any $q_\mu^a:A_\mathfrak{g}\to V^\mu_a (a= 1, \dots , m_\mu )$. 
$\lambda_i^k\in \mathbb{C}$ is the eigenvalue of the matrix $C_i^{k,a}(e^{(\mu,a)})$.
Then, we obtain
\begin{align*}
q^R_\mu(C^k_i(x))&=\sum_{a=1}^{m_\mu} q^a_\mu(C^k_i(x))\hat{P}_a
=\lambda^k_i Id_{V_\mu}.
\end{align*}
We also introduce the direct product algebra of the enveloping algebra with $\hbar$, and define the quantization as follows.
\begin{definition}
For a set of parameters $\overrightarrow{\hbar}_\mu=(\hbar_1,\cdots,\hbar_{m_\mu})$, we define a canonical linear map $q_U^{\overrightarrow{\hbar}}:A_\mathfrak{g}\to \mathcal{U}_\mathfrak{g}[\hbar_1]\oplus \cdots \oplus \mathcal{U}_\mathfrak{g}[\hbar_{m_\mu}]$ by
\begin{align*}
q^{\overrightarrow{\hbar}}_U:=\left . q_U\right |_{\hbar=\hbar_1}\oplus \cdots \oplus \left .q_U\right |_{\hbar=\hbar_{m_\mu}}.
\end{align*}
\end{definition}
Under this settings, similarly to Section \ref{sect4}, some lemmas, propositions, and theorems analogous to those in Section \ref{sect4} are derived by direct calculation.
Since the proofs consist of calculations almost parallel to those above, we omit all the details here.
\begin{proposition}
$q_U^{\overrightarrow{\hbar}}(f): A_\mathfrak{g}\to \mathcal{U}[\hbar_1]\oplus \cdots \oplus \mathcal{U}[\hbar_{m_\mu}]$ is a quantization i.e.,
\begin{align*}
[q^{\overrightarrow{\hbar}}_U(f),q^{\overrightarrow{\hbar}}_U(g)]=\bigoplus_{i=1}^{m_\mu}(\hbar_iq_U(\{f,g\})+\tilde{O}(\hbar_i^2)).
\end{align*}
for $f,g\in A_{\mathfrak{g}}$. Especially, if $\min\{\deg f,\deg g\}\le 1$, then
\begin{align*}
[q^{\overrightarrow{\hbar}}_U(f),q^{\overrightarrow{\hbar}}_U(g)]=\bigoplus_{i=1}^{m_\mu}(\hbar_iq_U(\{f,g\})).
\end{align*}
\end{proposition}
\begin{definition}
We consider the following as the restriction of $I(C(X))$ to $\hbar = \hbar_i$
\begin{align*}
\left .I(C(X))\right|_{\hbar =\hbar_i}=\Big\{\sum_{i,j,k}a_j(X)f_i^C(X)\left.b_k(X)\right|_{\hbar=\hbar_i}\mid f^C_i(X):=q_U(f^C_i(x))\Big\}.
\end{align*}
We defined a linear map $q^R_{U\slash I}$ by
\begin{align*}
\begin{array}{rccc}
q^R_{U\slash I}:&A_\mathfrak{g}\slash I(C)&\to &\displaystyle \bigoplus _{i=1}^{m_\mu}(\mathcal{U}_\mathfrak{g}[\hbar_i]\slash \left.I(C(X))\right|_{\hbar=\hbar_i})\\
{}&\rotatebox{90}{$\in$}&{}&\rotatebox{90}{$\in$}\\
{}&[f]=[r_f]&\mapsto &\displaystyle q^R_{U\slash I}([f])=\bigoplus_{i=1}^{m_\mu} \left.[q_U(r_f)]\right|_{\hbar=\hbar_i}.
\end{array}
\end{align*}
\end{definition}
Then the following is obtained.
\begin{theorem}
The above $q^R_{U\slash I}$ is a quantization, i.e.
\begin{align*}
[q^R_{U\slash I}([f]),q^R_{U\slash I}([g])]=\hat{\hbar}q^R_{U\slash I}(\{[f],[g]\})+\tilde{O}(\hat{\hbar}^2),
\end{align*}
where $\hat{\hbar}=\hbar_1 \oplus \cdots \oplus \hbar_{m_\mu}$,
and $\tilde{O}(\hat{\hbar}^2) =  \tilde{O}({\hbar_1}^2)\oplus \cdots \oplus \tilde{O}({\hbar_{m_\mu}}^2)$.
\end{theorem}

For each $\mathcal{U}_\mathfrak{g}[\hbar_a]$, 
after identifying $\hbar_a$ with 
$\hbar_a (\mu)$, we attach the following representation.
\begin{align*}
\begin{array}{lccc}
\rho_{U,\mu}^a:&\mathcal{U}_{\mathfrak{g}}[\hbar_a]&\to &gl(V_\mu^a)\\
{}&\rotatebox{90}{$\in$}&{}&\rotatebox{90}{$\in$}\\
{}&X_i&\mapsto&\rho_{U,\mu}^a(X_i)=e_i^{(\mu,a)}.
\end{array}
\end{align*}
Using these $\rho_{U,\mu}^a$, we introduce the representation
\begin{align*}
\begin{array}{lccc}
\rho_{U,\mu}^R:&\bigoplus_{i=1}^{m_\mu} \mathcal{U}_{\mathfrak{g}}[\hbar_i]&\to &gl(V_\mu^1)\oplus \cdots \oplus gl(V_\mu^{m_\mu})\subset gl(V_\mu)
\end{array}
\end{align*}
by $\rho_{U,\mu}^R :=\bigoplus_{a=1}^{m_\mu}\rho^a_{U,\mu}$.
As in Subsection \ref{sect4_2},
we chose $I(C)$ to be the one given in  (\ref{4_23}).
It is also clear that the following definition is well-defined.
\begin{align*}
\begin{array}{lccc}
\rho_{U\slash I,\mu}^a:&\mathcal{U}_\mathfrak{g}[\hbar_a]\slash \left.I(C(X))\right|_{\hbar=\hbar_a}&\to &gl(V_\mu^a)\\
{}&\rotatebox{90}{$\in$}&{}&\rotatebox{90}{$\in$}\\
{}&[X^I]_a&\mapsto&\rho_{U\slash I,\mu}^a([X^I]_a)=
\rho_{U,\mu}^a(X^I) =
e^{(\mu,a)}_{(I)}.
\end{array}
\end{align*}
The representation
$\rho_{U\slash I,\mu}^R: \bigoplus _{i=1}^{m_\mu}\mathcal{U}_\mathfrak{g}[\hbar_i]\slash \left.I(C(X))\right|_{\hbar=\hbar_i} \to \bigoplus_{i=1}^{m_\mu} gl(V_\mu^i)\subset gl(V^\mu)$
is defined by
$\rho^R_{U\slash I,\mu}=\bigoplus_{i=1}^{m_\mu} \rho_{U\slash I,\mu}^i.$
For an algebra homomorphism $\rho^R_{U\slash I,\mu}=\bigoplus_{i=1}^{m_\mu} \rho_{U\slash I,\mu}^i$, 
If $\forall i$, $[f^i]_i\in \mathcal{U}_\mathfrak{g}[\hbar_i]\slash \left.I(C(X))\right|_{\hbar=\hbar_i}$ belongs to $\tilde{O}((\hat{\hbar}_i(\mu))^n)$ then
\begin{align*}
\rho^R_{U\slash I,\mu}([f^1]_1\oplus \cdots \oplus [f^m]_{m_\mu})=\tilde{O}((\hat{\hbar}(\mu))^n):=
\tilde{O}((\hbar_1(\mu))^n)\oplus \cdots \oplus 
\tilde{O}((\hbar_{m_\mu}(\mu))^n).
\end{align*}
\begin{theorem}
For $q^{R,pre}_\mu:= \rho_{U\slash I,\mu}^R\circ q^R_{U \slash I}$,
\begin{align*}
[q^{R,pre}_\mu([f]).q^{R,pre}_\mu([g])]=\hat{\hbar}(\mu)q^{R,pre}_\mu(\{f,g\})+(\hat{\hbar}(\mu))^2P_\mu,
\end{align*}
where $\displaystyle P_\mu =  \sum_a^{m_\mu}\sum_i^{D^a}
c_i^a (\hbar(\mu)) E_i^a \hat{P}_a \in T_\mu$
with $c_i^a(\hbar_a(\mu))\in \mathbb{C}[\hbar_a(\mu)]$.
\end{theorem}
The projection map $\displaystyle R^a_\mu(M(\hbar_a(\mu)))=\sum_{0\le k\le n^a_\mu}\hbar^k_aM_k$
is also introduced similarly for any $a=1,\cdots , m_\mu$,
where $\displaystyle M(\hbar_a(\mu))=\sum_{0\le k}(\hbar_a(\mu))^k M_k$~ $(M_k\in gl(V_\mu^a))$.
Using them,
$\displaystyle R_\mu^R:=\bigoplus _{i=1}^{m_\mu} R^i_\mu$
is defined.
Finally, we define the quantization as follows.
\begin{definition}
A quantization of $A_\mathfrak{g}  /I(C) \to  gl (V^\mu )$ is defined as
\begin{align}
q_{A/I ,\mu}^R : = R_\mu^R \circ q^{R, pre}_\mu = R_\mu^R \circ  \rho_{U/I , \mu}^R \circ q^R_{U \slash I}.
\end{align}
\end{definition}
Finally, we obtain the following.
\begin{theorem}
$q_{A/I, \mu}^R $
is a weak matrix regularization, i.e., for  any $[f], [g] \in A_\mathfrak{g} / I(C)$, 
there exists $\displaystyle P = \sum_{a=1}^{m_\mu} \sum_i^{D_a} c_i^a(\hbar_a(\mu)) E_i^a \hat{P}_a \in T_\mu$,
where each $c_i^a(\hbar_a(\mu)) $ is a polynomial in $\hbar_a(\mu)$, such that
\begin{align}
[ q_{A/I, \mu}^R ( [ f ] ) ,  q_{A/I, \mu}^R ( [ g ] ) ] = \hat{\hbar}(\mu ) q_{A/I, \mu}^R ( \{ [f] , [g] \} ) + (\hat{\hbar}(\mu))^2 P.
\label{q_AI_R_quantization}
\end{align}
Furthermore, 
if $\displaystyle \deg r_f +\deg r_g \le \min_{1\le a \le m_\mu} \{n_\mu^a\}$ and $\displaystyle \deg r_{fg} \le  \min_{1\le a \le m_\mu} \{n_\mu^a\}$,
$q_{A/I, \mu}^R $ also satisfies 
$$
q_{A/I, \mu}^R ( [ f ] )  q_{A/I, \mu}^R ( [ g ] ) = q_{A/I, \mu}^R ( [f] [g]) + \hbar(\mu)  P 
$$
with some $\displaystyle P = \sum_{a=1}^{m_\mu} \sum_i^{D_a} c_i^a(\hbar_a(\mu)) E_i^a \hat{P}_a  \in T_\mu$.
\end{theorem}

In this section, we have seen that the quantization can also be constructed for reducible representations.
We reconstruct a weak matrix regularization.
It is possible to choose the series $\{ V^\mu \}$ such that
the number of distinct types of representations increases without bound.
As an example, let us consider ${\mathfrak{su}(2)} $ case. 
Let $m_\mu$ be taken as a diverging increasing sequence, and choose the representation spaces 
$V_\mu^a ~(a=1,2, \dots , m_\mu)$ so that the radii of the spheres $\sqrt{\lambda^2_a}$
become dense in ${\mathbb R}_{>0}$ in the limit. 
In other words, it should be possible to take them in a manner analogous to a singular foliation, filling the interior of 
${\mathbb R}^3$ densely with spheres. 
Such a construction may be regarded as a good approximation of ${\mathbb R}^3$.
This idea is already discussed in \cite{Vitale:2012dz}.
In the case of $\mathfrak{su}(3)$, the situation becomes more complicated.   
We fix a representation by its highest weight and construct the coadjoint orbit by acting adjointly with 
$SU(3)$ on a traceless diagonal matrix $\zeta=\mathrm{diag}(\zeta_1, \zeta_2, \zeta_3)$.  
When the eigenvalues are non-degenerate, the coadjoint orbit corresponds to $SU(3)/T^2$, which has dimension $6$, 
whereas when two of the eigenvalues coincide, it corresponds to ${\mathbb C}P^2$, which has dimension $4$.  
These coadjoint orbits are submanifolds of $S^7$.  
As in the case of $\mathfrak{su}(2)$, it is possible to consider a sequence of reducible representations that embeds infinitely many copies of various 
$SU(3)/T^2$ and ${\mathbb C}P^2$ orbits into $S^7$.  
For example, each $SU(3)/T^2$ orbit can be embedded into $S^7$ as a closed submanifold, 
and the family of regular orbits densely fills $S^7$ in the following sense. 
Introduce the equation of $S^7$ by considering a traceless $3 \times 3$ Hermitian matrix $A$ 
with $\mathrm{tr}(AA^\dagger) = r^2$.  
Since it can be diagonalized by the adjoint action, the above $\zeta$ may be regarded as a representative element. 
Identifying this with the highest weight, one can view $(\zeta_1 - \zeta_2, \zeta_2 - \zeta_3)$ as the corresponding Dynkin labels.  
For an arbitrary point on $S^7$, the ratios are real numbers, but when sufficiently large representations are considered, 
they can be approximated by integer ratios, which implies that a corresponding representation exists.  
In this way, one may expect to construct a sequence of representations that fills $S^7$.  
When the eigenvalues of $\zeta$ degenerate, singularities appear, i.e. the corresponding orbit becomes ${\mathbb C}P^2$ or a point, and this requires caution.  \\

Furthermore, what happens in the case of a general Lie algebra?
To what extent this strategy is effective, and for which types of Lie-Poisson algebras, must be investigated for each individual Lie-Poisson algebra, 
and this will be left as a subject for future work.

\section{Summary}\label{sect6} 

In this paper, the quantization of the Lie-Poisson algebra
was carried out as a matrix regularization in a weak sense. \\

In Section \ref{sect2}, it was shown that the mass-deformed IKKT
matrix model is equivalent to the matrix model whose solution is
a basis of a semisimple Lie algebra.
From this fact, a basis of every semisimple Lie algebra makes 
a classical solution of the mass-deformed IKKT matrix model. 
The precise statement of this claim is given in Theorem \ref{thm_2_2}.
Lie-Poisson algebras are expected as commutative limits
of the algebras generated by these classical solutions.
Matrix regularization connects the Lie-Poisson algebra and the algebra generated by a Lie algebra as a quantization.
So, the matrix regularization of the Lie-Poisson algebras was
studied. 
It is a generalization of the method for constructing the fuzzy sphere.
For a long time, the enveloping algebras of Lie algebras have been studied 
as a certain quantization of Lie-Poisson algebras.
Giving a matrix representation of the enveloping algebra 
roughly corresponds to this matrix regularization.
In this paper, we constructed a quantization by relaxing the standard conditions of matrix regularization.
So we called it ``weak matrix regularization'' for the sake of distinction.
\\

The process of constructing the weak matrix regularization for 
$A_\mathfrak{g} / I(C)$
is a little complicated, so it would be better to review the procedure here,
where $\mathfrak{g}$ is a $d$-dimensional Lie algebra,  and $A_\mathfrak{g} / I(C)$ is a
Lie-Poisson algebra.
We assume that the Lie algebra $\mathfrak{g}$ together with a sequence of its representations satisfies the following conditions: 
its Casimir operators are proportional to the identity operator, and their eigenvalues diverge in the limit 
where the dimension of the representation space tends to infinity.
For example, semisimple Lie algebras are Lie algebras that satisfy this condition.
The ideal $I(C)$ is not arbitrary, but is assumed to be made from $k$th-degree 
Casimir polynomials.
These polynomials determine the geometry of $A_\mathfrak{g} / I(C)$.
The following is a summary of the procedure for matrix regularization of $A_\mathfrak{g} / I(C)$.
 \\
1). At first we construct a matrix regularization with the trivial ideal $I(C) = \{ 0 \}$, i.e.,
 $A_\mathfrak{g} / I(C) = A_\mathfrak{g} $.
 This is given in Subsection \ref{3_2}. \\
We consider a representation of $\mathfrak{g}$ to $T_\mu \subset gl (V^\mu )$
 which is the algebra generated by the image of the representation.
We set $n_\mu$ as a certain degree that determines the kernel of the quantization.
The linear function
$q_\mu : A_\mathfrak{g} \rightarrow T_\mu $
defined by
$
q_\mu ( \sum_k f_{i_1, \cdots , i_k} x^{i_1} \cdots x^{ i_k} ) 
=
\sum_k^{n_\mu} f_{i_1, \cdots , i_k} e^{(\mu)}_{(i_1, \cdots , i_k)} 
$
is the matrix regularization for $A_\mathfrak{g} $.
\begin{align*}
\vcenter{
\xymatrix@C=16pt@R=4pt{
A_\mathfrak{g}  \ar[r]^{q_{\mu} \in Q}  
& T_\mu \subset gl (V^\mu )
\\
{}&{}\\
{}\ar@{(-}[u] &{}\ar@{(-}[u] \\
f(x)=  \sum_k f_{i_1, \cdots , i_k} x^{i_1} \cdots x^{ i_k} \ar@{|->}[r]& 
\sum_k^{n_\mu} f_{i_1, \cdots , i_k} e^{(\mu)}_{(i_1, \cdots , i_k)} 
}}
\end{align*}
2). We introduce a quantization map $q_U$ from $A_\mathfrak{g} $ 
 to  enveloping algebra $ \mathcal{U}_\mathfrak{g} $
 by $q_U (x_{\alpha_1} \cdots x_{\alpha_m}) = X_{(\alpha_1, \cdots , \alpha_m )}$.
This quantization is basically well known from long ago.
\\
3). Next, we construct a quantization map 
$q_{U/I}  : A_\mathfrak{g}  /I(C) \to  \mathcal{U}_\mathfrak{g} [\hbar]/I(C(X))$
for nontrivial $I(C)$, which is described in Section \ref{4_1}. 
$I(C)$ is not arbitrary, but is assumed to be made from $k$th-degree 
Casimir polynomials satisfying (\ref{fixed_casimir_relation}).
We use $q_\mu$ in 1) to obtain the relation (\ref{fixed_casimir_relation}).
Let $G$ be the reduced Gr\"obner basis of $I(C)$.
For any $f(x) \in  {\mathbb C}[x]$ $f(x)= r_f (x) + h_f(x)$ is uniquely determined by $G$, 
where $h_f(x) \in I(C)$ and $r_f(x) \notin I(C)$.
Then we can define $q_{U/I}$ by $q_{U/I}  ([f(x)]) :=  [ q_{U}  ( r_{f, G}(x) ) ]$.\\
4). There exists an algebra homomorphism 
$\rho_{U/I , \mu} :  \mathcal{U}_\mathfrak{g}[\hbar] / I(C(X))  \to gl (V^\mu )$.
Using this $\rho_{U/I , \mu}$ and a projection operator
$R_\mu : gl (V^\mu )[\hbar(\mu)]  \to gl (V^\mu ) [\hbar(\mu)] $ 
that restricts the degree of $\hbar$ to $n_\mu$ or less. 
Finally, we get the weak matrix regularization $A_\mathfrak{g}  /I(C) \to  gl (V^\mu )$ by
$q_{A/I ,\mu} : = R_\mu \circ  \rho_{U/I , \mu} \circ q_{U/I} $.
\begin{align*}
\vcenter{
\xymatrix@C=10pt@R=4pt{
 A_\mathfrak{g}  / I(C)  \ar[r]^{ q_{U/I} \in Q}  \ar@/^20pt/[rr]^{ q_{\mu}^{pre} \in Q}
 \ar@/^40pt/[rrr]^{ q_{A/I \mu} \in Q}& 
 \mathcal{U}_\mathfrak{g}[\hbar ] / I(C(X)) \ar[r]^{\rho_{U/I, \mu}} &
  gl(V^\mu) \ar[r]^{R_\mu} &  gl(V^\mu)\\
   {}&{}&{}&{}\\
   {}\ar@{(-}[u] &{}\ar@{(-}[u] &{}\ar@{(-}[u] &{}\ar@{(-}[u] \\
\big[f(x)\big]=  \big[r_{f} + h_f \big]   \ar@{|->}[r]& 
[ q_U(r_f ) ]= \big[\sum_I a_I X_{(i_1, \cdots , i_m )} \big]  \ar@{|->}[r] &  
\sum_I a_I e^{(\mu)}_{(i_1 , \cdots , i_m)}
 \ar@{|->}[r] & \!\!\!\!\!
\sum_{|I| = m< n_\mu} \!\!\! a_I e^{(\mu)}_{(i_1 , \cdots , i_m)} 
}}
\end{align*}

It is not only that
the target space of the weak matrix regularization, $T_\mu$,
and this Lie-Poisson algebra $A_\mathfrak{g}  /I(C) $ are 
same structure as a Lie algebra.
In the sense of Proposition \ref{prop4_12},
Proposition \ref{prop_Kinji_2} or Proposition \ref{prop_Kinji_general_2}, 
$T_\mu$ is ``similar'' to  $A_\mathfrak{g}  /I(C) $ as an algebra 
in the limit as $\hbar(\mu)$ approaches zero.
The fact that the constructed weak matrix regularization possesses the approximate homomorphism property 
corresponds to the second condition in the general definition of matrix regularization (Definition \ref{matrixreg} in Appendix \ref{AppenB}).
In this sense, it possesses properties beyond the definition of a weak matrix regularization, 
exhibiting characteristics of a standard matrix regularization.\\

In Sections \ref{sect3} and \ref{sect4}, the discussions were restricted to irreducible representations, whereas in Section \ref{rev_sect6} 
it was shown that the construction can be fully generalized to reducible representations.
When confined to irreducible representations, the domain of each quantization admits kernels arising from geometry, so that the underlying space (the classical space) is effectively restricted to the corresponding coadjoint orbit.
However, by extending to reducible representations, it becomes possible to incorporate various orbits simultaneously.
In this case, the remaining issues are the problem of which sequence of representation spaces should be chosen and the problem of which region of the domain the weak matrix regularization fails to reflect.
These problems depend on the nature of the representations of each Lie algebra.
A detailed investigation of individual Lie algebras in this regard will be left for future work.\\

The eigenvalues of the Casimir operators are fixed as (\ref{fixed_casimir_relation})
in the limit where $\hbar(\mu)$ approaches $0$ and the dimension $\dim V^\mu$ 
approaches infinity.
Therefore, it would be natural to think of the variety determined 
by the Casimir polynomials as the classical space 
realized in the limit where $\hbar$ is zero.
However, the precision of this discussion of the classical(commutative) limit is a subject for future work.
\\

To know how different the weak matrix regularization constructed in this paper 
is from the matrix regularization written in Appendix \ref{AppenB} 
using the operator norm, we still need to introduce the operator norm 
and examine each of the conditions in Appendix \ref{AppenB}. 
This is another future work that has not yet been started.\\

In addition, there is no established method for relating 
``obtaining an effective theory on a classical manifold 
as the low-energy limit of the IKKT matrix model'' to ``the space of the corresponding commutative limit in matrix regularization''.
In fact, it was also seen in this paper that the manifold obtained 
in the commutative limit is not uniquely determined. 
Refining these discussions is also a future issue.


%
\section*{Acknowledgements}
\noindent 
A.S.\ was supported by JSPS KAKENHI Grant Number 21K03258.
The author also thanks the participants in the workshop
``Discrete Approaches to the Dynamics of Fields and Space-Time" 
for their useful comments.
We would like to thank A.~Tsuchiya and J.~Nishimura
for important information of the IKKT matrix model.
We are grateful to the anonymous reviewers of JMP for their constructive comments 
and insightful suggestions that helped improve the manuscript.
\\


\noindent
{\bf Data availability} \   Data sharing is not applicable to this article as no new data were 
created or analyzed in this study.

\section*{Declarations}

{\bf Conflicts of interest} \ On behalf of all authors, the corresponding author states that there is no conflict of
interest.

\appendix

\section{Definition of $\tilde{O}(z^{n} )$}\label{ap1}
Since we have not defined a norm on algebras
in this paper, the Landau symbol $O$ does not make sense. 
So, we define an order $\tilde{O}$ by $x\in \mathbb{R}$ 
using the absolute value of a complex number.

\begin{definition}
Let $\mathcal{V}$ be a vector space over $\mathbb{C}$. 
Let every $f_i$ be a complex valued continuous function such that
\begin{align*}
\lim_{x\to 0}  \left| \frac{f_i(xz)}{x^n} \right|  < \infty ,
\end{align*}
where $x\in \mathbb{R}$ and $z\in \mathbb{C}$. 
For $a_i\in \mathcal{V}$ which is independent of $z\in \mathbb{C}$, we denote the element described as $\sum_i f_i(z)a_i\in \mathcal{V}$ by 
$\tilde{O}(z^{n} )$.
If every  $f_i$ satisfies
\begin{align*}
\lim_{x\to 0}  \frac{f_i(xz)}{x^n}  =0 ,
\end{align*}
then we denote $\sum_i f_i(z)a_i\in \mathcal{V}$ by
$\tilde{O}(z^{n+\epsilon} )$.
\end{definition}
 Note that   $z$ itself is not necessarily continuous.
In this paper, the case where  $\mathcal{V}$ is an algebra often appears, but we are applying the above definition 
by considering it as a vector space.
For the purpose of this symbol, it is possible to replace 
$\tilde{O}(z^{n+1})$ with $\tilde{O}(z^n)$, 
but $\tilde{O}(z^{n})$ must not be replaced with $\tilde{O}(z^{n+1})$,
in the same way as for the usual ${O}(z)$.
\\

This definition can also be extended to the case of a quotient space as follows.  
\begin{definition}\label{defA2}
Let $\mathcal{V}$ be an algebra over a commutative ring $R$. 
For some $h \in \mathcal{V} , ~ [h] \in \mathcal{V} / \sim $,
if there exist $h' \in \mathcal{V} $ such that $[h] = [h']$ and 
$h' $ is $\tilde{O}(\hbar^{n})$, then we say
$[h] \in \mathcal{V} / \sim $ is  $\tilde{O}(\hbar^{n})$.
\end{definition}

\begin{example}\label{exA_3}
Let consider enveloping algebra $\mathcal{U}_{\mathfrak{g}} [\hbar ]$ with $X_i  X_j - X_j  X_i \sim \hbar f_{ij}^k X_k $ introduced in Subsection \ref{3_2}.
$[\hbar (X_i  X_j - X_j  X_i ) ] = \hbar^2 f_{ij}^k [X_k] $ is $\tilde{O}(\hbar^{2})$.
\end{example}

\begin{example}\label{exA_4}
Let us consider $\mathcal{U}_{\mathfrak{g}} (\hbar )/ I(C(X))$
in Subsection \ref{4_1}.
$[f(X)] \in \mathcal{U}_{\mathfrak{g}} (\hbar )/ I(C(X))$ is said to be $\tilde{O}(\hbar^{n})$
if there exists a $h(X)  \in \mathcal{U}_{\mathfrak{g}} (\hbar ) $ 
such that $[h(X)]=[f(X)]$ and $h(X)$ is $\tilde{O}(\hbar^{n})$ in the sense of Example \ref{exA_3}.
\end{example}

The following fact is proved in \cite{Sako:2022pid}.
\begin{proposition}\label{propA2}
Let $t_i : A \rightarrow M_i$ be a quantization defined by Definition \ref{Q}, and let $h_{ij}: M_i \rightarrow M_j$ be an $R$-algebra homomorphism.
Then 
$$
h_{ij} (\tilde{O}(\hbar^{1+\epsilon} (t_i )))= \tilde{O}(\hbar^{1+\epsilon} (t_i)) \in M_j.
$$
\end{proposition}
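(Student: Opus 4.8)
The plan is to unfold the definition of $\tilde{O}(\hbar^{1+\epsilon}(t_i))$ given in Appendix~\ref{ap1} and transport it through $h_{ij}$ term by term. Abbreviate $\hbar := \hbar(t_i)\in{\mathbb C}\setminus\{0\}$. By the definition in Appendix~\ref{ap1}, an element $m\in M_i$ that is $\tilde{O}(\hbar^{1+\epsilon})$ admits a (finite) representation
\[
m = \sum_{k} f_k(\hbar)\, a_k ,
\]
where each $a_k\in M_i$ is independent of $\hbar$ and each $f_k$ is a complex-valued continuous function with $\lim_{x\to 0} f_k(x\hbar)/x = 0$. The aim is to produce exactly such a representation for $h_{ij}(m)$ inside $M_j$.

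First I would record that an $R$-algebra homomorphism $h_{ij}\colon M_i\to M_j$, with $R$ a commutative ring over ${\mathbb C}$, is in particular ${\mathbb C}$-linear, hence commutes with scalar multiplication by the complex numbers $f_k(\hbar)$. Therefore
\[
h_{ij}(m) = h_{ij}\!\left(\sum_k f_k(\hbar)\,a_k\right) = \sum_k f_k(\hbar)\, h_{ij}(a_k) .
\]
Set $b_k := h_{ij}(a_k)\in M_j$. Since each $a_k$ carries no $\hbar$-dependence and $h_{ij}$ is one fixed homomorphism that does not involve $\hbar$, each $b_k$ is again independent of $\hbar$, and the sum is still finite. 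The functions $f_k$ are untouched and thus still satisfy $\lim_{x\to 0} f_k(x\hbar)/x = 0$. By the definition in Appendix~\ref{ap1}, this exhibits $h_{ij}(m) = \sum_k f_k(\hbar)\,b_k$ as an element of $M_j$ of order $\tilde{O}(\hbar^{1+\epsilon}(t_i))$, which is the claim since $m$ was arbitrary.

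The only delicate point — and essentially the whole content of the argument — is justifying that $h_{ij}(a_k)$ inherits the property ``independent of $\hbar$''. This is not a statement about $M_j$ being $\hbar$-free (it need not be), but the bookkeeping fact that a finite expansion of $m$ with $\hbar$-free coefficients is carried by the $\hbar$-independent map $h_{ij}$ to a finite expansion of $h_{ij}(m)$ with $\hbar$-free coefficients. Once this is made precise, the identical reasoning gives the non-strict statement $h_{ij}(\tilde{O}(\hbar^{n})) = \tilde{O}(\hbar^{n})$ for every $n$, and also the quotient-space version of Definition~\ref{defA2}; the proposition is just the strict $n=1$ case, which is what is needed to carry the remainder term in the quantization condition \eqref{lie} of Definition~\ref{Q} along algebra homomorphisms.
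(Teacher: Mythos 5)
Your argument is correct: unpacking Definition of $\tilde{O}$ as a finite sum $\sum_k f_k(\hbar)a_k$ with $\hbar$-independent $a_k$, using the $\mathbb{C}$-linearity of the $R$-algebra homomorphism $h_{ij}$ to pull the scalars $f_k(\hbar)$ out, and observing that $b_k=h_{ij}(a_k)$ is again a fixed, $\hbar$-independent element is exactly what the statement requires, and your explicit attention to the ``independent of $\hbar$'' bookkeeping is the right delicate point. Note that this paper does not reprove the proposition but cites \cite{Sako:2022pid} for it; your linearity argument is the natural proof and matches the way the same mechanism is used elsewhere in the paper (e.g.\ in Lemma \ref{lem4_7}), so there is nothing to flag.
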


\section{Matrix regularization} \label{AppenB}
In this section, let us review the definition of standard matrix regularization for a symplectic manifold $(M,~\omega)$ in order to compare it with the definition given in this paper.
Matrix regularization \cite{matrix1} has evolved from the ideas of Berezin-Toeplitz quantization \cite{berezin1,berezin2}, Fuzzy space \cite{fuzzy1}, and so on. 
One mathematically sophisticated formulation is given in \cite{Rieffel:2021ykh}.
However, there is no unified common formulation.
Here, we introduce one of the definitions of matrix regularization as described in \cite{arnlind}, which is a widely known definition.
\begin{definition}\label{matrixreg}
Let $N_1,N_2,\ldots $ be a strictly increasing sequence of positive integers and $\hbar$ be a real-valued strictly positive decreasing function such that $\lim_{N\to \infty}N\hbar(N)$ converges. Let $T_k$ be a linear map from $C^\infty(M)$ to $N_k\times N_k$ Hermitian matrices for $k=1,2,\ldots$. If the following conditions are satisfied, then we call the pair $(T_k,~\hbar)$ a $C^1$-convergent matrix regularization of $(M,~\omega)$.
\begin{enumerate}
  \item $\displaystyle \lim_{k\to \infty}\|T_k(f)\|<\infty$,
  \item $\displaystyle \lim_{k\to \infty}\|T_k(fg)-T_{k}(f)T_{k}(g)\|=0$,
  \item $\displaystyle \lim_{k\to \infty}\|\frac{1}{i\hbar(N_k)}[T_k(f),T_k(g)]-T_k(\{f,g\})\|=0$,
  \item $\displaystyle \lim_{k\to \infty}2\pi\hbar(N_k){\rm Tr}T_k(f)=\int_M f\omega$,
\end{enumerate}
where $\|~\|$ is an operator norm, $\omega $ is a symplectic form on $M$ and $\{~,~\}$ is the Poisson bracket induced by $\omega$.
\end{definition}\par
As in this definition, one of the main differences between the many definitions of matrix regularization and the definition used 
in this paper is the introduction of the operator norm.
Definition \ref{matrixreg} also requires the recovery of homomorphism in the limit using 
the operator norm, whereas no such requirement is made in this paper.
The paper also imposes no restrictions on integrals or traces.
Overall, the definition of quantization in this paper is a less restrictive formulation.
Therefore, the term ``weak matrix regularization'' is used to distinguish it.
Originally, in the category-theoretic definition of the classical limit in \cite{Sako:2022pid}, 
we developed a formulation capable of describing various quantization procedures in a unified framework. 
For this reason, the introduction of a norm is deliberately avoided, and the space is kept as structure-free as possible.
This explains why fewer conditions are imposed compared to many definitions of matrix regularization.


\section{A brief summary of Gr\"obner basis} \label{appendix_grobner}

The definitions of words and phrases related to the Gr\"obner basis and some of its properties are summarized in this appendix.
(See for example \cite{Cox_Little_Oshea}. )

Let $K$ be a field. In this paper we consider $K = {\mathbb C}$.
Let $K[x_1, \dots, x_n]$ be a polynomial ring with some fixed order.
For example, the graded lexicographic ordering for monomials 
$X^{\alpha} = x_1^{\alpha_1} x_2^{\alpha_2} \cdots x_n^{\alpha_n} $
and 
$X^{\beta} = x_1^{\beta_1} x_2^{\beta_2} \cdots x_n^{\beta_n} $
($\alpha = (\alpha_1, \dots, \alpha_n),  \beta = (\beta_1, \dots, \beta_n)$)
 is given by
\[
\alpha < \beta \xLeftrightarrow[]{def} 
\begin{cases}
deg( X^{\alpha} ):= \alpha_1 +\cdots + \alpha_n < deg ( X^{\beta} ):=  \beta_1 + \cdots + \beta_n  \\
\alpha_1 = \beta_1,
\dots ,
\alpha_{i-1} = \beta_{i-1} ,  \alpha_i < \beta_i ~
 \mbox{ when}~
\alpha_1 +\cdots + \alpha_n = \beta_1 + \cdots + \beta_n   
\end{cases} .
\]
We employ the graded lexicographic ordering as the fixed order in this paper.\\

Next, we introduce some terms to define the Gr\"obner basis.
\begin{definition}
Let $
f = \sum_{\alpha} a_{\alpha} x^{\alpha} \quad \left( \alpha = (\alpha_1, \alpha_2, \dots, \alpha_n) , ~ x^{\alpha} = x_1^{\alpha_1} x_2^{\alpha_2} \cdots x_n^{\alpha_n}\right)
$ is an element of  $K[x_1, \dots, x_n]$ with some fixed order.
\begin{enumerate}
\item We say that $ \max \{\alpha \mid a_{\alpha} \neq 0 \}$  is multi-degree of $f$, 
and we denote it by $ \text{multdeg}(f) $.

\item Leading Monomial :
${LM}(f) := x^{\text{multdeg}(f)}$ is called the leading monomial
of $ f $ .

\item Leading Coefficient :
${LC}(f) := a_{\text{multdeg}(f)} $ is called the leading coefficient of $ f $.

\item  Leading Term :
${LT}(f) := \text{LC}(f) \cdot {LM}(f) $ is called leading term  of $ f $.

\end{enumerate}

\end{definition}

In addition, we introduce following symbols
for some subset $S \subset K[x_1, \dots, x_n]$.
$
{LM}(S) := \{ {LM}(f) \mid f \in S \}
$.
We denote the monomial ideal generated by $LM(S)$ by
$
\langle \text{LM}(S) \rangle 
$.
We also use
$
{LT}(S) := \{ {LT}(f) \mid f \in S \},
$
and the ideal generated by $LT(S)$ is denoted by
$\langle \text{LT}(S) \rangle $.
Therefore, we find
\[
\langle \text{LM}(S) \rangle = \langle \text{LT}(S) \rangle .
\]

\begin{definition}
Let $I$ be an ideal of $K[x_1, \dots, x_n]$.
We say that $G = \{f_1, \dots, f_s\} \subset I$ 
is a Gr\"obner basis if
\[
\langle \text{LM}(I) \rangle = \langle \text{LM}(G) \rangle = \langle \text{LM}(f_1), \dots, \text{LM}(f_s) \rangle .
\]
\end{definition}

In the following, we list some important properties about the Gr\"obner basis
\cite{Dumniit_Foote_Abstract Algebra,Cox_Little_Oshea}.

\begin{proposition}
For any monomial ordering and any ideal $I$ that is not $\{ 0 \}$, there exists a Gr\"obner basis of $K[x_1, \ldots, x_n]$ that generates $I$.
\end{proposition}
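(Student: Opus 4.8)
The plan is to derive this from the Noetherian property of $K[x_1,\ldots,x_n]$ (equivalently, Dickson's Lemma on monomial ideals) together with the division algorithm relative to the fixed monomial ordering introduced earlier in this appendix. Since the proposition has two parts — the existence of a set $G\subset I$ with $\langle\mathrm{LM}(I)\rangle=\langle\mathrm{LM}(G)\rangle$, and the fact that such a $G$ also generates $I$ as an ideal — I would handle them in that order.

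First I would look at the monomial ideal $\langle\mathrm{LM}(I)\rangle$ generated by the leading monomials of all elements of $I$. Because $I\neq\{0\}$ this is a nonzero monomial ideal, and by Dickson's Lemma (or the Hilbert basis theorem) it is generated by finitely many monomials; each such generator can be taken of the form $\mathrm{LM}(g_i)$ for a suitable $g_i\in I$. Putting $G=\{g_1,\dots,g_s\}$, we then have $\langle\mathrm{LM}(I)\rangle=\langle\mathrm{LM}(g_1),\dots,\mathrm{LM}(g_s)\rangle=\langle\mathrm{LM}(G)\rangle$, so $G$ is by definition a Gr\"obner basis.

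Next I would check that this $G$ actually generates $I$. The inclusion $\langle g_1,\dots,g_s\rangle\subseteq I$ is immediate since each $g_i\in I$. For the reverse inclusion, take any $f\in I$ and apply the division algorithm to divide $f$ by the ordered tuple $(g_1,\dots,g_s)$, obtaining $f=\sum_i q_i g_i + r$ where no monomial of $r$ is divisible by any $\mathrm{LM}(g_i)$. Then $r=f-\sum_i q_i g_i\in I$. If $r\neq 0$, its leading monomial $\mathrm{LM}(r)$ lies in $\langle\mathrm{LM}(I)\rangle=\langle\mathrm{LM}(G)\rangle$ and is therefore divisible by some $\mathrm{LM}(g_i)$, contradicting the defining property of the remainder. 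Hence $r=0$ and $f\in\langle G\rangle$, which finishes the argument.

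The only real obstacle is the finite-generation input: one must invoke Dickson's Lemma (or Hilbert's basis theorem) to know $\langle\mathrm{LM}(I)\rangle$ is finitely generated, and then be careful that the division algorithm with respect to the chosen (graded lexicographic) monomial ordering genuinely produces a remainder none of whose monomial terms is divisible by a leading monomial of $G$. Both are standard facts for any fixed monomial ordering, so once they are in hand the proof is routine.
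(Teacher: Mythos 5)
Your argument is correct: finite generation of $\langle \mathrm{LM}(I)\rangle$ via Dickson's Lemma (or the Hilbert basis theorem) gives the Gr\"obner basis, and the division algorithm plus the leading-monomial condition forces the remainder of any $f\in I$ to vanish, so $G$ generates $I$. The paper itself offers no proof of this proposition, merely citing \cite{Dumniit_Foote_Abstract Algebra,Cox_Little_Oshea}, and your proof is exactly the standard argument found in those references.
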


\begin{proposition}
Let $G$ be a Gr\"obner basis of $K[x_1, \ldots, x_n]$ that generates $I$.
For any $ f \in K[x_1, \ldots, x_n] $, 
\begin{enumerate}
\item There exists a polynomial  $g \in I$, and a polynomial $r \notin I$ 
satisfying $f = g + r$,
such that any monomial in $r$ is not divisible by any element of $LT (G)$.
We call $r$ the remainder.
\item The above $g$ and $r$  are uniquely determined by $f$ and $G$.
\end{enumerate}
\end{proposition}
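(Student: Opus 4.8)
The plan is to establish part (1) by running the multivariate division algorithm, and part (2) by invoking the defining property of a Gr\"obner basis. This is the classical argument, essentially part 2 of Theorem \ref{reducedGrobner} specialised to an arbitrary (not necessarily reduced) Gr\"obner basis, so I would keep it terse and cite \cite{Cox_Little_Oshea,Dumniit_Foote_Abstract Algebra} for the monomial-ideal facts used along the way.

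\textbf{Existence.} Write $G = \{f_1, \dots, f_s\}$ and divide $f$ by $f_1, \dots, f_s$ with respect to the fixed (graded lexicographic) monomial ordering: at each stage, if the current leading term is divisible by some $LT(f_i)$ subtract the appropriate monomial multiple of $f_i$, otherwise move that leading term to the remainder and continue. Since the monomial order is a well-ordering, the process terminates and produces $f = \sum_i a_i f_i + r$ with $a_i \in K[x_1, \dots, x_n]$ and no monomial of $r$ divisible by any $LT(f_i)$. Putting $g := \sum_i a_i f_i \in I$ gives the asserted decomposition; moreover, if $r \ne 0$ then $r \notin I$, because $r \in I$ would force $LT(r) \in \langle LT(I) \rangle = \langle LT(G) \rangle$, hence $LM(r)$ divisible by some $LM(f_i)$, contradicting the construction of $r$.

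\textbf{Uniqueness.} Suppose $f = g + r = g' + r'$ are two decompositions of the required type. Then $r - r' = g' - g \in I$. If $r - r' \ne 0$, then $LT(r - r') \in \langle LT(I) \rangle$, and since $G$ is a Gr\"obner basis, $\langle LM(I) \rangle = \langle LM(G) \rangle$; therefore $LM(r - r')$ lies in the monomial ideal $\langle LM(f_1), \dots, LM(f_s) \rangle$, so it is divisible by some $LM(f_i)$. But $LM(r - r')$ is a monomial occurring in $r$ or in $r'$, and by hypothesis neither $r$ nor $r'$ has any monomial divisible by an element of $LT(G)$ --- a contradiction. Hence $r = r'$, and consequently $g = f - r = f - r' = g'$.

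\textbf{Main obstacle.} The one genuinely substantive ingredient is the characterisation of membership in a monomial ideal --- that a polynomial belongs to $\langle LM(f_1), \dots, LM(f_s) \rangle$ only if each of its monomials is divisible by some $LM(f_i)$ --- combined with the defining equality $\langle LM(I) \rangle = \langle LM(G) \rangle$; the rest is bookkeeping around termination of the division algorithm. I would quote the monomial-ideal lemma from the cited references rather than reprove it, so in the write-up the delicate point is simply making sure the notion of remainder is applied consistently (the remainder may legitimately be $0$, in which case $f \in I$ and the statement ``$r \notin I$'' is read as ``$r = 0$ or $r \notin I$'').
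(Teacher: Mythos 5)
Your proof is correct and is precisely the standard division-algorithm argument that the paper itself relies on: the paper states this proposition without proof, deferring to \cite{Cox_Little_Oshea} and the Dummit--Foote text, where existence is obtained by multivariate division and uniqueness from $\langle LT(I)\rangle=\langle LT(G)\rangle$ together with the monomial-ideal divisibility criterion, exactly as you argue. Your remark on the $r=0$ edge case (when $f\in I$) is a reasonable reading of the paper's slightly loose phrasing ``$r\notin I$'' and needs no further justification.
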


It is necessary to be careful, as the result may change depending on the choice of the Gr\"obner basis $G$.
However,
 it is possible to introduce a good Gr\"obner basis that fixes the arbitrariness of 
the choice of $G$.
\begin{definition}
If a Gr\"obner basis $G = \{ f_1, \dots, f_s \} \subset I$
satisfies the following two conditions 
i) ${LC}(f_i) = 1$ for all $i$ , ~ ii)  no term in
$f_i \in  G$ is divisible by $LM(f_j ) (i\neq j)$,
then we say that $G$ is a reduced  Gr\"obner basis.
\end{definition}

\begin{theorem}
Let $I \neq \{0\}$ be a polynomial ideal. 
Then, for a given monomial ordering,
there is a reduced Gr\"obner basis for $I$, and it is unique.
\end{theorem}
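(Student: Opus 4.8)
The plan is to give the standard proof, which has two independent halves: an existence argument via an explicit reduction procedure, and a uniqueness argument via cancellation and the division algorithm. (This theorem restates the assertion already quoted as Theorem~\ref{reducedGrobner}; here I would supply the customary proof.)

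\textbf{Existence.} By the proposition quoted just above, there exists some Gr\"obner basis $G = \{g_1, \dots, g_s\}$ generating $I$. First I would rescale each $g_i$ by $LC(g_i)^{-1}$, so that we may assume $LC(g_i)=1$ for all $i$; since $\langle LM(G)\rangle$ is unchanged, $G$ is still a Gr\"obner basis. Next I would pass to a \emph{minimal} Gr\"obner basis: whenever there are indices $i\neq j$ with $LM(g_j)\mid LM(g_i)$, discard $g_i$. Because $\langle LM(G)\rangle$ is the monomial ideal generated by the $LM(g_k)$, deleting a generator divisible by another generator leaves the same monomial ideal, so the pruned set still generates $\langle LM(I)\rangle$ and remains a Gr\"obner basis; iterating, we reach a Gr\"obner basis in which no $LM(g_j)$ divides $LM(g_i)$ for $i\neq j$. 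Finally I would \emph{autoreduce}: replace each $g_i$ by its remainder $r_i$ upon division by the other current basis elements $\{g_j : j\neq i\}$. Since no $LM(g_j)$ with $j\neq i$ divides $LM(g_i)$, the division never touches the leading term of $g_i$, so $LM(r_i)=LM(g_i)$ and the leading-monomial ideal is preserved; thus the new set is again a Gr\"obner basis, and now no monomial of $r_i$ is divisible by any $LM(g_j)$, $j\neq i$. One pass over all $i$ produces a reduced Gr\"obner basis.

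\textbf{Uniqueness.} Suppose $G=\{g_1,\dots,g_s\}$ and $G'=\{g'_1,\dots,g'_t\}$ are both reduced Gr\"obner bases of $I$ for the fixed ordering. The first step is to show $LM(G)=LM(G')$ as sets. Both $\langle LM(G)\rangle$ and $\langle LM(G')\rangle$ equal $\langle LM(I)\rangle$, and a monomial ideal has a unique minimal set of monomial generators; the reducedness hypothesis (no $LM(g_j)$ divides $LM(g_i)$ for $i\neq j$) says exactly that $LM(G)$ is such a minimal generating set, and likewise for $LM(G')$, so $LM(G)=LM(G')$. Hence $s=t$ and, after reindexing, $LM(g_i)=LM(g'_i)$ for each $i$. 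Fix $i$ and set $h := g_i - g'_i \in I$. Both $g_i$ and $g'_i$ have leading coefficient $1$ and the same leading monomial, so the leading terms cancel and either $h=0$ or $LM(h) \prec LM(g_i)$. Each monomial occurring in $h$ is a monomial of $g_i$ or of $g'_i$ distinct from their common leading monomial, and by reducedness of $G$ and of $G'$ none of these monomials is divisible by any element of $LM(G)=LM(G')$. Therefore the division algorithm applied to $h$ relative to $G$ returns $h$ unchanged as the remainder; but $h\in I$ and $G$ is a Gr\"obner basis, so the remainder must be $0$. Hence $h=0$, i.e.\ $g_i=g'_i$, and $G=G'$.

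\textbf{Main obstacle.} The delicate point is entirely in the uniqueness half: establishing that a monomial ideal has a unique minimal monomial generating set and that the reducedness condition forces $LM(G)$ to be precisely that set. Once this is available, the cancellation-plus-division argument for $g_i-g'_i$ is routine, and the existence half is just bookkeeping — the only care needed there is to check, at each of the three steps (normalizing, pruning, autoreducing), that the Gr\"obner property is preserved.
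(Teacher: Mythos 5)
The paper itself offers no proof of this statement: in Appendix C it is quoted as a known fact, with the argument deferred to the standard references (e.g.\ \cite{Cox_Little_Oshea}), so there is nothing to compare against except those sources. Your proposal is exactly the textbook proof found there (normalize leading coefficients, prune to a minimal Gr\"obner basis, autoreduce; then uniqueness via $LM(G)=LM(G')$ followed by the cancellation-and-remainder argument for $g_i-g'_i$), and it is correct as it stands. Two points you should spell out if you write it in full. First, in the uniqueness step you assert that no monomial of $h=g_i-g'_i$ is divisible by \emph{any} element of $LM(G)$; the reducedness condition only forbids divisibility by $LM(g_j)$ with $j\neq i$, so you also need the one-line observation that a non-leading monomial $m$ of $g_i$ (or of $g'_i$) cannot be divisible by $LM(g_i)$ itself, since $LM(g_i)\mid m$ with $m\neq LM(g_i)$ would force $m\succ LM(g_i)$ for any monomial order, contradicting maximality of the leading monomial; without this the claim that the division algorithm returns $h$ unchanged is not fully justified. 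Second, the fact you flag as the main obstacle --- that a monomial ideal has a unique minimal monomial generating set, namely its divisibility-minimal monomials (a consequence of Dickson's lemma) --- is indeed what the identification $LM(G)=LM(G')$ rests on in your route; alternatively it can be bypassed, as in \cite{Cox_Little_Oshea}, by the direct two-step divisibility argument: for $g\in G$ choose $g'\in G'$ with $LM(g')\mid LM(g)$ and then $g''\in G$ with $LM(g'')\mid LM(g')$, whereupon minimality of $G$ forces $g''=g$ and hence $LM(g)=LM(g')$. With those two clarifications your existence and uniqueness halves are complete and agree with the cited standard proof.
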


The fact that Theorem \ref{reducedGrobner} holds for a reduced Gr\"obner basis 
defined above is used in this paper.



\end{document}